\theoremstyle{plain}\newtheorem{theorem}{Theorem}[section]
\theoremstyle{plain}\newtheorem{lemma}[theorem]{Lemma}
\theoremstyle{plain}\newtheorem{corollary}[theorem]{Corollary}
\theoremstyle{plain}
\theoremstyle{plain}\newtheorem{proposition}[theorem]{Proposition}
\theoremstyle{definition}
\theoremstyle{remark}\newtheorem{remark}{Remark}
\theoremstyle{definition}\newtheorem{def:and:lemma}[theorem]{Definition and Lemma}
\renewcommand{\Re}{\textnormal{Re}}
\newcommand{\ve}{\varepsilon}
\newcommand{\R}{\mathbb R }
\newcommand{\F}{\mathcal F}
\renewcommand{\t}[1]{\textnormal{#1}}
\newcommand{\vp}{\varphi}
\newcommand{\A}{\mathcal A}
\newcommand{\1}{\mathds 1}			
\newcommand{\id}{\textbf{1}}		
\newcommand{\h}{ h^{\rm eff}  }
\newcommand{\<}{\langle} 
\renewcommand{\>}{\rangle}
\renewcommand{\O}{\mathcal D}
\begin{document}

\title{\Large {\textsc{The renormalized Nelson model in the weak coupling limit}}}

	 
\author{Esteban C\'ardenas \thanks{Department of Mathematics,
		University of Texas at Austin,
		2515 Speedway,
		Austin TX, 78712, USA. E-mail: \texttt{eacardenas@utexas.edu}} \quad and\quad David Mitrouskas \thanks{Institute of Science and Technology Austria (ISTA), Am Campus 1, 3400 Klosterneuburg, Austria.\\ E-mail: \texttt{mitrouskas@ist.ac.at}}
}


\maketitle
 
\frenchspacing

\begin{spacing}{1.15} 
 
\begin{abstract}
The Nelson model describes non-relativistic particles coupled to a relativistic Bose scalar field. 
In this article, we study the renormalized version of the Nelson model with massless bosons  in Davies' weak coupling limit. 
Our main result
states that  the two-body
Coulomb potential emerges as an effective pair interaction between the particles, which arises
 from the exchange of virtual excitations of the quantum field. 
\end{abstract}


\section{Introduction}

\allowdisplaybreaks

Non-relativistic quantum electrodynamics (QED) describes the interactions between charged particles that are mediated by photons--the excitations of the quantized electromagnetic field. When particle velocities are significantly smaller than the speed of photon propagation, these interactions can be effectively approximated by a direct Coulomb interaction between the particles. The goal of this article is to provide a precise formulation of this approximation. Instead of working within the full non-relativistic QED framework, we focus on the renormalized massless Nelson model, \textit{without} infrared or ultraviolet regularization.  The Nelson model has been extensively studied in mathematical physics and provides insights into the interplay between particles and quantum fields in a simplified yet meaningful way.   Since Nelson’s pioneering work  
   \cite{Nelson64} in 1964, it has been understood that this model requires an infinite energy renormalization in the ultraviolet regime. Formally, the Nelson model for two particles coupled to a massless scalar Bose field is described by the Hamiltonian
\begin{align}\label{eq:formal}
 \sum_{i=1,2}	(  - \Delta_{x_i})  \otimes \id 
  + \id \otimes  \mu \int dk \, 
	| k | \, a_k^* 
	a_k + \mu^{1/2}  \sum_{i=1,2}
	\int
	\frac{dk}{\sqrt{  | k | }}
	\Big( e^{-ikx_i}  \otimes a_k^* + e^{ikx_i} \otimes  a_k \Big)
\end{align}
where $a_k^*$ and $a_k$ are bosonic creation and annihilation operators and $\mu>0$ is a dimensionless scaling parameter. We are interested in studying the dynamics generated by the Hamiltonian \eqref{eq:formal}
in the regime $ \mu \rightarrow \infty$, known as the \textit{weak coupling limit}. 
This limit was first studied by Davies \cite{Davies79}
and is equivalent to a scaling of small coupling 
$\lambda  = \mu^{-\frac{1}{2}}    $ 
for  heavy particles of mass $M = \mu $,
on  a rescaled time-scale $\tau =  \mu^{-1 }t $. The work of Davies was  refined by Hiroshima \cite{Hiroshima98,Hiroshima99} and Gubinelli, Hiroshima and L\"orinczi \cite{GubinelliHJ2014}. A comparison of their results with ours is given below. For an alternative interpretation of the scaling $\mu \to \infty$, we refer to \cite[Section 1]{Stefan2}.

\subsection{The renormalized massless Nelson model}

Let us now turn to the definition
of the precise model that we study. 
We  consider here   the two-particle case with Hilbert space
\begin{align}
	\mathscr H = L^2(\mathbb R^6) \otimes \mathcal F, \qquad  \mathcal F = \mathbb C  \oplus \bigoplus_{n = 1}^\infty \bigotimes^{n}_{\rm sym} L^2(\mathbb R^3),
\end{align}
where $ \F$ is the bosonic Fock space  and the Fock vacuum is denoted by $\Omega = (1, \mathbf{0})$--our analysis extends naturally to $N\ge 2$ particles.
On $\mathscr  H$
we  first  introduce the Hamiltonian with
an ultraviolet (UV) cutoff $\Lambda >0$
\begin{align}\label{eq:HN}
	H_\Lambda^{\rm N} 
 \, 	\equiv  \, 
p_1^2 \otimes \id 
 \, + \, 
  p_2^2 \otimes \id 
 \, +
 \,  \id \otimes  
 \mu T 
  \, + \,  
 \mu^{ \frac{1 }{2}}  \sum_{i=1,2}
\int\limits_{|k| \leq \Lambda } 
\frac{dk}{ \sqrt{|k|}}  
\Big( e^{-ikx_i}  \otimes a_k^* + e^{ikx_i} \otimes  a_k \Big)
\end{align}
where $T \equiv  d\Gamma(| k |)$ is the field energy
and $p_i  \equiv  - i  \nabla_{x_i }$
are the momentum operators. By the Kato--Rellich Theorem, it is straightforward to verify that $H_\Lambda^{\rm N}$ is self-adjoint on the domain of the non-interacting operator
$ D( p_1^2 +p_2^2  + \mu T) $, ensuring that $\{ e^{-it H_\Lambda^{\rm N}}\}_{t\in \mathbb R}$ defines a strongly continuous unitary group. The next proposition provides the existence of the unitary group as $\Lambda \to \infty$, after appropriate energy renormalization. This was proved in \cite{Griesemer18}, which, among other contributions, extended Nelson's work \cite{Nelson64} to the case of  massless fields.
 For alternative approaches to renormalizing the Nelson model, see \cite{GubinelliHJ2014, LS2019}. 
 For the statement, let us introduce the self-energy of  the particles as
 \begin{align}
	\label{E}
	E_\Lambda 
	 \ \equiv  \ 
	\int_{| k | \leq \Lambda }  \frac{    2 	\mu  \, d k }{ |k| (k^2 + \mu |k|) } 
 \, 	+ \, 
	e_0  
	\ = \ 
	   8 \pi  \mu 
	\ln (1+\Lambda \mu^{-1} )  \, + \,  e_0    \  . 
\end{align}
Here, 
 $e_0>0$ is   a constant  energy shift defined in      \eqref{e0} --independently of $\mu$ and $\Lambda$-- and   is added  only for  convenience. 
Notably, $E_\Lambda$  diverges logarithmically as $\Lambda \to \infty$ for fixed $\mu$. 

\begin{proposition}\label{prop:ren:Nelson} For every $\mu >0$ there is a semi-bounded self-adjoint operator $H^{\rm N}$ so that
	\begin{align}
		\exp\big( -i t H^{\rm N}_\Lambda  \big)
		\exp\big( -  i t   E_\Lambda \big)   
		 \, \xrightarrow{\Lambda \to \infty}  \, 
		\exp\big(-i t H^{\rm N} \big)
	\end{align}
	for all  $t\in \mathbb R$ in the strong sense on $L^2(\mathbb R^6) \otimes \mathcal F$. 
\end{proposition}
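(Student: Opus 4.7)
The strategy is to construct $H^{\rm N}$ via a Gross-type unitary dressing transformation that absorbs the logarithmically divergent self-energy. Fix an auxiliary infrared cutoff $K>0$ and introduce the form factor
$$ \beta_{\Lambda,x}(k) \, \equiv \, -\, \frac{\mu^{1/2}\, e^{-ikx}}{\sqrt{|k|}\,(\mu |k|+k^2)}\, \1_{K \leq |k|\leq \Lambda} , $$
together with the skew-adjoint generator $B_\Lambda = \sum_{i=1,2}\bigl(a^*(\beta_{\Lambda,x_i}) - a(\beta_{\Lambda,x_i})\bigr)$ and the associated unitary $U_\Lambda = e^{B_\Lambda}$. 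Thanks to the decay $\beta_{\Lambda,x}(k)\sim|k|^{-5/2}$ at infinity, $\beta_{\Lambda,x}$ lies in $L^2(\mathbb R^3)$ uniformly in $(\Lambda,x)$, and converges pointwise and in $L^2$-norm to a limit $\beta_{\infty,x}$ as $\Lambda\to\infty$.

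The coefficient is chosen so that $[\mu T + p_1^2 + p_2^2, B_\Lambda]$ cancels, at leading order, the UV-singular part of the original interaction, modulo a ``recoil'' remainder proportional to $k\cdot p_i$ whose form factor still lies in $L^2$ uniformly. I would then expand $U_\Lambda^*H_\Lambda^{\rm N} U_\Lambda + E_\Lambda$ via a Baker--Campbell--Hausdorff series and extract the following cancellation pattern: the commutator $[\mu T, B_\Lambda]$ removes the bosonic interaction for $|k|\in [K,\Lambda]$; the double commutator $\tfrac{1}{2}[B_\Lambda,[B_\Lambda,\mu T]]$ produces a c-number equal to $-E_\Lambda + R_\Lambda$ with $R_\Lambda$ uniformly bounded in $\Lambda$, together with two-particle cross terms converging to a bounded function of $x_1-x_2$; the commutator $[B_\Lambda,p_i^2]$ generates new field operators with form factors decaying like $|k|^{-3/2}$, which are infinitesimally small relative to $p_1^2+p_2^2+\mu T$ uniformly in $\Lambda$; further iterated commutators decay even faster in $k$. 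Combining these estimates with Kato--Rellich yields a family of self-adjoint operators $\widetilde H_\Lambda \equiv U_\Lambda^* H_\Lambda^{\rm N} U_\Lambda + E_\Lambda$ on $D(p_1^2+p_2^2+\mu T)$ with a uniform lower bound, and strong resolvent convergence to a limit $\widetilde H_\infty$ follows from the pointwise $L^2$-convergence of each form factor.

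Since $\beta_{\Lambda,x}\to\beta_{\infty,x}$ in $L^2$ uniformly in $x$, the dressings $U_\Lambda$ converge strongly to a unitary $U_\infty$ on $\mathscr H$. Setting $H^{\rm N} \equiv U_\infty \widetilde H_\infty U_\infty^*$ produces the required semibounded self-adjoint operator, and the claimed strong convergence of the unitary groups follows from Trotter--Kato together with the intertwining identity $U_\Lambda^*\, e^{-it(H_\Lambda^{\rm N} + E_\Lambda)}U_\Lambda = e^{-it\widetilde H_\Lambda}$.

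The main obstacle is the \emph{massless} nature of the field: without the IR cutoff $K>0$ one would have $|\beta_\Lambda(k)|^2 \sim |k|^{-3}$ near the origin, which fails to be integrable in three dimensions, so the dressing $U_\Lambda$ cannot even be defined. The resolution is that the low-momentum part of the original interaction (with $|k|<K$) is a Kato perturbation of the free Hamiltonian and requires no renormalization. A secondary technical issue is to verify, by careful tracking of the neglected recoil terms $-2k\cdot p_i$ that were dropped from the denominator of $\beta_{\Lambda,x}$, that after extracting the divergent c-number $-E_\Lambda$ only remainders that stay relatively bounded by $p_1^2+p_2^2+\mu T$ uniformly in $\Lambda$ survive.
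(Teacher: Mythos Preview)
Your approach is essentially the one the paper follows: conjugate $H_\Lambda^{\rm N}+E_\Lambda$ by the Gross transformation $U_{K,\Lambda}$ with an auxiliary infrared parameter $K>0$, compute the resulting dressed Hamiltonian explicitly, establish uniform-in-$\Lambda$ relative bounds, pass to the limit, and undress. Two points deserve care. First, since the generator $B_\Lambda$ is linear in $a,a^*$, the conjugation closes exactly at second order---there are no ``further iterated commutators,'' and the paper simply writes out the three relations $U p_i^2 U^*$, $U\,d\Gamma(|k|)\,U^*$, $U\phi(G_\Lambda)U^*$ in closed form rather than appealing to a BCH series. Second, and more substantively, the paper does \emph{not} invoke Kato--Rellich: the quadratic terms $a^*(kB_K)a^*(kB_K)$ are controlled only through form bounds of the type $|\langle\Psi,a(kB_K)^2\Phi\rangle|\le C\,\|(1+T)^{1/2}\Psi\|\,\|T^{1/2}\Phi\|$ (see the estimate you would label \eqref{op4}), and the conclusion is that $\pm(H_{K,\Lambda}-P^2-\mu T)\le\epsilon(P^2+\mu T)+C_\epsilon$ as quadratic forms for $K$ large. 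This yields $Q(H_{K,\Lambda})=Q(P^2+\mu T)$ via KLMN, not self-adjointness on $D(P^2+\mu T)$; the strong resolvent (equivalently, unitary group) convergence then follows from convergence of the forms. Your claim that Kato--Rellich gives a common operator domain would require operator bounds for $a^*(kB_K)^2$ that are not provided and are more delicate in the massless case, so you should replace that step by the form argument.
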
 
We refer to $H^{\rm N} $ as the \textit{renormalized massless Nelson model}.
Let us note that \cite{Griesemer18} addresses the case of one particle with $\mu=1$. However, the proof can be straightforwardly extended to cover multiple particles and arbitrary values of $\mu$. We sketch the proof in the appendix.

\subsection{Main result and discussion}

Our main result is contained in
the following theorem.
We introduce the effective Schrödinger operator for two particles interacting through the   Coulomb potential
\begin{equation}
	h^{\rm eff}  
	 \, \equiv  
	  \, p_1^2  \, +  \, p_2^2   \, - \,  W(x_1-x_2) \qquad 
	\t{with}
	\qquad 
	W (x)  \, \equiv  \, 
4\pi^2 |x|^{ -1 }
\end{equation}
with domain $D(h^{\rm eff}) = H^2( \R^6)$.

\begin{theorem}\label{thm:main} 
	There exists  $C>0$ and $  \mu_0 \geq 1 $ such  that
	for all  $\varphi \in H^2(\mathbb R^6)$ with $\| \varphi \|_{L^2} = 1 $
	\begin{align}
		\Big\| \Big(e^{-i t H^{\rm N}} -   e^{-it h^{\rm eff}  } \otimes \textnormal{\textbf{1}} \Big) \varphi \otimes \Omega \Big\|^2_{\mathscr H }
		\le
		\frac{ C  \,  ( 1 + |t| \,  ) \ln \mu }{\mu^{1/6}}  	 \,
	\|	 \vp	\|_{H^2} 
	\end{align}
	for all $t\in \mathbb R$ and $\mu  \geq \mu_0  $.
\end{theorem}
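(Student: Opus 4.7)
The plan is to use a Bogoliubov-type dressing transformation that approximately decouples the bosonic field from the heavy particles to leading order in $\mu^{-1/2}$. Since $H^{\rm N}$ is defined only as a limit, I would work throughout with the UV-regularized operator $H^{\rm N}_\Lambda$ together with the counterterm $E_\Lambda$, prove the estimate with explicit $\Lambda$-dependence, and match the result with Proposition~\ref{prop:ren:Nelson} by choosing $\Lambda = \Lambda(\mu)$ optimally at the end.

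Introduce the anti-self-adjoint generator
\begin{align*}
B_\Lambda \, = \, -\sum_{i=1,2}\int_{|k|\le\Lambda}\frac{\mu^{1/2}\, dk}{|k|^{1/2}(\mu|k|+k^2)}\,\big(e^{-ikx_i}\otimes a_k^* - e^{ikx_i}\otimes a_k\big)
\end{align*}
and set $U_\Lambda = e^{B_\Lambda}$. With $H_0 = p_1^2 + p_2^2 + \mu T$ and $V_\Lambda$ the linear field coupling in $H^{\rm N}_\Lambda$, the coefficient is tailored so that $[H_0, B_\Lambda] = -V_\Lambda + R^{(1)}_\Lambda$, where $R^{(1)}_\Lambda$ collects the commutators $[p_i^2, B_\Lambda]$ and carries an additional suppressing factor $k\cdot p_i/(\mu|k|+k^2)$. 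A direct Baker--Campbell--Hausdorff expansion then yields
\begin{align*}
U_\Lambda^{-1}\big(H^{\rm N}_\Lambda + E_\Lambda\big) U_\Lambda \, = \, p_1^2 + p_2^2 + \mu T - W_\Lambda^{\rm eff}(x_1-x_2) + R_\Lambda,
\end{align*}
where $W_\Lambda^{\rm eff}(r) := 2\int_{|k|\le\Lambda}\frac{\mu}{|k|(\mu|k|+k^2)}\cos(k\cdot r)\, dk$. The key algebraic facts are (a) the diagonal ($i=j$) part of $\tfrac12[V_\Lambda, B_\Lambda]$ equals $-(E_\Lambda - e_0)$ and thus cancels the counterterm, with the residual finite constant absorbed into the choice of $e_0$ from~\eqref{E}, and (b) the cross ($i\neq j$) part produces $-W_\Lambda^{\rm eff}(x_1-x_2)$, which converges pointwise to $-4\pi^2/|x_1-x_2| = -W(x_1-x_2)$ in the joint limit $\Lambda,\mu\to\infty$.

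With this normal form I would write
\begin{align*}
e^{-it(H^{\rm N}_\Lambda + E_\Lambda)}(\varphi\otimes\Omega) \, = \, U_\Lambda \, e^{-it\, U_\Lambda^{-1}(H^{\rm N}_\Lambda + E_\Lambda) U_\Lambda}\, U_\Lambda^{-1}(\varphi\otimes\Omega)
\end{align*}
and compare to $(e^{-it\h}\varphi)\otimes\Omega$, using that $\mu T$ annihilates $\Omega$ so the decoupled field part contributes nothing. Three error types then appear and are controlled via Duhamel's formula: (i) the vacuum perturbation $\|(U_\Lambda - \id)(\varphi\otimes\Omega)\|_{\mathscr H}$, bounded via the explicit symbol of $B_\Lambda\Omega$, using the particle Sobolev regularity of $\varphi$ to tame the low-$|k|$ contributions; (ii) the potential error $W_\Lambda^{\rm eff}-W$, estimated as a form on $H^2(\R^6)$ via Hardy's inequality and explicit Fourier bounds; (iii) the operator norm of $R_\Lambda$ applied to states with bounded kinetic energy, via standard creation/annihilation bounds on Fock space combined with the decay of the coefficient of $B_\Lambda$. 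Each contribution is of size $\mu^{-1/2}$ times a power of $\ln\mu$ and a polynomial in $\Lambda$; optimizing $\Lambda = \Lambda(\mu)$ to balance these against the rate of $W_\Lambda^{\rm eff}\to W$ yields the final $\mu^{-1/6}\ln\mu$ rate.

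The main obstacle is controlling $R^{(1)}_\Lambda = [p_1^2+p_2^2, B_\Lambda]$: its coefficient carries a factor $k\cdot p_i$ that is not tamed by $\mu|k|+k^2$ at large $|k|$, so it can only be bounded on states with finite kinetic energy. The Duhamel argument must therefore be run in a graph norm stronger than $L^2$, which in turn requires a uniform-in-$\mu$ propagation estimate showing that the particle kinetic energy remains bounded under the dressed dynamics on $O(1)$ time scales. Making all such estimates uniform in $\Lambda$, so that the UV limit can finally be matched with Proposition~\ref{prop:ren:Nelson}, is the technical heart of the argument.
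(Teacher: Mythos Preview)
Your dressing transformation has an infrared divergence that breaks the argument at the outset. The generator $B_\Lambda$ has symbol $\mu^{1/2}|k|^{-1/2}(\mu|k|+k^2)^{-1}e^{-ikx_i}$ on $|k|\le\Lambda$, whose $L^2_k$-norm squared is $\mu\int_{|k|\le\Lambda}\frac{dk}{|k|(\mu|k|+k^2)^2}\sim \mu^{-1}\int_0 \frac{dr}{r}$, which diverges logarithmically at $k=0$. Thus $B_\Lambda(\varphi\otimes\Omega)\notin\mathscr H$ and $U_\Lambda$ is not a well-defined unitary on Fock space. The claim that the Sobolev regularity of $\varphi$ tames the low-$|k|$ part is incorrect: the $x$-dependence is the pure phase $e^{-ikx_i}$, so the $L^2_k$-norm is independent of $\varphi$. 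The paper handles this by dressing only the high-momentum modes $|k|\ge K$ (see the Gross transformation \eqref{eq:Gross:trafo}), which is infrared-finite and moreover yields the \emph{exact} operator identity $H^{\rm N}=U_K^*H_KU_K$ of Lemma~\ref{lemma:dressed:Nelson}. This bypasses your second difficulty entirely: you propose to prove bounds that grow polynomially in $\Lambda$ and then ``match with Proposition~\ref{prop:ren:Nelson}'' at some $\Lambda=\Lambda(\mu)$, but Proposition~\ref{prop:ren:Nelson} supplies only strong convergence with no rate, so at fixed $\mu$ you cannot quantitatively compare $e^{-it(H^{\rm N}_\Lambda+E_\Lambda)}$ with $e^{-itH^{\rm N}}$.

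There is also a structural mismatch with the constant $e_0$. In your scheme the scalar part of $\tfrac12[V_\Lambda,B_\Lambda]$ produces exactly $-E_\Lambda^{(0)}$, not $-E_\Lambda=-E_\Lambda^{(0)}-e_0$; the additional $e_0$ defined in \eqref{e0} arises in the paper from a genuinely higher-order contribution, namely the vacuum expectation of $\mu A_{K,i}R_{K,\infty}A_{K,i}^*$ coming from the quadratic dressing remainders (Lemma~\ref{prop3}). Your first-order BCH computation does not see this term. More broadly, in the paper the Coulomb potential does \emph{not} come directly out of the dressing (the dressing only produces the small remainder $V_K$ of \eqref{VK}); it emerges instead from the residual low-momentum coupling $\phi(G_{\varepsilon,K})$ via a stationary-phase/integration-by-parts argument against the fast oscillations $e^{-it\mu T}$ (Lemma~\ref{lemma:cancel}). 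Fixing your approach would essentially force you to introduce an IR cutoff in the dressing and then treat the remaining low-$|k|$ interaction perturbatively---at which point you recover the paper's strategy.
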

\begin{remark}
Giving up the convergence rate, a density argument shows that  as a corollary
	\begin{align}
\label{eq:cor}	
	e^{-i t H^{\rm N}}( \textnormal{\textbf{1}} \otimes P_\Omega ) \xrightarrow{\mu \to \infty} e^{-it h^{\rm eff}} \otimes P_\Omega
	\end{align}
	 in the strong sense in $\mathscr H$, for all $t \in \R .$
	 Here of course  
	 $P_\Omega  = \ket{\Omega}\bra{\Omega}$.
\end{remark}

The theorem demonstrates that in the limit $\mu \to \infty$, the particles effectively decouple from the quantum field, which remains in the initial Fock vacuum state. Instead of coupling to the field, the particles evolve under an instantaneous pair interaction described by the attractive Coulomb potential. Physically, the effective potential  arises from the exchange of virtual excitations,
while the emission or absorption of real bosons is energetically suppressed.   We remark that it is a particular feature of the scalar field model that two particles of same charge experience an attractive Coulomb interaction, rather than the repulsive one typically expected in QED.  

Let us now compare
our results with the literature.  
To this end, we denote by
$\omega (k) \equiv (k^2 + m^2)^{1/2}$
the dispersion relation of a field of mass $m \geq 0$. 
As mentioned earlier, the limit   $\mu \to \infty$ was first studied by Davies \cite{Davies79}, who proved   \eqref{eq:cor} for the UV-regularized Nelson model \eqref{eq:HN}  with
\textit{fixed} cutoff $\Lambda<\infty$.
In particular, since  energy renormalization was not considered, 
the 
   effective potential still depended
on   $\Lambda$.  Hiroshima then revisited this problem in \cite{Hiroshima98}. 
He analyzed the $\mu \rightarrow \infty $ limit of the 
 resolvent
 of  the  massive
 Nelson model,   
 while simultaneously removing the cutoff: 
\begin{equation}
	\label{alpha}  
	 	\Lambda  =  \mu^\alpha
	 	\qquad 
	 	\t{as}
	 	\qquad \mu \rightarrow \infty \ , \qquad 0 < \alpha < 1/4  \ .  
\end{equation}
 In   follow-up work \cite{Hiroshima99},
 Hiroshima analyzed  the emergence of the effective Hamiltonian for the semigroup
$ \{   \exp( -t (H_\Lambda^{\rm N} - E_\Lambda) )   \}_{t>0}$ for $\Lambda = \mu^\alpha$ as $\mu\to \infty$, 
and extended his results
 to the massless case  $m =0$.
More recently, Gubinelli, Hiroshima and L\"orinczi \cite{GubinelliHJ2014} proved the convergence of the semigroup for the renormalized Nelson Hamiltonian with massive Bose field, with the effective Hamiltonian containing a Yukawa potential instead of the Coulomb potential. It is worth noting that the focus on the semigroup stems from the functional integration techniques used in \cite{Hiroshima99, GubinelliHJ2014}. While this method offers its own advantages and insights, it is not suited for studying the unitary group. The generalization to the Nelson model with semi-relativistic particles, that is, with $p^2$ replaced by $(p^2 + 1)^{1/2}$, is addressed in \cite{Takaesu2}, which examines the model with a UV cutoff $\Lambda < \infty$, and in its sequel \cite{Takaesu3}, which considers the simultaneous limit $\Lambda = \mu^\alpha $  discussed earlier. 

\begin{table}[t!]
\centering
\renewcommand{\arraystretch}{1.5} 
\begin{tabular}{|c|c|c|c|l|}
\hline 
\textit{Particles} & \textit{Boson mass}  & \textit{UV cutoff} & \textit{Convergence} &   \\ \hline 
$p^2$  & $m \ge  0$     & $\Lambda <\infty$     &  unitary group   &  Davies \cite{Davies79}  \\  \hline
$p^2$    &   $m>0$ & $\Lambda = \mu^\alpha$     & unitary group     &  Hiroshima \cite{Hiroshima98} \\ \hline
$p^2$      & $m=0$     & $\Lambda = \mu^\alpha$     & semigroup     &   Hiroshima  \cite{Hiroshima99}  \\  \hline
$\sqrt{p^2+1}$     & $m \ge  0 $     & $\Lambda < \infty$    & unitary group     &    Takaesu   \cite{Takaesu2}  \\ \hline
$p^2$     & $m>0$     & $\Lambda = \infty $         & semigroup     &     Gubinelli et al. \cite{GubinelliHJ2014}  \\  \hline
$\sqrt{p^2+1}$      & $m \ge 0$      &  $\Lambda =  \mu^\alpha$     & unitary group     &  Takaesu  \cite{Takaesu3}\\  \hline
$p^2$     & $m = 0$     & $\Lambda = \infty $     & unitary group     &    \textit{present work} \\  \hline 
\end{tabular}
\caption{Previous works on the Nelson model in the weak coupling limit (chronological order).}
\label{tab:example}
\end{table}

Compared to these works, our analysis provides the first derivation of the Coulomb potential starting from the \textit{renormalized massless} Nelson model. 
Moreover --for $H^2(\R^6)$ initial data-- our method establishes an explicit rate of convergence, which was not obtained with the methods previously used.

Before proceeding with the preparations for the proof, we briefly highlight some other related works. 
 In \cite{CarMit24} the authors of the present work
  recently studied the UV-regularized Nelson model \eqref{eq:HN}  in the limit $\mu \to \infty$, 
  demonstrating that the coupling to the quantum field introduces radiative corrections to
   the particles'  
  free dispersion relation; 
  these 
  become significant on timescales 
   longer than those originally considered by Davies \cite{Davies79}, 
  or for initial states with large momentum. However, that analysis focused on the single-particle case, where no effective interaction arises. A different regime, where the interaction term in \eqref{eq:HN} is multiplied by an additional factor of $\mu^{1/2}$, is studied in \cite{Stefan2,TenutaTeufel} for the Nelson model with UV-regularization. Using a space-adiabatic theorem, the authors derive an effective two-body interaction between the particles. 
  The key difference, and a notable challenge compared to the weak coupling limit, is that the particles are dressed with a non-trivial rotated vacuum state.  
  On a similar direction, we note extensions of the weak coupling limit to simplified QED models \cite{Arai,Takaesu,Hiroshima3} and investigations into the effect of enhanced binding \cite{HiroshimaSasaki,HiroshimaSasaki2}.
  Finally, let us mention that for cold Fermi gases, 
  similar effective interactions for impurity particles 
  can be mediated by excitations of the Fermi ball,   playing the role of the scalar Bose field. See e.g. \cite{Pickl,Jeblick2}.

\vspace{1cm}

\textit{{{Outline of the proof}.}}
The first step in  our analysis
is a unitary representation of $H^{\rm N }$
in terms of a  {dressed} Hamiltonian
$H_K$, where $ 1 \ll K  \ll \mu $ 
is a large parameter. 
This operator
contains an  effective 
UV cutoff $K $ in \eqref{eq:HN}, 
at the expense of introducing
several new  terms.
In order to simplify the analysis, 
we introduce yet another Hamiltonian, denoted by $H_{\ve,  K}$, 
which results from elimination of lower order terms in $H_K$.
Up to small corrections,
 it suffices to compare the dynamics generated by  $ H \equiv H_{\ve, K }$  with 
  the one  generated by the effective Hamiltonian $h^{\rm eff}$. 
   This comparison is based on an expansion that exploits fast oscillations of the unitary operator $e^{-i t H }$. More concretely, after performing a first-order Duhamel expansion --see Eq. \eqref{exp}-- the contribution  $\mu T$ in $H $ induces fast oscillations of  $e^{-i t H }$ as $\mu \rightarrow \infty$. 
   			Using a suitable integration-by-parts formula --see Eqs. \eqref{eq1} and \eqref{A:exp}-- we apply a stationary-phase-type argument to expand the difference between the dynamics. In this expansion, two types of terms emerge: \textit{Subleading terms}, which vanish due to cancellations induced by the oscillations, combined with suitable operator and energy estimates; and  \textit{leading-order terms}, which arise from states in the subspace
   			 $\t{ker}(T) = \{\phi \otimes \Omega : \phi \in L^2(\mathbb R^6)\}$,
   			  where oscillations are absent. Physically, these terms correspond to the self-energy and the effective interaction between the particles. The leading-order terms involve zero final bosons and are commonly referred to as virtual exchange processes. In particular, we prove in Lemma \ref{lemma:cancel} that the Coulomb interaction $W( x_1 - x_2)$
emerges as an effective interaction from these processes. 
 
\smallskip 

\textit{Organization of the article.} In Section \ref{sec:dressed}, we introduce the dressed Hamiltonian $H_K$ and its relation to $H^{\rm N}$. This is followed by a section on preliminary operator estimates, including Lemma \ref{lem:energy:bounds}, which is crucial for controlling the regularity of time-evolved states (see Lemma \ref{lemma:E}) necessary for exploiting the oscillations induced by the unitary evolution. The main part of the proof is presented in Section \ref{sec:main:proof}, beginning with an approximation of $H_K$ in terms of a simplified dressed Hamiltonian $H_{\varepsilon , K}$, where subleading terms are removed. The comparison of the dynamics generated by $H_{\varepsilon ,K}$ and $h^{\rm eff}$ is made in Section \ref{subsection:main} and comprises most of the analysis in this article. The main cancellations between the two generators are established in Lemmas \ref{lemma:cancel} and \ref{prop3}. In the appendix, we provide some details on the proofs of Proposition \ref{prop:ren:Nelson} and Lemma \ref{lemma:dressed:Nelson}.

  \subsection{Notation}
Throughout this paper we implement the following notation.

\noindent - We drop
the subscript $\mathscr H $ in the  Hilbert space norm $\|  \Psi \|_{\mathscr H } = \| \Psi\|$. 
Often we also  identify $\mathscr H  =  L^2( \R^6 , \mathcal F)$
and write for the inner product
 \begin{equation}
 	\label{fock}
 	 \< \Phi, \Psi \> 
 	  = \int_{\R^6} \<  \Phi(X), \Psi(X)\>_\F  \, d X \  , \qquad \Phi , \Psi \in \mathscr  H \ . 
 \end{equation}

\noindent  - We consider the space $L^\infty L^2$
  of functions 
  $f : \R^6 \times \R^3 \rightarrow \mathbb C  $
  with  finite norm 
  \begin{equation}
  	 \|   f  \|  \equiv  
  	   \|   f \|_{L^\infty L^2}
  	    = \sup_{X\in \R^6} 
  	   \bigg(	    \int_{\R^3} | f (X, k)|^2 dk		\bigg)^{1 /2 } \ . 
  \end{equation}
If 
 $T_f  :L^2(\R^6 ) \rightarrow L^2(\R^6)  \otimes L^2(\R^3)$
 is the bounded operator 
   $ ( T_f \psi)  (X, k) \equiv  f(X,k)\psi ( X )$, 
   then 
  the operator norm of $T_f$ and $\| f \|$   coincide.

\noindent -
For $F \in L^2 (\R^3, dk)$
 we denote by $a ( F)$ and $a^* (F )$
 the standard creation and annihilation operators on Fock space
 $ \F $. In particular 
  \begin{equation}
 	a ( F ) = \int_{\R^3} \overline{F (k)}  a_k  dk \ ,  \qquad
 	a^* ( F ) = \int_{\R^3}  {F (k)}  a^*_k  dk  \ . 
 \end{equation}
We also define the Fock-Segal operator and the bosonic number operator 
\begin{equation}
	 	\label{phi}
	 \phi ( F ) \equiv a (F ) + a^* ( F )   \ , 
	 \qquad N \equiv \int_{\R^3 } a_k^* a_k d  k \ . 
\end{equation}
 
\noindent -
For  $f \in L^{\infty}L^2 $ 
we 
extend
the operators to $\mathscr  H $ via 
$   [  a(f) \Psi ]  (X) \equiv a (f_X) \Psi(X )$ 
with $f_X(k)   =  f(X,k)$. 
Similarly for $a^* (f)$ and $\phi (f)$.

\section{The dressed Hamiltonian \label{sec:dressed}}
Here and in the sequel, 
we introduce the   following 
form factors
in the interaction term of  \eqref{eq:HN} 
\begin{align}
	 G_1 (k )    
	 \equiv  
	 | k |^{ -1 /2 }
	 e^{-ik x_1} \otimes \textbf{1}
	 \qquad 
\t{and}
	 \qquad 
	  G_2 (k )    
	 \equiv  
	 | k |^{ -1 /2 }
	\textbf{1} \otimes  e^{-ik x_2} 
\end{align}
regarded as operators on $L^2(\R^6)$. 
In addition, for $\Lambda>0$
we introduce their cutoff versions  
\begin{equation}
	G_{\Lambda, 1 }(k)  
	\equiv  
G_1(k) \1_{| k | \leq \Lambda}
	\qquad
	\t{and}
	\qquad 
	G_{\Lambda, 2  }(k)  
\equiv  
G_2 (k) \1_{| k | \leq \Lambda}  \ .
\end{equation}
From now on,  we   no longer use  the tensor product  $\otimes$ in our notation. 
We also  identify the functions $G$ with elements of $L^\infty L^2. $

Our analysis is based on a representation of the Hamiltonian $H^{\rm N}$ via the following unitary dressing transformation (also called Gross transformation)
for $ K >  0 $ 
\begin{align}\label{eq:Gross:trafo}
U_{K} 
\equiv 
 \exp\Big(  \mu^{1/2} 
 \sum_{i=1 ,2 }
(  a^*(B_{K,i}) - a( B_{K,i})) \Big)    \quad \text{with}\quad 
B_{K, i } (k) \equiv   \frac{G_i (k)  }{k^2 + \mu 
	| k |} \1_{|k|\ge K}  
\end{align}
for $ i =1,2$. Observe that  $K$ acts as an upper cutoff in  $G_{K , i}$, 
but as a lower cutoff in $B_{K,i}$. 
The next statement is adapted   from \cite{Griesemer18}, 
and a sketch of the proof is provided in the appendix.

\begin{lemma}\label{lemma:dressed:Nelson}
For all $K >  0$
	 we have the operator identity $H^{\rm N} = (U_K)^* H_K U_K$ with dressed Hamiltonian
\begin{align}\label{eq:H:ren}
H_{K} & = p_1^2 + p_2^2 + \mu T  + \mu^{1/2} \sum_{i=1,2}
 \Big( \phi(G_{K,i} ) + 2 a^*( k B_{K,i} ) \cdot p_i +  2  p_i \cdot a ( k  B_{K,i} ) \Big) \notag\\
& \quad + \mu    \sum_{i=1,2} \Big( 2 a^* ( k B_{K,i}  ) a ( k  B_{K,i}  ) +  a^* ( k B_{K,i}  ) a^* ( k B_{K,i}  ) +  a ( k B_{K,i}  ) a ( k B_{K,i}  ) \Big)  \notag\\[2mm]
& \quad +         E_K     +    V_K(x_1-x_2) ,
\end{align}
with  quadratic form domain $Q(H_K) = Q(p_1^2 + p_2^2 +   \mu T )$.
Here, $E_K$ is the energy defined in \eqref{E}, 
and we introduced the potential  
\begin{equation}
	\label{VK}
	V_K(x ) 
	    \,  =  \, 
(-2)  \mu \, 
\int_{   |k |\geq K 	}
\frac{\cos(k  x ) ( \mu  | k | +2 k^2 ) \, d k }{ |k|
	 ( \mu  | k |+k^2)^2} .
\end{equation}
\end{lemma}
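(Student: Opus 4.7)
My approach is to reduce to a direct computation with a finite UV cutoff $\Lambda$ and then remove the cutoff. Let $U_{K,\Lambda}$ denote the dressing operator of \eqref{eq:Gross:trafo} with $B_{K,i}$ replaced by $B_{K,\Lambda,i} \equiv B_{K,i}\1_{|k|\le\Lambda}$, so that the generator is bounded and $U_{K,\Lambda}$ is manifestly unitary. Because the $B_{K,\Lambda,i}(k)$ are mutually commuting (being multiplication operators in $x_1$ or $x_2$ only), the Weyl identity takes the simple form $U_{K,\Lambda}\, a_k\, U_{K,\Lambda}^* = a_k - \mu^{1/2} B_{K,\Lambda}(k)$, with $B_{K,\Lambda} \equiv \sum_i B_{K,\Lambda,i}$. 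Since $E_\Lambda$ is a scalar, Proposition~\ref{prop:ren:Nelson} reduces the claim to proving the operator identity
\begin{equation*}
U_{K,\Lambda}\bigl(H_\Lambda^{\rm N}+E_\Lambda\bigr)U_{K,\Lambda}^* = H_K^{(\Lambda)},
\end{equation*}
where $H_K^{(\Lambda)}$ is \eqref{eq:H:ren} with $B_{K,i} \mapsto B_{K,\Lambda,i}$ and with $V_K$ replaced by its $\{K \le |k| \le \Lambda\}$-truncation $V_K^{(\Lambda)}$; then one passes to $\Lambda\to\infty$.

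Conjugating each piece of $H_\Lambda^{\rm N}$: (i) $\mu T$ acquires a linear shift $-\mu^{3/2}\phi(|k|B_{K,\Lambda})$ and a scalar $+\mu^2\int|k|B_{K,\Lambda}^*B_{K,\Lambda}\,dk$; (ii) using $[p_i, B_{K,\Lambda,j}] = \delta_{ij}\,k\,B_{K,\Lambda,j}$, a two-term BCH expansion gives $U_{K,\Lambda}\,p_i\,U_{K,\Lambda}^* = p_i + \mu^{1/2}\phi(kB_{K,\Lambda,i})$; squaring and symmetrizing with $a(kB)\cdot p_i = p_i\cdot a(kB) - a(k^2B)$ and $a^*(kB)\cdot p_i = p_i\cdot a^*(kB) + a^*(k^2B)$ then produces the linear terms $2 p_i\cdot a(kB_{K,\Lambda,i}) + 2 a^*(kB_{K,\Lambda,i})\cdot p_i$, a piece $-\mu^{1/2}\phi(k^2 B_{K,\Lambda,i})$, the quadratic $\mu\,\phi(kB_{K,\Lambda,i})^2$ (which, once normal-ordered, yields the second line of \eqref{eq:H:ren} plus a scalar $\mu\sum_i\int k^2|B_{K,\Lambda,i}|^2\,dk$); and (iii) the interaction contributes only the scalar $-\mu\sum_{i,j}\int[G_{\Lambda,i}B_{K,\Lambda,j}^* + \mathrm{c.c.}]\,dk$. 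The key algebraic input is
\begin{equation*}
\mu |k|\,B_{K,\Lambda,i}(k) + k^2 B_{K,\Lambda,i}(k) = G_{K,\Lambda,i}(k),
\end{equation*}
valid by the very definition of $B_{K,i}$. Inserting $-\mu^{3/2}\phi(|k|B_{K,\Lambda,i}) = -\mu^{1/2}\phi(G_{K,\Lambda,i}) + \mu^{1/2}\phi(k^2 B_{K,\Lambda,i})$, the second summand cancels the $-\mu^{1/2}\phi(k^2 B_{K,\Lambda,i})$ from the kinetic expansion, while the first combines with the interaction $\mu^{1/2}\phi(G_{\Lambda,i})$ to leave $\mu^{1/2}\phi(G_{K,i})$, as required in the first line of \eqref{eq:H:ren}.

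The main arithmetic task is to verify that the three scalar corrections, combined with $E_\Lambda$, produce exactly $E_K + V_K^{(\Lambda)}(x_1-x_2)$. Splitting each integrand into its diagonal ($i=j$, constant in $x_1-x_2$) and off-diagonal ($i\neq j$, proportional to $\cos(k(x_1-x_2))$) contributions, and using $k^2+\mu|k|=|k|(|k|+\mu)$ to rewrite all three over the common denominator $|k|^2(|k|+\mu)^2$, the diagonal sum collapses to $-2\mu\int_{K \le |k| \le \Lambda}|k|^{-2}(|k|+\mu)^{-1}\,dk = E_K - E_\Lambda$, so adding $E_\Lambda$ recovers the constant $E_K$; the off-diagonal sum collapses to $-2\mu\int_{K\le|k|\le\Lambda}\frac{\cos(k(x_1-x_2))(\mu+2|k|)}{|k|^2(|k|+\mu)^2}\,dk$, which is precisely $V_K^{(\Lambda)}(x_1-x_2)$ from \eqref{VK}. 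I expect this scalar cancellation to be the main obstacle, since it is the point where the precise forms of $B_{K,i}$, $E_\Lambda$, and $V_K$ must conspire exactly. The final $\Lambda \to \infty$ limit is then routine: one has $B_{K,\Lambda,i}\to B_{K,i}$ in $L^\infty L^2$ and $\int|kB_{K,i}|^2/|k|\,dk<\infty$, and the usual $N^{1/2}$- and $T^{1/2}$-relative bounds give convergence of $H_K^{(\Lambda)}$ to $H_K$ in the quadratic-form sense on $Q(p_1^2+p_2^2+\mu T)$.
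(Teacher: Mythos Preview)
Your strategy coincides with the paper's: conjugate $H_\Lambda^{\rm N}+E_\Lambda$ by the cutoff dressing $U_{K,\Lambda}$ via the Weyl/BCH relations, verify that the linear pieces collapse to $\phi(G_{K,i})$ through the identity $(k^2+\mu|k|)B_{K,\Lambda,i}=G_{K,\Lambda,i}$ and that the scalar pieces assemble into $E_K+V_K^{(\Lambda)}$, and then remove $\Lambda$ in the form sense. (Minor point: your stated commutator $[p_i,B_{K,\Lambda,j}]=\delta_{ij}\,k\,B_{K,\Lambda,j}$ has the wrong sign, but your subsequent formulas are nonetheless correct.)

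The one substantive step you omit is the form-domain claim for \emph{all} $K>0$. The ``usual $N^{1/2}$- and $T^{1/2}$-relative bounds'' only yield a KLMN relative bound $<1$---and hence $Q(H_K)=Q(P^2+\mu T)$---for $K\ge K_0$ sufficiently large; for small $K$ the interaction is not form-small relative to $P^2+\mu T$. The paper handles this separately: starting from the representation valid at some $K\ge K_0$, it uses the group property $U_{M,K}U_{K,\infty}=U_{M,\infty}$ to write $H^{\rm N}=(U_{M,\infty})^*\bigl(U_{M,K}H_{K,\infty}U_{M,K}^*\bigr)U_{M,\infty}$ for any $0<M<K_0$, checks by direct computation that $U_{M,K}H_{K,\infty}U_{M,K}^*=H_{M,\infty}$, and establishes $Q(H_M)=Q(P^2+\mu T)$ from the two-sided bound $U_{M,K}(P^2+T)U_{M,K}^*\le C(P^2+T+\id)$ (and its analogue with $U\mapsto U^*$). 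Without this extra argument your sketch does not cover the full range $K>0$ asserted in the lemma.
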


\begin{remark}  
	Relative to \cite{Griesemer18}, 
	here we obtain a new term corresponding to $V_K$ due to effective interactions between the two particles. 
An integral  estimate shows that  there  is  $C>0$ so that 
\begin{align}
	\label{VK:L^2}
	\|    V_K\|_{L^2}^2 
	 = 4 \mu^2 
	 \int_{ | k |	 \geq K 	}	
	  \bigg|   \frac{ (   \mu |k| + 2k^2 )  }{ |k | ( \mu |k| + k^2)^2 }				\bigg|^2 dk 
	    \leq  \frac{C}{K} \ , 
\end{align}
for all $ \mu, K >0$. 
\end{remark}

\section{Preliminary estimates}

  First, we  introduce
some  auxiliary
 operators. 
For $0\le a<b \le \infty $, we set
 \begin{align}\label{eq:a:b:notation}
 	G_{a,b , i } =   \1 (a \le |k| \le b) \, G_i   \    , 
 	\quad N_{a,b} 
 	=
 	 \int\limits_{a \le |k| \le b } a^*_k a_k d k  \ , \quad 
 	T_{ a,b } 
 	=
 	 \int\limits_{  a \le |k| \le b } |k| a^*_k a_k  d k .
 \end{align}
 In particular $N_{0,\infty} =N$ and $T_{0,\infty} = T$. 
 Similarly, 
 we introduce the projections 
 \begin{equation}
 	P_{a,b} = \1  (N_{a,b}  =0) \ , \qquad Q_{a,b}= \1 (N_{a,b} \ge 1) \  . 
 \end{equation}
Note that $P_{0,\infty} =  |\Omega \rangle \langle \Omega|$ coincides with the projection on the   vacuum. 
  Since $T_{a,b} \ge a  N_{a,b}$ and since $N_{a,b}^m$ has a bounded inverse on $\text{Ran}(Q_{a,b})$, we have
 \begin{align}
 	\| ( T_{a,b} )^{-m} Q_{a,b}  \| \le  a^{-m} \| ( N_{a,b} )^{-m} Q_{a,b}  \| \le  a^{-m}.
 \end{align}
 We will further need the following resolvent type operator
 \begin{align}
 	\label{def:resolv}
 	R_{a , b } =  Q_{a,b} (T_{a,b} + E_K + (p_1^2+p_2^2) /\mu )^{-1} Q_{a,b}
 \end{align}
 where we recall that $E_K > 0$, hence this is well defined for all $a\ge 0$. 
 Moreover, for $a>0$ it satisfies $\| R_{a,b}^m \| \le a^{-m}$ for any $m > 0$, and
 for any $ i =1,2 $
 \begin{align}
 	\| T_{a,b} ^{1/2} R_{a,b}^{1/2} \| \le 1 , \qquad \| ( T_{a,b} + 1) ^{1/2} R_{a,b}^{1/2} \| \le C ( 1 + a^{-1/2}) ,\qquad \| p_i R_{a,b}^{1/2} \| \le \sqrt \mu.
 \end{align}

\subsection{Integral estimates}

We start by summarizing some relevant norms that will come up many times.
The proof  is     a simple 
calculation using the change of variables $\ell \equiv  \mu^{-1} k$ which is left to the reader. 
 
\begin{lemma}\label{lem:form:factors} 
Let $G_K$ and $B_K$ denote  $G_{K,i}$ and $B_{K,i}$ for any $i =1,2$. Denote $\omega(k) = |k|$. 
\begin{enumerate}  
	\item[\rm (1)] 
	There is $C> 0$ such that for all $K  >  0  $ 
\begin{equation}
	\label{GK}
	 \|   G_K \|  \leq C K  \  , 
	 \qquad 
\| \omega^{- 	 \frac{1}{2}		} G_K \|   \leq C  K^{1/2} \ . 
\end{equation}

\item[\rm (2)] 
For $0 < s \leq  \tfrac{1}{2} $ there is  $C>0$ such that for all
 $\mu ,   K >  0 $ 
\begin{align}
	\label{BK1}
	\| B_K \|    \leq    
	\frac{C}{\mu}
	   \big(   1 + | \ln ( K  / \mu  )| ^{	 \frac{1}{2} } 	  \big)   \ , 
\qquad 
	 \| \omega^{-s}   k B_K  \|  
	  \leq   \frac{C}{ (\mu +  K)^{s} } \ . 
\end{align}
\end{enumerate}
\end{lemma}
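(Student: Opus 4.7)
The plan is to reduce all four estimates to elementary one-dimensional integrals in the radial variable $r = |k|$. The key simplification I would exploit is that for each form factor $F \in \{G_{K,i},\, B_{K,i},\, \omega^{-s} k B_{K,i}\}$, the $X$-dependence enters only through a unit-modulus phase $e^{-ikx_i}$, so $|F(X,k)|^2$ is independent of $X$. Thus the supremum over $X$ in the $L^\infty L^2$ norm is trivial, and I am left with computing ordinary $L^2(\R^3,dk)$ norms of radial functions, each of which becomes a 1D integral via spherical coordinates.

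For part (1), the bounds follow at once from
\[ \|G_K\|^2 = 4\pi \int_0^K r \, dr = 2\pi K^2, \qquad \|\omega^{-1/2} G_K\|^2 = 4\pi \int_0^K dr = 4\pi K. \]
For part (2), I would apply the change of variables $r = \mu \ell$ suggested by the authors, which cleanly factors out the $\mu$-scaling and gives
\[ \|B_K\|^2 = \frac{4\pi}{\mu^2} \int_{K/\mu}^\infty \frac{d\ell}{\ell(\ell+1)^2}, \qquad \|\omega^{-s} k B_K\|^2 = \frac{4\pi}{\mu^{2s}} \int_{K/\mu}^\infty \frac{\ell^{1-2s}}{(\ell+1)^2} d\ell. \]
For the first integral, partial fractions on $\frac{1}{\ell(\ell+1)^2} = \frac{1}{\ell} - \frac{1}{\ell+1} - \frac{1}{(\ell+1)^2}$ yield the explicit antiderivative $\ln\frac{\ell}{\ell+1} + \frac{1}{\ell+1}$. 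Evaluating between $K/\mu$ and $\infty$ and using the elementary bound $\ln(1+1/\sigma) \le C(1+|\ln\sigma|)$ produces the claimed bound on $\|B_K\|$ after taking square roots.

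The only slightly delicate step is the second integral, since the right-hand side $C/(\mu+K)^s$ must capture both the $\mu^{-s}$ behavior in the regime $K \ll \mu$ and the $K^{-s}$ behavior in the regime $K \gg \mu$. To handle this uniformly I would split cases at $\sigma = K/\mu = 1$. For $\sigma \le 1$, the restriction $0 < s \le 1/2$ ensures that the integrand $\ell^{1-2s}/(\ell+1)^2$ is integrable both near $0$ (since $1-2s > -1$) and near $\infty$ (decay $\ell^{-1-2s}$), giving a uniform bound independent of $\sigma$. For $\sigma \ge 1$, the inequality $(\ell+1)^2 \ge \ell^2$ yields $\int_\sigma^\infty \ell^{-1-2s} d\ell = (2s)^{-1} \sigma^{-2s}$. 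Combining both regimes produces $\int_{K/\mu}^\infty \frac{\ell^{1-2s}}{(\ell+1)^2} d\ell \le C (1 + K/\mu)^{-2s}$, which after multiplying by $\mu^{-2s}$ and taking square roots gives exactly $\|\omega^{-s} k B_K\| \le C(\mu+K)^{-s}$. No significant obstacle is anticipated; the computation is entirely elementary, which is why the authors leave it to the reader.
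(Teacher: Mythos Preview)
Your proposal is correct and follows exactly the approach indicated in the paper, which leaves the calculation to the reader after noting the change of variables $\ell = \mu^{-1}k$. You have faithfully carried out that hint: reducing to radial integrals, applying the rescaling, and handling the two regimes $K/\mu \lessgtr 1$ for the $\omega^{-s}kB_K$ bound is precisely what is needed, and all the elementary estimates check out.
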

\noindent

 \begin{remark}
 In this article we are mostly concerned
 with the case  $ \mu \gg  K \gg 1   $. 
 Hence, we keep in mind the  simplified bounds
 \begin{equation}
 	\label{BK}
 		\| B_K \|    \leq    
 C \ln (\mu / K )^{1/2} \mu^{-1 }  \ , 
 	\quad 
 	\| \omega^{- \frac{1}{4}} k B_K \|  
 	\leq  
 C \mu^{ - \frac{1}{4} }  \ , 
 	\quad 
 	\| \omega^{ - \frac{1}{2} } 
 	k B_K \| 
 	\leq 
 C \mu^{ -  \frac{1}{2} }  \ .
 \end{equation}	 
 \end{remark}

In order to control the large $K$ limit
associated to the effective potentials, we use the following
result. 
Effectively, we want to compare for two states $\psi,\varphi \in L^2(\mathbb R^6)$ the expression
contained in the following lemma for 
 $v (x) = (2\pi)^{-3/2}	\textstyle   \int_{\R^3 } \hat v (k) e^{ikx} d k  $ 
where $\hat v(k)$ is supported only on $ | k | \geq K $. 
We establish the estimate in a general form
for further use. 

\begin{lemma}
	\label{prop:w:estimate}
Let $  v \in L^2 (\R^3)$. 
Then, 
there exists $C >0 $
such that for all $\phi  , \psi \in H^1 (\R^6)$. 
	\begin{equation}
		\Big| 
		\Big\langle \psi , v(x_1- x_2  ) \phi\Big \rangle 
		\Big| 
		\leq 
		C  			  \| v \|_{L^2}   
		\Big( \| \psi \|_{H^1}^2 + \|\phi\|_{H^1}^2   \Big) \ . 
	\end{equation}
Additionally, 
there exists $C >0 $
such that for all $\phi   \in H^2 (\R^6 )$. 
\begin{align}
		\|   v (x_1 - x_2) \phi \|_{L^2} 
		\, 	 \leq  \, 
		C \| v\|_{L^2} 
		\min
		\big\{ \,  	\|  (  1+p_1^2) \phi \|_{L^2} \,    , 	 \, \|  (  1+p_2^2) \phi \|_{L^2}  		 \, 	\big\} \ . 
\end{align}
\end{lemma}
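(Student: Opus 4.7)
Both estimates will reduce to one-particle Sobolev embeddings in three dimensions, and the first (bilinear) bound is the only one requiring a small amount of work. The plan for the first inequality is to perform the measure-preserving change of variables $(x_1, x_2) \mapsto (y, z) := (x_1 - x_2, x_2)$ on $\mathbb R^6$ and set $\widetilde\psi(z,y) := \psi(z+y, z)$, $\widetilde\phi(z,y) := \phi(z+y,z)$. The key observation is that $\partial_y \widetilde\psi(z, y) = (\partial_{x_1} \psi)(z+y, z)$, so $\|\nabla_y \widetilde\psi\|_{L^2(\mathbb R^6)} \leq \|\psi\|_{H^1(\mathbb R^6)}$, and analogously for $\phi$. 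Then, for each fixed $z$, I would apply Cauchy--Schwarz in $y$ alone---which extracts exactly $\|v\|_{L^2(\mathbb R^3)}$---and be left with $\|\widetilde\psi(z,\cdot)\, \widetilde\phi(z,\cdot)\|_{L^2_y}$, which H\"older splits as $\|\widetilde\psi(z,\cdot)\|_{L^4_y} \|\widetilde\phi(z,\cdot)\|_{L^4_y}$. The three-dimensional Sobolev embedding $H^1(\mathbb R^3) \hookrightarrow L^4(\mathbb R^3)$ (by interpolation between $L^2$ and $L^6$) replaces each factor with an $H^1_y$ norm, and a final Cauchy--Schwarz in $z$ produces the bound $C\|v\|_{L^2}\|\psi\|_{H^1}\|\phi\|_{H^1}$; an AM--GM step converts this into the stated sum of squares.

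For the second inequality the plan is shorter and requires no change of variables. For fixed $x_2$, a direct Cauchy--Schwarz in $x_1$ yields
$$\int |v(x_1-x_2)|^2\, |\phi(x_1,x_2)|^2\, dx_1 \leq \|v\|_{L^2}^2\, \|\phi(\cdot, x_2)\|_{L^\infty_{x_1}}^2,$$
and the three-dimensional Sobolev estimate $\|f\|_{L^\infty(\mathbb R^3)} \leq C \|(1+p^2) f\|_{L^2(\mathbb R^3)}$---which I will justify via Fourier inversion and Cauchy--Schwarz using that $(1+|k|^2)^{-1} \in L^2(\mathbb R^3)$---will then control this by $C\|v\|_{L^2}^2 \|(1+p_1^2)\phi(\cdot, x_2)\|_{L^2_{x_1}}^2$. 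Integrating in $x_2$ gives the inequality with $p_1^2$, and running the same argument with the roles of $x_1$ and $x_2$ exchanged gives the version with $p_2^2$; the minimum appears by keeping whichever of the two is smaller.

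I do not anticipate a genuine obstacle; the lemma is classical in flavour. The one point that deserves care is the identification of the partial derivative in the new variable $y$ with $\partial_{x_1}$ in the bilinear estimate, which is precisely what guarantees that the $H^1(\mathbb R^6)$ norm alone---rather than a stronger mixed-regularity norm---controls the $L^2_z H^1_y$ quantity the argument produces. A minor subtlety to check in the second part is that the constant $(1+|k|^2)^{-1}$ belongs to $L^2$ only in dimension strictly less than four, which is exactly our setting.
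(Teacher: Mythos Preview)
Your proposal is correct and follows essentially the same route as the paper's proof: the same change of variables $(x_1,x_2)\mapsto(x_1-x_2,x_2)$, the same reduction to an $L^4_y$ estimate in the difference variable via H\"older, and the same one-dimensional Sobolev embedding $H^1(\R^3)\hookrightarrow L^4(\R^3)$ for the bilinear bound, as well as the same $H^2(\R^3)\hookrightarrow L^\infty(\R^3)$ embedding for the second inequality. The only cosmetic difference is that the paper extracts $\|v\|_{L^2}$ via Parseval on the Fourier side (writing $v(x)=\int\hat v(k)e^{ikx}\,dk$ and applying Cauchy--Schwarz in $k$), whereas you do it directly in position space; your version is arguably slightly cleaner since it avoids Fourier altogether.
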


\begin{proof}
	Let $D \equiv \Big\langle \psi , \int dk \widehat v (k) e^{-ik(x-y)} \phi\Big \rangle . $ 	First, shift $x\mapsto x+y$ and define the shifted function $\psi_y(x)  = \psi(x+y,y)$, and then use Cauchy--Schwarz and Parseval to find 
	\begin{align}
		| D | \le \| \widehat v \|_{L^2_k} \ \Big\|  
		\int \big\langle \psi_y (x)  , \phi_y (x)  \big\rangle_{L_y^2} e^{ikx} dx \Big\|_{L^2_k} =
		 \| v \|_{L^2} 
		  \Big\|  \big\langle \psi_y (x)  , \phi_y (x)  \big\rangle_{L_y^2} \Big\|_{L^2_x}.
	\end{align}
	The triangle inequality, Cauchy--Schwarz inequality, 
	and Young's inequality  imply  
	\begin{align}
		 		  \Big\|  \big\langle \psi_y (x)  , \phi_y (x)  \big\rangle_{L_y^2} \Big\|_{L^2_x } 
		 		  	\leq 
		 		  	 \int d y 
		 		  	 \|	 	 {  \psi_y} 	 	 \phi_y	\|_{L_x^2} 
		 		  	 \leq 
		 		  	 		 		 \int d y 
		 		  	 		 		\|  \psi_y\|_{L^4_x}
		 		  	 		 				 		  	 		 		\|  \phi_y  \|_{L^4_x}
		 		  	 		 				 		  	 		 		\leq 
		 		  	 		 				 		  	 		 		\int dy 
\big(	\|  \psi_y\|_{L^4_x}^2 +   
		\|  \phi_y\|_{L^4_x}^2 	\big)  \ . 
	\end{align}
On the other hand, 
  the Sobolev embedding $H^1(\R^3) \rightarrow L^6(\R^3)$  implies 
  for any $\psi \in L^2(\R^3)$
\begin{equation}
	 \|	 \psi	\|_{L^4}^2
	 \leq 
	 \| \psi \|_{L^2}^{1/2}  \|  \psi \|_{L^6}^{3/2}
	 \leq  C 
	 \| \psi \|_{L^2}^{1/2}  \|  \psi \|_{H^1}^{3/2}  \ . 
\end{equation}
 We put the  last three inequalities together and use 
  $ \| \psi\|_{L^2}\leq \| \psi\|_{H^1}$  to finish the proof. 
	
	For the second part, we do the same change of variables and
	use the Sobolev embedding $H^2(\R^3)  \rightarrow L^\infty(\R^3)$ 
	to obtain 
	\begin{align}
\notag
		 \iint dx dy | v (x -y )|^2 | \phi (x,y)|^2 
 & 		  = 
		  		 \iint dx dy | v (x   )|^2 | \phi (x+ y ,y)|^2  \\
		  		 \notag
	& 	  		 \leq 
		  		 \|  v\|_{L^2}^2 
		  		 \int d y   \sup_x | \phi(x,y)|^2  \\ 
		&   		 \leq 
		  		 C 
		  		 		  		 \|  v\|_{L^2}^2    \int d y 
		  		 		  		 \| \phi ( \cdot, y) \|_{H_x^2}^2 
		  		 		  		  =  
		  		 		  		 		  		 C 
		  		 		  		 \|  v\|_{L^2}^2    
		  		 		  		 \|  \phi  \|_{L^2_y H_x^2}
	\end{align} 
Interchanging the roles of $x$ and $y$, this gives the desired estimate. 
\end{proof}

\subsection{Operator estimates}

First, we state
useful bounds for creation and annihilation operators.

\begin{lemma}[Operator bounds] \label{lem:operator:bounds} 
Let $f, g  \in L^\infty L^2$ 
be supported  
 on  $ \R^6 \times   \O \subset \R^9$.
Let us denote  
$N_\O  \equiv \int_{\O  } a_k^*a_k dk $
and
$T_\O  \equiv \int_{\O  } \omega(k)  a_k^*a_k dk $ for some measurable function  $\omega : \R^d \rightarrow [ 0 , \infty) $. 
Then, it holds that
		\begin{align}
			\label{af1}
		\|   a(f) \Psi \| 		 
		& \leq	
		\| f\| 	 \| N_\O ^{  1/2 }  \Psi \| 
		    & &  \forall \Psi  \in D (N_\O ^{1/2})			 \ , 
			\\
		\label{af2}
		\|   a(f) \Psi \| 		 
		&  \leq	
		\|  \omega^{ - \frac{1}{2}} f\| 	 \| T_\O ^{  1/2 }\Psi \|  
	 & & 	 \forall \Psi  \in D (T_\O ^{1/2})			 \ . 
	\end{align}
Additionally,  
\begin{align}
			\label{af3}
	 		\|  (1+N_\O )^{- \frac{1}{2}} a(g)  a(f) \Psi \| 		 
 &   \leq	
	 \|  	 \omega^{ - \frac{1}{4}} f\|
	 \|  \omega^{ - \frac{1}{4}} g\| 
	 \|   T_\O ^{  1/ 2 } \Psi \|    
	 & & 
	  \forall \Psi  \in D (T_\O ^{1/2})	 \ . 
\end{align}

\end{lemma}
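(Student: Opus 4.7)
The plan is to reduce each of the three estimates to a fiberwise bound on the Fock space $\F$ and then derive those bounds by Cauchy--Schwarz in $k$-space. Because $\mathscr H = L^2(\R^6, \F)$ and $[a(f)\Psi](X) = a(f_X)\Psi(X)$ with $f_X(k) = f(X,k)$, any pointwise-in-$X$ inequality whose constant is $\|f_X\|_{L^2_k}$ upgrades, after integrating in $X$ and taking $\sup_X$ on the form-factor side, to the $L^\infty L^2$-version stated in the lemma. The support condition $\mathrm{supp}\, f \subset \R^6 \times \O$ automatically restricts every $k$-integral to $\O$, which is what produces $N_\O$ and $T_\O$ (rather than $N$ and $T$) on the right-hand side.

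For \eqref{af1}, I would write $a(F) = \int_\O \overline{F(k)} a_k \, dk$ and estimate
\[
\|a(F)\psi\|_\F \le \int_\O |F(k)|\, \|a_k\psi\|_\F \, dk \le \|F\|_{L^2(\O)} \Big(\int_\O \|a_k\psi\|_\F^2 \, dk\Big)^{1/2} = \|F\|_{L^2}\, \|N_\O^{1/2}\psi\|_\F,
\]
using Cauchy--Schwarz in $L^2(\O, dk)$ together with the identity $\int_\O \|a_k\psi\|_\F^2 dk = \langle \psi, N_\O \psi\rangle$. Bound \eqref{af2} is the same computation with the splitting $|F(k)| = \omega(k)^{-1/2}|F(k)| \cdot \omega(k)^{1/2}$ performed before the Cauchy--Schwarz step, so that the weighted integral against $\|a_k\psi\|^2$ reconstructs $\langle \psi, T_\O\psi\rangle$.

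The real work lies in \eqref{af3}, whose distinctive feature is that only one $T^{1/2}$ appears on the right even though two annihilation operators act on the left; naively iterating \eqref{af2} would spend one $T^{1/2}$ per operator and would waste the free factor $(1+N_\O)^{-1/2}$. I would work sector by sector on $\F^{(n)}$, where $(1+N_\O)^{-1}$ evaluated on $a(g)a(f)\Psi_n \in \F^{(n-2)}$ becomes $(n-1)^{-1}$. Writing $\Psi_n$ as a symmetric function of $n$ variables, the kernel of $a(g) a(f) \Psi_n$ equals
\[
\sqrt{n(n-1)} \iint_{\O\times\O} \overline{g(k) f(k')}\, \Psi_n(k, k', k_1, \ldots, k_{n-2}) \, dk\, dk'.
\]
The key step is Cauchy--Schwarz in $L^2(\O\times\O, dk\, dk')$ with the \emph{balanced} weights $\omega(k)^{-1/4}\omega(k')^{-1/4}$ placed on $(f,g)$ and the reciprocal weights on $\Psi_n$, giving an integral of $\|\omega^{-1/4} f\|^2 \|\omega^{-1/4} g\|^2$ against $\omega(k)^{1/2}\omega(k')^{1/2}|\Psi_n|^2$. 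The AM--GM inequality replaces the symmetric weight by $\tfrac{1}{2}(\omega(k)+\omega(k'))$, and after integrating the spectator momenta and invoking bosonic symmetry of $\Psi_n$ one obtains $\frac{1}{n}\langle \Psi_n, T_\O \Psi_n\rangle$. The combinatorial factor $n(n-1)$ from the kernel, divided by $(n-1)$ from $(1+N_\O)^{-1}$ and by $n$ from the symmetry, collapses to $1$, leaving exactly $\|\omega^{-1/4}f\|^2 \|\omega^{-1/4}g\|^2 \|T_\O^{1/2}\Psi_n\|_\F^2$. Summing over $n$ concludes the fiberwise bound, and the reduction of the first paragraph lifts it to $\mathscr H$. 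The main obstacle is aligning this bookkeeping of combinatorial factors and $\omega$-weights: the symmetric choice $\omega^{-1/4}$ on each form factor is precisely what allows the bosonic symmetry to convert the product of half-weights into a single $T_\O$, and any other splitting would leave a residue that could not be absorbed into a single power of $T_\O^{1/2}$.
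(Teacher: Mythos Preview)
Your treatment of \eqref{af1} and \eqref{af2} is correct and matches the paper's one-line indication.

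For \eqref{af3} you have the right key ideas --- the balanced weights $\omega^{-1/4}$ on each form factor together with AM--GM --- and these are exactly what the paper uses. But there is a genuine gap in the bookkeeping you yourself flag as ``the main obstacle'': you conflate $N_\O$ with the total number $N$. Your claim that ``$(1+N_\O)^{-1}$ evaluated on $a(g)a(f)\Psi_n \in \F^{(n-2)}$ becomes $(n-1)^{-1}$'' is only true when every particle lies in $\O$. In general $N_\O \le N = n-2$ on $\F^{(n-2)}$, so $(1+N_\O)^{-1} \ge (n-1)^{-1}$, and this inequality points the \emph{wrong} direction for the upper bound you want. As written, your argument therefore proves only the weaker estimate with $(1+N)^{-1/2}$ on the left, not $(1+N_\O)^{-1/2}$.

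The paper sidesteps this by working with the operator-valued densities rather than $n$-particle kernels: after Cauchy--Schwarz one is left with $\int_{\O\times\O}\omega(k_1)\|a_{k_1}a_{k_2}\Psi\|^2\,dk_1dk_2$, and the CCR identity $a_{k_2}^*a_{k_1}^*a_{k_1}a_{k_2} = a_{k_1}^*a_{k_1}a_{k_2}^*a_{k_2} - \delta(k_1-k_2)a_{k_1}^*a_{k_1}$, integrated over $\O\times\O$, produces $\langle\Psi,T_\O N_\O\Psi\rangle$ directly (the factor is $N_\O$, not $N$, precisely because both integrals stay in $\O$). One then substitutes $\Psi\mapsto (N_\O+1)^{-1/2}\Psi$ and uses the pull-through $a(g)a(f)(N_\O-1)^{-1/2}\1_{N_\O\ge 2} = (1+N_\O)^{-1/2}a(g)a(f)$. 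Your kernel approach can be repaired along the same lines: do \emph{not} enlarge the $k'$-integral from $\O$ to $\R^3$ before invoking symmetry; instead observe that $n(n-1)\int_{\O\times\O}\int_{\R^{3(n-2)}}\omega(k)|\Psi_n|^2 = \langle\Psi_n, T_\O(N_\O-1)\Psi_n\rangle$ (this is the first-quantized version of the CCR identity), sum over $n$, and then perform the substitution.
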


\begin{proof}
 	The first two inequalities follow from a simple application of Cauchy--Schwarz.
The third one is taken from \cite{Griesemer18}, 
which originates in  a paper of Nelson \cite{Nelson64}. 
Let us give here the  proof for completeness. 
First,      consider $f,g \in L^2(\R^3,  \omega^{-1/2} dk)$ and $\Psi\in D(T_\O^{1/2})$ and compute 
\begin{align}
	\notag 
	\|	 a (f ) a(g) \Psi 	\|_\F 
 & 	 \ \leq 	 \ 
	\int_{ \O \times \O  } 
	   | f( k_2) |   | g (k_1)  |    \| a_{k_1 } a_{k_2} \Psi		\|_{\F } 	\ d k_1  d k_2   \\ 
	  	\notag 
& 	 \  \leq  \ 
	 \|   \omega^{- \frac{1}{4}} f		\|_{L^2}
	 	 \|   \omega^{- \frac{1}{4}} g		\|_{L^2}
	 	 \Big( 	\int_{ \O \times \O  }  
	 	  \omega(k_1)^{1/2 }
	 	  \omega(k_2)^{1/2}
	 	 \|  a_{k_1} a_{k_2} \Psi \|^2_\F			 \ d k_1   d k_2  	\Big)^\frac{1}{2} \\
	 	 & 	 \  \leq  \ 
	 	 \|   \omega^{- \frac{1}{4}} f		\|_{L^2}
	 	 \|   \omega^{- \frac{1}{4}} g		\|_{L^2}
	 	 \Big(		\int_{ \O \times \O  }  
	 	 \omega(k_1)
	 	 \|  a_{k_1} a_{k_2} \Psi \|^2_\F 			 \ d k_1   d k_2  	\Big)^\frac{1}{2}   \  , 
\end{align}
where in the last line we used 
$\omega_1^{ 1/2 } \omega_2^{ 1/2 }
 \leq \frac{1}{2} (\omega _1  + \omega_2)$, 
a change of variables $k_2 \mapsto k_1$, 
and the 
canonical commutation relations (CCR) for the field operators. 
Next,  we use the CCR to calculate
$
	a_{k_2}^* a_{k_1}^* a_{k_1} a_{k_2} 
	 = 
	 \delta(k_1 - k_2 ) a_{k_1}^* a_{k_1}
	 +   a_{k_1}^* a_{k_1} a_{k_2}^* a_{k_2} $
and integrate over $k_1 $ and $k_2$, to find 
\begin{align}
	\|	 a (f ) a(g) \Psi 	\|_\F 
	& 	 \  \leq  \ 
	\|   \omega^{- \frac{1}{4}} f		\|_{L^2}
	\|   \omega^{- \frac{1}{4}} g		\|_{L^2} 
\, 
 \< \Psi,  T_\O ( 1+ N_\O   ) \Psi\>_\F^\frac{1}{2 }   \ . 
\end{align}
The previous bound can then be extended to 
$f,g \in L^\infty L^2$ and $\Psi\in \mathscr H$
using \eqref{fock}. 
This is the desired estimate after changing $\Psi \mapsto(N_\O+1)^{-1/2}\Psi. $
\end{proof}

The next lemma is motivated from \cite[Lemma 3.1]{Griesemer18}. Due to the different choice of scaling, we provide some details.

\begin{lemma} 
	\label{lem:energy:bounds} 
	There exist $ C , K_0 > 0$ such that 
	for all 	$ \mu>0$ and $ K \geq K_0 $
	\begin{equation}
\frac{1}{2}  \, (p_1^2 + p_2^2 + \mu T) - C K 
			  \   \leq 		 \ 
			  H_K 
 \			   \leq  \ 
			  \frac{3}{2} \,  ( p_1^2 + p_2^2 + \mu T )  + C  K 
	\end{equation}
in the sense of quadratic forms on $Q(H_K) = Q(p_1^2 + p_2^2 + \mu T)$.
\end{lemma}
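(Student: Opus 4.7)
I would decompose the difference
$H_K - (p_1^2 + p_2^2 + \mu T)$ into its six contributions from \eqref{eq:H:ren}: the linear field term $\mu^{1/2}\phi(G_{K,i})$; the linear dressing term $2\mu^{1/2}(a^*(kB_{K,i})\cdot p_i + \text{h.c.})$; the diagonal quadratic coupling $2\mu\, a^*(kB_{K,i})a(kB_{K,i})$; the double creation/annihilation $\mu(a^*(kB_{K,i})^2 + a(kB_{K,i})^2)$; the scalar self-energy $E_K$; and the pair potential $V_K(x_1-x_2)$. For each such piece $I_\alpha$, the goal is a form bound of the type
\[
 |\langle \Psi, I_\alpha \Psi\rangle | \, \le \, \varepsilon \, \langle\Psi, (p_1^2+p_2^2+\mu T)\Psi\rangle \,+\, C_\varepsilon K \|\Psi\|^2,
\]
valid for all $\mu>0$ and all $K\ge K_0$ sufficiently large. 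Summing six such estimates with $\varepsilon$ small enough yields both the upper and lower bounds of the lemma, the lower bound following from the same estimates applied to $\pm I_\alpha$.

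The linear and diagonal pieces are routine. For each I combine Cauchy--Schwarz with the annihilation bound \eqref{af2} of Lemma~\ref{lem:operator:bounds} and the form-factor norms of Lemma~\ref{lem:form:factors}: for example, $|\langle \Psi, \mu^{1/2} \phi(G_{K,i}) \Psi\rangle | \le C\mu^{1/2} K^{1/2} \|\Psi\| \|T^{1/2}\Psi\|$, and for the dressing term $|\langle \Psi, 2\mu^{1/2} a^*(kB_{K,i})\cdot p_i \Psi\rangle| \le C\mu^{1/2}(\mu+K)^{-1/2}\|T^{1/2}\Psi\|\|p_i\Psi\|$. Young's inequality, together with $\mu^{1/2}(\mu+K)^{-1/2}\le \mu^{1/2}K^{-1/2}$, converts these into the desired form, the $K^{-1/2}$ factor being what makes the coefficient of $\mu T$ small. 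The diagonal piece satisfies $\langle \Psi, 2\mu\, a^*(kB_{K,i})a(kB_{K,i}) \Psi\rangle = 2\mu\|a(kB_{K,i})\Psi\|^2 \le C\mu(\mu+K)^{-1}\|T^{1/2}\Psi\|^2$, absorbable into $\varepsilon \mu T$ once $K\ge K_0$. The scalar $E_K\le 8\pi K+e_0\le CK$ is read off from \eqref{E}, and the potential contribution is controlled via Lemma~\ref{prop:w:estimate} combined with $\|V_K\|_{L^2} \le CK^{-1/2}$ from \eqref{VK:L^2}.

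The main obstacle is the double creation/annihilation piece, because $\|kB_{K,i}\|_{L^2}$ diverges logarithmically and the naive identity $\|a^*(f)\Psi\|^2=\|a(f)\Psi\|^2+\|f\|^2\|\Psi\|^2$ is therefore useless. Instead, I would pair
\[
|\langle\Psi, a^*(kB_{K,i})^2\Psi\rangle|\ \le\ \|(1+N_{K,\infty})^{-1/2} a(kB_{K,i})^2\Psi\|\ \|(1+N_{K,\infty})^{1/2}\Psi\|,
\]
and apply the Nelson-type bound \eqref{af3} of Lemma~\ref{lem:operator:bounds} on the region $\mathcal O = \{|k|\ge K\}$---using that $B_{K,i}$ is supported there---together with $\|\omega^{-1/4}kB_{K,i}\|^2\le C(\mu+K)^{-1/2}$ to bound the first factor by $C(\mu+K)^{-1/2}\|T^{1/2}\Psi\|$. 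The second factor is then controlled via the energy--number trade-off $N_{K,\infty}\le K^{-1}T_{K,\infty}$, which gives $\|(1+N_{K,\infty})^{1/2}\Psi\|\le \|\Psi\|+K^{-1/2}\|T^{1/2}\Psi\|$. Finally, $\mu(\mu+K)^{-1/2}\le\mu^{1/2}$ and one last application of Young's inequality produce the bound $\le \varepsilon \mu \|T^{1/2}\Psi\|^2+CK\|\Psi\|^2$ for $K\ge K_0$, completing the estimate for this term and hence the lemma.
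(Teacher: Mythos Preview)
Your proof is correct and follows essentially the same approach as the paper: bound each piece of $H_K-(P^2+\mu T)$ via Cauchy--Schwarz, the form-factor norms of Lemma~\ref{lem:form:factors}, and the operator bounds of Lemma~\ref{lem:operator:bounds}---in particular, handling the double-annihilation term through the Nelson-type estimate \eqref{af3} together with $N_{K,\infty}\le K^{-1}T_{K,\infty}$ is exactly the paper's argument (cf.\ \eqref{op4})---and then close with Young's inequality. One small caution for $\mu$-uniformity: in the cross term $\mu(\mu+K)^{-1/2}K^{-1/2}\|T^{1/2}\Psi\|^2$ you should use $(\mu+K)^{-1/2}\le K^{-1/2}$ rather than $\mu(\mu+K)^{-1/2}\le\mu^{1/2}$, so that the resulting coefficient $K^{-1}$ in front of $\mu\|T^{1/2}\Psi\|^2$ is small for $K\ge K_0$ independently of $\mu>0$.
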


\begin{proof} 
	Let $G_K$,  $B_K$  and $p$
denote  $G_{K,i}$,  $B_{K,i}$ and $p_i$ for any $i =1,2$. 
First, 	we claim that we can prove the following bounds, 
	for all $\mu , K > 0 $
	\begin{align}
		\label{op1}
		|   \< 	  \Psi, a ( G_K) \Phi			\> | 
		& \,   \leq 		 \, 
		C K^\frac{1}{2 } \,  \|	 \Psi	\|
		\|	T^\frac{1}{2} \Phi	\| 		\\
		\label{op2}
		|   \langle   \Psi , p \cdot  \, a(k B_K) \Phi \rangle | 				
		& 		\,		 \leq		\,		
		C K^{- \frac{1}{2}}
		\| p \Psi  \|	\|	T^\frac{1}{2} \Phi	\| 	 	\\ 
		\label{op3}
		|    \langle \Psi ,  a^*(k B_K) a(k B_K) \Phi \rangle | 	
		& \, \le	\, 		C K^{-  1 }
					 \| T^{1/2} \Psi \|			 \| T^{1/2} \Phi \| \\
		\label{op4}
		|	 \langle 	\Psi,  a(k B_K) a (k B_K ) \Phi \rangle		|
		& 	 \, \leq  \, 
		C   	(\mu  + K)^{- \frac{1}{2}}  
		\|	 (1  + K^{ - \frac{1}{2}} T^\frac{1}{2}) \Psi	\| 
		\|   T^{\frac{1}{2}}\Phi 		\|  \\
		\label{op5}
		|	  \<	 \Psi, V_K(x_1 - x_2 ) \Phi		 \>		|
		&   \,   \leq  \, 
		C K^{  - \frac{1}{2}}  \, 
		\|   (1 + p^2 )^\frac{1}{2} \Psi		  \| \| (1 + p^2 )^\frac{1}{2} \Phi  \|
	\end{align}
	and all $\Phi, \Psi \in  Q(p_1^2+p_2^2 + \mu T )$.
Additionally,  we note that  
  $   E_K \leq C  ( 1 +  K)  $ for    $ K  >0 $. 
The estimates 
	 \eqref{op1}--\eqref{op5}
	combined with Young's inequality
	$a b \leq  \frac{\epsilon}{2} a^2 +   \frac{1}{2 \epsilon } b^2$
	implies the  desired energy estimates for $H_K$,
	provided we take $ K \geq K_0$ large enough. 
	Note we can choose $K_0$ independently of $\mu$.

	The bounds \eqref{op1}--\eqref{op3} follow directly from
	the integral and operator bounds in 
	\eqref{GK}, \eqref{BK1}  and \eqref{af2}.
For the fourth  bound,
observe that $k B_K$ is supported  on
$| k | \geq  K $. Thus,  we obtain
thanks to the Cauchy--Schwarz inequality 
and Eqs. \eqref{BK1} and \eqref{af3}
\begin{align}
	\notag
		 	|	 \langle 	\Psi,  a(k B_K) a (k B_K ) \Phi \rangle		|
		  & 	\leq 
		 	\|   (1 + N_{ K , \infty })^{ \frac{1}{2 }} \Psi 			\| 
		 	\|	 (1 + N_{ K , \infty } )^{ - \frac{1}{2}} a(k B_K) a (k B_K ) \Phi 	\| \\
		 		\notag
		 	& \leq 
		\|   (1 + N_{ K ,\infty })^{ \frac{1}{2 }} \Psi 			\| 
		 	\| |k|^{-\frac{1}{4}} k B_K  \|^2 \| T_{K,\infty}^\frac{1}{2} \Phi  \| \\
\label{aa}
		 	& \leq 
		 	  C 	 (\mu  + K)^{- \frac{1}{2}}  
		 	  		 		\|   (1 + N_{K , \infty })^{ \frac{1}{2 }} \Psi 			\| 
		 	  		 	\|   T_{K,\infty}^{\frac{1}{2}}\Phi 		\| \ . 
\end{align}
It suffices to use   
$N_{ K , \infty} \leq K^{-1} T_{K,\infty}  $
and
$T_{ K , \infty} \leq T$. 
Next, we control the contributions from  
  $V_K(x_1 - x_2)$
using  $\| V_K \|_{L^2 }  \leq C K^{-1/2}$ 
and Lemma \ref{prop:w:estimate}. This gives \eqref{op5} 
and  finishes the proof. 
\end{proof}

Let us observe that from the proof we
learn that the more intricate term is given by 
$a (k B_K)a(k B_K)$. 
In the following lemma, we record  estimates for this term that will be relevant for our  analysis.   
Recall that $R_{a,b}$ was introduced in \eqref{def:resolv}. 

\begin{lemma}\label{lemma:A}
	Let $B_K$ and $ p $ denote $B_{K, i } $ and $ p_i $ for any $ i  =1 , 2$. 
	Then, 
	there exists $C>0$
	such that for all $\mu,  K > 0$
	and
 $\Psi \in Q(p_1^2+p_2^2 + \mu T ) $.
 \begin{align}
 	\label{A10}
 	\|	 ( 1 + N )^{ - \frac{1}{2}}	 a ( k B_K) a (k B_K )   \Psi 		\|
 	  & 	\leq 
 	C \mu^{- \frac{1}{2}}  \|	 T^{1/2} \Psi 		\|  	 
 \end{align}
Additionally,  we have the operator bounds
 	\begin{align}
 		\label{A20}
\|	 (1 + N)^{ - \frac{1}{2}} 		a ( k B_K) a (k B_K )  R_{ K , \infty} 	\| 
     			& \leq 			
     			  C K^{   - 1 /2 } \mu^{ - 1/2 }
     			 	\\
     			 	\label{A30}
     			\|	R_{ K , \infty}  	a^* ( k B_K) a^* (k B_K )   (1 + N)^{ - \frac{1}{2}} 	 	\| 
     			& \leq 		
     			     			  C K^{   - 1 /2 } \mu^{ - 1/2 } 
 \end{align}
 \end{lemma}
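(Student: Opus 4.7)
The plan is to derive all three estimates from the single general bound \eqref{af3} of Lemma~\ref{lem:operator:bounds}, applied with $f=g=kB_K$ and the support set $\mathcal D = \{k\in\mathbb R^3 : |k|\geq K\}$ (which contains the support of $kB_K$). This yields the intermediate estimate
\begin{equation}
\|(1+N_{K,\infty})^{-1/2} a(kB_K) a(kB_K) \Psi\| \ \leq \ \|\omega^{-1/4} kB_K\|^2\, \|T_{K,\infty}^{1/2}\Psi\|.
\end{equation}
By the second bound of \eqref{BK1} in Lemma~\ref{lem:form:factors}, $\|\omega^{-1/4} kB_K\|^2 \leq C(\mu+K)^{-1/2} \leq C\mu^{-1/2}$, so the prefactor is already of the correct size.

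For \eqref{A10}, I would now replace $(1+N_{K,\infty})^{-1/2}$ by $(1+N)^{-1/2}$ on the left, using $N \geq N_{K,\infty}$, which gives $(1+N)^{-1} \leq (1+N_{K,\infty})^{-1}$ as positive operators and thus the pointwise comparison $\|(1+N)^{-1/2}\Phi\| \leq \|(1+N_{K,\infty})^{-1/2}\Phi\|$. On the right I simply use $T_{K,\infty} \leq T$, arriving at \eqref{A10}.

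For \eqref{A20}, I apply the intermediate estimate to $\Psi = R_{K,\infty}\Phi$ and control the resulting factor $\|T_{K,\infty}^{1/2} R_{K,\infty}\Phi\|$ by writing $R_{K,\infty} = R_{K,\infty}^{1/2} R_{K,\infty}^{1/2}$ and invoking the operator bounds stated below \eqref{def:resolv}, namely $\|T_{K,\infty}^{1/2} R_{K,\infty}^{1/2}\| \leq 1$ and $\|R_{K,\infty}^{1/2}\| \leq K^{-1/2}$. Multiplying these with the prefactor $C\mu^{-1/2}$ produces the bound $C K^{-1/2}\mu^{-1/2}\|\Phi\|$. Finally, \eqref{A30} follows from \eqref{A20} by taking adjoints, as both $R_{K,\infty}$ and $(1+N)^{-1/2}$ are self-adjoint.

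No step here is seriously obstructive; the main point requiring care is keeping track of the two different number operators $N$ and $N_{K,\infty}$ (and the two different kinetic operators $T$, $T_{K,\infty}$). The use of the support condition $|k|\geq K$ in $kB_K$ is what allows \eqref{af3} to be applied with $T_{K,\infty}$ rather than $T$ in the first place, which is crucial for combining cleanly with $R_{K,\infty}$ to produce the additional $K^{-1/2}$ gain in \eqref{A20} and \eqref{A30}.
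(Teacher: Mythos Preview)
Your proof is correct and follows essentially the same approach as the paper: the paper derives all three bounds from the bilinear estimate \eqref{op6} (which is just the dual form of \eqref{af3} applied with $f=g=kB_K$ on $\{|k|\ge K\}$), together with $\|T_{K,\infty}^{1/2}R_{K,\infty}^{1/2}\|\le 1$ and $\|R_{K,\infty}^{1/2}\|\le CK^{-1/2}$. Your handling of the passage from $N_{K,\infty}$ to $N$ via operator monotonicity of the inverse, and the derivation of \eqref{A30} by duality, are exactly the implicit steps the paper absorbs into the sentence ``the bounds in the lemma then follow from \eqref{op6}''.
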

\begin{proof}
From  \eqref{aa}
we obtain 
	\begin{equation}
				\label{op6}
		|	 \langle 	\Psi,  a(k B_K) a (k B_K ) \Phi \rangle		|
		   	 \, \leq  \, 
		C   		  \mu^{- \frac{1}{2}}
		\|	 (1 +   N_{K,\infty } )^{ \frac{1}{2} }	 \Psi	\| 
		\|   T_{K,\infty}^{\frac{1}{2}}\Phi 		\|   \ . \\
	\end{equation}
	The   bounds in the lemma
	then follow from \eqref{op6}, 
	$\| T_{K,\infty}^{1/2} R_{K, \infty}^{1/2}\| \leq 1 $ and 
	 $\| R_{K,\infty}^{1/2}\|\le C K^{-1/2}$. 
\end{proof}

\section{Proof of the main result \label{sec:main:proof}}
Throughout this section, 
we let $\vp \in H^2(\R^6)$
be such that $\| \vp \|_{L^2}=1$.
In particular, all constants will be  independent of  $\vp$. 
Let us record here the proof of the main theorem.
The proof relies in the use 
of an \textit{intermediate dynamics}, 
  generated by the simplified Hamiltonian
for $ 0 < \ve \le  K < \infty$  
\begin{equation}
	H_{\ve , K } 
	\equiv  P^2+ \mu T  +   E_K +  
	V_K+
	\sum_{i=1,2}
	\Big(  \sqrt \mu \phi (G_{\ve, K,i  })
	+ \mu ( A_{K, i } + A_{K,i }^*) \Big)
\end{equation}
where here and in the sequel we denote 
$P \equiv  (p_1, p_2)$ and use the notation 
\begin{equation}
	A_{K,i}  \equiv  a (k B_{K,i }) a (k B_{K,i })   \ .
\end{equation}

\begin{proof}[Proof of Theorem \ref{thm:main}]
	
The first step is to
  replace $H^{\rm N}$ by $H_K$ in the time evolution.
  To this end, 
  observe that  there is a constant $C >0 $ such that for all $\mu \geq 2 K  >0  $
		and
		$\Psi \in L^2(\mathbb R^6 ) \otimes \mathcal F$:
		\begin{align}
			\| (U_K^\# - 1 ) \Psi \|
			& \ 	 \le  \ 
			2 \sqrt \mu  \| B_K \| 
			 \|( N +1)^{1/2} \Psi \|   \ \le \ 
			C \mu^{-1/2} \ln (\mu / K)^{1/2}  \|( N +1)^{1/2} \Psi \| 
		\end{align}
 with $U_K$ defined in \eqref{eq:Gross:trafo} and $\#\in \{ \cdot, * \}$. The inequality is a   consequence of the spectral theorem, 
			the bounds for $\|a (f) \Psi  \|$ in   \eqref{af1},  and the bound for  $\| B_K\| $ in  \eqref{BK}. 
			Thanks to   this bound  and Lemma \ref{lemma:dressed:Nelson}, we find 
	\begin{align}
		\| ( e^{-it H^{\rm N}} - e^{-it h^{\rm eff} } ) (\vp \otimes \Omega ) \| 
		& 	\leq  
		\| ( e^{-it H^{\rm N}} U_K  - U_K e^{-it h} ) (\vp \otimes \Omega )  \| 
		+	
		C \mu^{-1/2} (\ln \mu / K )^{1/2}  \notag \\[2mm]
		& =  \| ( e^{-i t H_K} - e^{-it h^{\rm eff}})  (\vp \otimes \Omega )  \| 
		+		  C \mu^{-1/2} (\ln \mu / K )^{1/2}    \ . 
	\end{align}

For the second step, we make use of the intermediate dynamics $e^{- i t H_{\ve , K}}$. 
Namely, the triangle inequality   implies that  
\begin{equation}
	\|   ( e^{- i t H_K }    - e^{- i t h^{\rm eff}} ) \varphi \otimes \Omega \|
	\leq 
	\|   ( e^{- i t H_K }    -  e^{- i t H_{\ve,K } }  ) \varphi \otimes \Omega \|
	+ 
	\|   ( e^{- i t H_{\ve,K} }    - e^{- i t h^{\rm eff}} ) \varphi \otimes \Omega \| \ .  					
\end{equation}
The two terms on the right side 
are controlled with Lemma \ref{lemma:removal}
and  Proposition  \ref{thm2}, respectively. 
Namely, there exist $C>0$ 
and $K_0>0$
so that for $ \mu   \geq 2  K $  and $ K \geq K_0 $
\begin{equation}
		\|   ( e^{- i t H_K }    - e^{- i t h^{\rm eff}} ) \varphi \otimes \Omega \|^2 
		\leq C
		(1+|t|)
\big(		\| \vp\|_{H^1}^2	+ 	\| \vp\|_{H^2} 	\big) 
	\Big(		
\frac{ K }{\mu^{1/2}} 
\ln( K / \ve ) 
+
\frac{1}{K^{1/2}}
+
\ve 		
+ 
\ve^\frac{1}{2}K^\frac{1}{2}
\Big) \ .
\end{equation}
Finally, we choose  $
	 K  =  \mu^{ 1/3 }$ and $\ve = \mu^{-1}$
	 and take $\mu \geq \mu_0 $ large enough so 
	 that 
	  $ \mu_0 \geq 2 \mu_0^{1/3}$ 
	  and 
	 $ \mu_0^{1/3} \geq K_0$ are satisfied. 
	 We use $ \|\vp  \|_{H^1}^2 \leq 2 \| \vp \|_{L^2} \| \vp \|_{H^2}$
	  to finish the proof. 
\end{proof}

 The rest of this section is organized as follows.
 In Subsection 	  \ref{subsection:apriori} 
 we record 	 a
  collection of estimates that  we repeatedly use in the proofs of Lemma \ref{lemma:removal}
  and Proposition \ref{thm2}. 
  In Subsection  \ref{subsection:removal} we prove Lemma \ref{lemma:removal}, 
  and in Subsection  \ref{subsection:main} we prove Proposition \ref{thm2}
  which contains the main expansion leading to the proof of Theorem \ref{thm:main}.

\subsection{Apriori estimates}
\label{subsection:apriori}

In this subsection, we establish two classes of estimates. 
The first one we refer to as 
 \textit{energy estimates}. They follow from  Lemma  \ref{lem:energy:bounds}
and the analogous bounds for the simplified Hamiltonian.  The second one 
refers to  applications 
of the operator bounds of  Lemma   \ref{lem:operator:bounds}
and  	 \ref{lemma:A}  
to time evolved states.
 
\begin{lemma}\label{lemma:E}
Let $\Psi _t^\flat$ denote either 
$  e^{- i t H_K} \varphi \otimes \Omega $
or 
$e^{- i t H_{\varepsilon , K  }  }    \varphi \otimes \Omega $. 
Then, there are $  C , K_0 > 0$
such that for all
$ \mu \geq K \geq K_0 \ge \ve >0$,  all $t \in \R $
and $ i = 1,2 $
	\begin{align} 
		\label{E1}
		\tag{E1}
		\|    p_i   \Psi_t^\flat \| 			
 & \ 		\leq 			 \ 
		C K^{\frac{1}{2 }} \|  \vp\|_{H^1}  \\
				\label{E2}
		\tag{E2}
		\|   T^{1/2} \Psi_t^\flat \| 			
		 & \ \leq 	 \ 
		C	\mu^{ - \frac{1}{2}}  K^\frac{1}{2} 
		\|  \vp\|_{H^1}  \ . 
	\end{align} 
\end{lemma}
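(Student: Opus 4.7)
The plan is to exploit conservation of energy together with the coercive quadratic form bounds of Lemma \ref{lem:energy:bounds}. Let $H$ denote either $H_K$ or $H_{\ve, K}$, so that $\Psi_t^\flat = e^{-i t H} \vp \otimes \Omega$ and $\langle \Psi_t^\flat, H \Psi_t^\flat \rangle = \langle \vp \otimes \Omega, H \vp \otimes \Omega \rangle$ for all $t\in \R$. The two desired bounds will follow at once by lower-bounding $H$ by a multiple of $P^2+\mu T$ (up to a constant of order $K$) and by controlling the initial energy.

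For the initial energy, I would use that $a_k \Omega = 0$ together with the fact that $\langle \Omega, a_{k_1}^*\cdots a_{k_n}^*\Omega\rangle_\F = 0$ for all $n\ge 1$: this makes every term in $H_K$ (and in $H_{\ve, K}$) that contains a nonzero net number of creation/annihilation operators have vanishing expectation in $\vp \otimes \Omega$. Since in addition $T\Omega = 0$, one is left with
\begin{equation}
\langle \vp \otimes \Omega, H \vp \otimes \Omega \rangle = \|p_1\vp\|^2 + \|p_2\vp\|^2 + E_K + \langle \vp, V_K(x_1-x_2)\vp\rangle.
\end{equation}
The first two terms are bounded by $2\|\vp\|_{H^1}^2$. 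For $K\le \mu$ we have $E_K = 8\pi\mu\ln(1+K/\mu)+e_0 \le C K$. Finally, by Lemma \ref{prop:w:estimate} and the bound $\|V_K\|_{L^2}\le C K^{-1/2}$ from \eqref{VK:L^2}, the last term is bounded by $C\|\vp\|_{H^1}^2$. All together this gives the initial energy bound $\langle \vp \otimes \Omega, H \vp \otimes \Omega\rangle \le C(K + \|\vp\|_{H^1}^2)$.

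For the lower bound, Lemma \ref{lem:energy:bounds} gives directly in the case $H = H_K$ that $H_K \ge \tfrac{1}{2}(P^2+\mu T) - CK$. For $H = H_{\ve,K}$ I would observe that $H_{\ve,K}$ is obtained from $H_K$ by dropping the three terms involving $a^*(kB_{K,i})\cdot p_i + p_i \cdot a(kB_{K,i})$ and $a^*(kB_{K,i})a(kB_{K,i})$, and by replacing $G_{K,i}$ with $G_{\ve,K,i}$; since the former terms were absorbed into the upper bound through estimates \eqref{op2}--\eqref{op3} and since $\|G_{\ve,K,i}\| \le \|G_{K,i}\|$, the proof of Lemma \ref{lem:energy:bounds} applies verbatim to yield $H_{\ve,K} \ge \tfrac{1}{2}(P^2+\mu T) - CK$ as well. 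Combining with unitarity $\|\Psi_t^\flat\| = 1$ and the initial-energy bound, and using $\|\vp\|_{H^1}\ge \|\vp\|_{L^2}=1$, I obtain
\begin{equation}
\|P\Psi_t^\flat\|^2 + \mu \|T^{1/2}\Psi_t^\flat\|^2 \le C\big(K + \|\vp\|_{H^1}^2\big) \le C K \|\vp\|_{H^1}^2,
\end{equation}
from which \eqref{E1} and \eqref{E2} follow by dropping one of the positive terms and taking square roots.

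The only genuine obstacle is the verification of the quadratic-form bound for $H_{\ve,K}$, but this is really just a bookkeeping exercise because $H_{\ve,K}$ differs from $H_K$ only by the omission of terms already controlled in the proof of Lemma \ref{lem:energy:bounds} and by restricting the support of the form factor $G$, neither of which affects the Kato--Rellich-type estimate.
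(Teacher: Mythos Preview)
Your proof is correct and follows essentially the same approach as the paper: both rely on energy conservation together with the quadratic-form bounds of Lemma~\ref{lem:energy:bounds}, and both observe that the same bounds \eqref{op1}--\eqref{op5} yield the analogous two-sided estimate for $H_{\ve,K}$. The paper's version is terser---it simply invokes the upper bound from Lemma~\ref{lem:energy:bounds} to control the initial energy rather than computing $\langle\vp\otimes\Omega, H\,\vp\otimes\Omega\rangle$ term by term as you do---but the substance is identical.
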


\begin{proof}
The energy estimates for $\Psi^\flat_t = e^{- i t H_K} \vp \otimes \Omega$
are a straightforward consequence of Lemma \ref{lem:energy:bounds}. 
Additionally, 	we 
	observe that the bounds
	\eqref{op1}--\eqref{op5}
	imply that 
	the analogous estimate of Lemma \ref{lem:energy:bounds}  holds
	for the simplified Hamiltonian.
	That is 
	\begin{equation}
		\frac{1}{2} \, ( P^2  + \mu T) - C K 
		\   \leq 		 \ 
		H_{\ve,K } 
		\			   \leq  \ 
		\frac{3}{2} \,  ( P^2 + \mu T )  + C  K 
	\end{equation}
	in the sense of quadratic forms on $Q(H_K) = Q (P^2 +  \mu T )$, 
	for an approriate constant $C>0$. 
	In particular, this shows that the  intermediate dynamics is well-defined, and
	that the desired bounds hold for $\Psi_t = e^{- i t H_{\ve, K }} \vp\otimes \Omega$. 
\end{proof}

We can now combine
the energy estimates
with 
the operator bounds in Lemma \ref{lem:operator:bounds} and \ref{lemma:A}. We remind
the reader that $R_{a,b}$
is the
resolvent  defined in Equation \eqref{def:resolv}. 

\begin{corollary}
	Let $\Psi _t^\flat$ denote either 
	$  e^{- i t H_K} \varphi \otimes \Omega $
	or 
	$e^{- i t H_{\varepsilon , K  }  }    \varphi \otimes \Omega $. 
	Then, there are $  C , K_0 >0$
	such that for all
	$ \mu \geq K \ge K_0 \ge  \ve >0$
	and all $t  ,a , b \in \R $ and $ i = 1,2 $
	\begin{align}
		\label{a1}
\tag{a1}
		\|	 a (k B_{ K , i }) \Psi_t^\flat	\| 	
	 & \leq 	\	C	\	   K^{1/2} \mu^{-1 } \|  \vp\|_{H^1}  \\
	 \label{a2}
	 \tag{\rm a2}
				\|	 a (G_{a, b , i }) \Psi_t^\flat		\| 	
		& \leq 	\	C	\	 |b -a |^{1/2}  K^{1/2 }  \mu^{ -1/2 }   \|  \vp\|_{H^1}  \ . 
	\end{align}
Additionally, 
\begin{align}
	\label{A1}
	\tag{A1}
 		\|	  ( N+3)^{-1/2} A_{ K , i }  \Psi_t	^\flat 		\| 
  &  \ 			\leq     \ 
		C K^{ 1/ 2 } \mu^{-1 }  \| \vp\|_{H^1}   \\ 
		\label{A2}
		\tag{A2}
			\|	(N+3)^{-1} \,  A_{K,i } \,   R_{K ,\infty} 	\|  
  &  \ 			\leq     \ 
		C K^{  -1/2 } 	 \mu^{ -1/2 }
		 \\ 
		 \label{A3}
		 \tag{A3}
 			\|R_{K ,\infty}   \,   A_{K,i }^*  \, 	(N+3)^{-1}   \|   
  &  \ 			\leq     \ 
	C K^{  -1/2 } 	 \mu^{ -1/2 } .
\end{align}
\end{corollary}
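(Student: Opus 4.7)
The plan is to combine the a priori energy estimates (E1) and (E2) established in Lemma \ref{lemma:E} with the operator bounds of Lemmas \ref{lem:operator:bounds} and \ref{lemma:A}, together with the explicit form-factor estimates of Lemma \ref{lem:form:factors}. Each of the five inequalities reduces to a one-line application of these ingredients to the state $\Psi_t^\flat$.

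For (a1), I would apply \eqref{af2} with $f = k B_{K,i}$ and $\omega(k)=|k|$, yielding
\[
 \|a(k B_{K,i}) \Psi_t^\flat\| \,\leq\, \|\omega^{-1/2} k B_{K,i}\| \, \|T^{1/2} \Psi_t^\flat\|.
\]
The first factor is bounded by $C\mu^{-1/2}$ via \eqref{BK}, and the second by $C\mu^{-1/2} K^{1/2} \|\vp\|_{H^1}$ via (E2); multiplying produces the claimed $C K^{1/2}\mu^{-1}\|\vp\|_{H^1}$. For (a2), the same strategy works with $f = G_{a,b,i}$; a direct calculation (the $L^\infty L^2$ supremum is trivial since $|G_i(k)|=|k|^{-1/2}$ is independent of $X$) gives
\[
 \|\omega^{-1/2} G_{a,b,i}\|^2 \,=\, \int_{a\leq |k| \leq b} |k|^{-2}\, dk \,\leq\, C|b-a|,
\]
and combining with (E2) yields the stated bound.

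For (A1), I would invoke \eqref{A10} of Lemma \ref{lemma:A} directly with $\Psi = \Psi_t^\flat$, so that $\|(1+N)^{-1/2} A_{K,i} \Psi_t^\flat\| \leq C\mu^{-1/2} \|T^{1/2} \Psi_t^\flat\|$, and then apply (E2). The only cosmetic point is passing from $(1+N)^{-1/2}$ to $(N+3)^{-1/2}$, which is immediate by operator monotonicity since $N+3 \geq N+1$ gives $(N+3)^{-1/2} \leq (N+1)^{-1/2}$. The inequalities (A2) and (A3) are literally \eqref{A20} and \eqref{A30} of Lemma \ref{lemma:A}, with the same harmless replacement.

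Since every ingredient has already been proved, no real obstacle is anticipated. The most delicate point is merely bookkeeping: one must apply \eqref{af2} (which scales with $T^{1/2}$) rather than \eqref{af1} (which scales with $N^{1/2}$), since only the former produces the correct powers of $\mu$ once combined with (E2); the factor $\mu^{-1/2}K^{1/2}$ in (E2), rather than merely $K^{1/2}$, is precisely what delivers the gain in $\mu$ that makes the right-hand sides of (a1) and (A1) vanish as $\mu\to\infty$.
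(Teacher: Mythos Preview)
Your proposal is correct and matches the paper's approach exactly: the paper does not even write out a proof for this corollary, presenting it as an immediate consequence of combining the energy estimates of Lemma~\ref{lemma:E} with the operator bounds of Lemmas~\ref{lem:operator:bounds} and~\ref{lemma:A}, which is precisely what you have spelled out. The only minor imprecision is that in (A2)--(A3) the weight is $(N+3)^{-1}$ rather than $(N+3)^{-1/2}$, so the ``harmless replacement'' is not literally the same as in (A1); but since $(N+3)^{-1}(N+1)^{1/2}$ is a bounded operator, the conclusion follows from \eqref{A20}--\eqref{A30} just as you indicate.
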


\subsection{Removing subleading terms}
\label{subsection:removal}
 
We first remove the difference $H_K - H_{\ve,K}$ from
 the evolution. 
 This way, we get rid of the subleading terms $p\cdot a(kB_K) + a^*(kB_K) \cdot p + a^*(kB_K) a(kB_K)$,  the infrared part of the field operator $\phi (G_{\ve})$,  and the remainder potential  $   V_K$.

\begin{lemma}[Removing lemma]
	\label{lemma:removal}
	There are $C , K_0 >0$ 
such that
for all $ \mu \geq K \ge K_0 \ge  \ve > 0 $ 
	\begin{equation}
	\|   ( e^{- i t H_K }    -  e^{- i t H_{\ve,K } }  ) \varphi \otimes \Omega \|^2 
		 \leq  
 C |t| \Big(	 \,  \frac{ K	 }{\mu^{1/2} } 	 +  \ve^{1/2} K^{1/2} \, 	\Big) \, 
\| \vp \|_{H^1}^2 \ .
\end{equation}
\end{lemma}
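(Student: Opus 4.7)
The plan is to exploit a Gronwall-type identity for the squared norm of the difference, avoiding the more wasteful norm-Duhamel approach that would produce a $|t|^2$ factor. Writing $\Psi_s^K \equiv e^{-isH_K}\vp \otimes \Omega$ and $\Psi_s^\ve \equiv e^{-isH_{\ve,K}}\vp \otimes \Omega$, differentiation of $s \mapsto \| \Psi_s^K - \Psi_s^\ve \|^2$ combined with self-adjointness of $H_{\ve,K}$ yields
\begin{align}
\| \Psi_t^K - \Psi_t^\ve \|^2 \, = \, 2 \int_0^t \Im\, \lsp R \Psi_s^K, \, \Psi_s^K - \Psi_s^\ve \rsp \, ds , \qquad R \equiv H_K - H_{\ve, K},
\end{align}
since the initial data agree at $s = 0$. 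This reduces the proof to a uniform sesquilinear bound on $|\lsp R \Psi_s^K, \Phi \rsp|$ for test vectors $\Phi$ with $\| \Phi \| \le 2$, where both $\Psi_s^K$ and $\Psi_s^\ve$ obey the a priori bounds (E1)--(E2).

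The perturbation $R$ has three qualitatively different pieces: the infrared tail $\sqrt{\mu} \sum_i \phi(G_{0,\ve, i})$, the cross terms $2\sqrt{\mu} \sum_i \big(a^*(k B_{K,i})\cdot p_i + p_i \cdot a(k B_{K,i})\big)$, and the number-like term $2\mu \sum_i a^*(k B_{K,i}) a(k B_{K,i})$. For each one I move every creation operator onto the test vector via $\lsp \Phi, a^*(f)\Psi \rsp = \lsp a(f)\Phi, \Psi \rsp$ and then invoke \eqref{af2} together with the form-factor estimates of Lemma~\ref{lem:form:factors}. Specifically, $\| \omega^{-1/2} G_{0,\ve} \| \le C \ve^{1/2}$ combined with (E2) handles the infrared tail, yielding a contribution of order $\ve^{1/2} K^{1/2} \|\vp\|_{H^1}^2$; for the cross terms, Cauchy--Schwarz with $\| \omega^{-1/2} k B_K \| \le C(\mu + K)^{-1/2}$ from \eqref{BK1} together with the kinetic bound (E1) gives $C K \mu^{-1/2}\|\vp\|_{H^1}^2$; and for the number-like term, pairing both annihilations through (a1) contributes $C K \mu^{-1}\|\vp\|_{H^1}^2$, which is absorbed into the cross-term bound.

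The main obstacle is that the form factor $k B_K$ is \emph{not} in $L^\infty L^2$---the integral defining $\| k B_K \|^2$ is logarithmically divergent at large $|k|$---so the naive creation-operator estimate $\| a^*(f) \Phi \| \le \|f\| \| (N+1)^{1/2} \Phi \|$ is unavailable for the cross and number terms of $R$. The passage to the sesquilinear form described above circumvents this difficulty: only the weaker quantity $\| \omega^{-1/2} k B_K \|$, of order $(\mu+K)^{-1/2}$, ever enters the estimates. Collecting the three contributions yields the pointwise-in-$s$ bound $| \lsp R \Psi_s^K, \Phi \rsp | \le C (K \mu^{-1/2} + \ve^{1/2} K^{1/2}) \| \vp \|_{H^1}^2$, and integrating over $s \in [0,|t|]$ produces the claimed estimate.
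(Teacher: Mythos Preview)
Your proof is correct and follows essentially the same route as the paper's: both arguments differentiate the squared norm to obtain a Duhamel-type identity involving the symmetric difference $R=H_K-H_{\ve,K}$, and then estimate each piece of $R$ sesquilinearly using the energy bounds \eqref{E1}--\eqref{E2} (equivalently \eqref{a1}--\eqref{a2}) together with the form-factor estimates of Lemma~\ref{lem:form:factors}. The only cosmetic difference is that the paper writes the integrand as $\langle \Psi_s^{(K)}, R\,\Psi_s^{(\ve,K)}\rangle$ whereas you use $\langle R\Psi_s^K, \Psi_s^K-\Psi_s^\ve\rangle$; since $R$ is symmetric these are equivalent, and the subsequent splitting of creation and annihilation operators between the two slots is identical. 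One small point of phrasing: when you write ``test vectors $\Phi$ with $\|\Phi\|\le 2$'' you in fact also use the energy control $\|T^{1/2}\Phi\|\le C\mu^{-1/2}K^{1/2}\|\vp\|_{H^1}$ inherited from \eqref{E2} on both $\Psi_s^K$ and $\Psi_s^\ve$; this is implicit in your parenthetical remark but should be stated as a hypothesis on $\Phi$, since a bare $L^2$ bound would not suffice for the $a^*$ terms.
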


\begin{proof}
	Let us denote in the proof 
$
		\Psi_t^{ (\ve, K )} = e^{ - i t H_{\ve,K}} ( \varphi  \otimes \Omega )  
$
and
$
\Psi_t^{(K)} = e^{- i t H_K} ( \varphi  \otimes \Omega ) $.
Then, we have thanks to Duhamel's formula 
	\begin{align}
	 	\|	 \Psi_t^{(K)}   - 		\Psi_t^{ (\ve, K )}	\|^2 
			\notag
& 				 = 
				2 \textrm{Im}
				\sum_{ i = 1 ,2 }
				\int_0^t 				 
\langle
\Psi_s^{(K)}   , 
 \big( 
  \sqrt \mu \,  p_i \cdot a (k B_{K,i }) +  \sqrt \mu  \, a^*(k B_{K,i })  \cdot p_i 
\big)  
 \Psi_{s}^{( \ve, K )}
\rangle ds  \\ 
\notag 
& \quad  + 
2 \textrm{Im}
\sum_{ i = 1 ,2 }
\int_0^t 
\langle
\Psi_s^{(K)}   , 
\mu  \, 
a^* (k B_{K, i }) a (k B_{K,i }) 
\big)   
\Psi_{s}^{( \ve, K )}
\rangle  ds  \\   
 				& \quad   +  
 2 \textrm{Im}
 				\sum_{ i = 1 ,2 }
 \int_0^t 
 \langle
 \Psi_s^{(K)}  , 
 \sqrt \mu \phi(G_{\ve, i }) 
  \Psi_{s}^{( \ve, K )} 
 \rangle ds    \ . 
	\end{align} 
For the first two lines
we use the Cauchy--Schwarz inequality and 
then    \eqref{E1} and  \eqref{a1} 
\begin{align}
\mu^{1/2 } \,  | 	\<  \Psi_s^{(K)}  ,    \, p \cdot   a (k B_K)   \   \Psi_s^{ (\ve ,K )}    \> | 
 &  	\	\leq 	 	\  C 
 \mu^{-1/2 }  K \| \vp\|_{H^1}^2 \\
\mu^{1/2 } \, 
 | 	\<  \Psi_s^{(K)}  ,  \ 
   a^*(k B_K)  \cdot p   
 	 \     \Psi_s^{ (\ve ,K )}    \> | 
 &  	\	\leq 	 	\  C 
 \mu^{-1/2 }  K \| \vp\|_{H^1}^2 \\ 
\mu \, 
 | 	\<  \Psi_s^{(K)} ,   \, 
 a^* (k B_K)  \, a (k B_K)   \  \Psi_s^{ (\ve ,K )}    \> | 
 &  	\	\leq 	 	\  C  
 \mu^{-1 }  K \| \vp\|_{H^1}^2 
\end{align}
independently of the index $i \in \{ 1,2 \}. $
 On the other hand, for the line covering the removal of the infared term
 we use  the Cauchy--Schwarz inequality
 and then 
 \eqref{a2} 
 with $a =  0 $ and $b=\ve$
\begin{align}
\mu^{1/2 } \, 
|  \langle  \Psi_s^{(K)}  ,  \phi (G_\ve )	 \Psi_s^{ (\ve, K )} \rangle  |  
& \ \leq  \ 
C \ve^{1/2 } K^{1/2} \| \vp\|_{H^1} \ .
\end{align}
 This finishes the proof
 after we use $  \mu^{-1 } \leq \mu^{-1/2}$
 and $\| \vp\|_{H^1}\leq \| \vp\|_{H^1}^2$ and gather estimates.
\end{proof}

\subsection{Main expansion}
\label{subsection:main}

We now compare the intermediate dynamics with the effective dynamics. The next statement contains the main cancellation between the two dynamics.

\begin{proposition}
	\label{thm2}
	There are constants $C ,K_0 >0$ 
so that for all $\varphi \in H^2(\mathbb R^6)$ with $\| \varphi \|_{L^2} =1$
	\begin{align}
		\notag 
 		 			\|   ( e^{- i t H_{\ve,K} }    - e^{- i t h^{\rm eff}} ) \varphi \otimes \Omega 
		  \|^2 
		&  	\leq
	C (1+|t| )  
	\|  	\vp	\|_{H^2}
	\Big(		
	\frac{ K }{\mu^{1/2}} 
	\ln( K / \ve ) 
	+
	\ve 		
	+ 
	\frac{1}{K^{1/2}}
	\Big)
	\end{align}
	for all $t\in \mathbb R$ and  $ \mu \geq K \geq K_0 \ge  \ve > 0 $. 
\end{proposition}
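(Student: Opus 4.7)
The plan is to perform a first-order Duhamel expansion and exploit the fast oscillations induced by $\mu T$ via an integration-by-parts argument in the time variable. Writing $\Psi_s := e^{-isH_{\ve,K}}(\varphi\otimes\Omega)$ and $\varphi_s := e^{-ish^{\rm eff}}\varphi$, I will start from
\begin{equation*}
\|(e^{-itH_{\ve,K}} - e^{-ith^{\rm eff}})\varphi\otimes\Omega\|^2 = 2\,\textrm{Im}\int_0^t \bigl\langle \Psi_s,\,(H_{\ve,K} - h^{\rm eff})\,\varphi_s\otimes\Omega\bigr\rangle\, ds.
\end{equation*}
Since $T\Omega=0$, $a(G_{\ve,K,i})\Omega=0$, and $A_{K,i}\Omega=0$, the integrand reduces to a potential piece $\langle \Psi_s, (E_K + V_K + W)\varphi_s\otimes\Omega\rangle$, a one-boson piece $\sqrt\mu\sum_i \langle \Psi_s, a^*(G_{\ve,K,i})\varphi_s\otimes\Omega\rangle$, and a two-boson piece $\mu \sum_i\langle \Psi_s, A_{K,i}^*\varphi_s\otimes\Omega\rangle$. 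The goal is to show that the two bosonic pieces, after stationary phase, essentially cancel $E_K+V_K+W$ up to a controlled error.

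The bosonic pieces will be handled by an integration-by-parts identity that inverts $\mu T$ on non-vacuum sectors using the resolvents $R_{\ve,\infty}$ and $R_{K,\infty}$ of \eqref{def:resolv}. Using $i\partial_s\Psi_s=H_{\ve,K}\Psi_s$, I will set up a schematic identity of the form
\begin{equation*}
\sqrt\mu\, a^*(G_{\ve,K,i})\varphi_s\otimes\Omega \;=\; i\partial_s\bigl[\sqrt\mu\, R_{\ve,\infty} a^*(G_{\ve,K,i})\varphi_s\otimes\Omega\bigr] + \text{(commutator remainders)},
\end{equation*}
and analogously for the two-boson term using $R_{K,\infty}$. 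The boundary contributions at $s=0,t$ are controlled by the bounds \eqref{a1}, \eqref{a2} and \eqref{A1}--\eqref{A3} combined with $\|R_{\ve,\infty}^{1/2}\|\lesssim \ve^{-1/2}$ and $\|R_{K,\infty}^{1/2}\|\lesssim K^{-1/2}$; after Cauchy--Schwarz these give terms that produce the $\ve$ and $K^{-1/2}$ contributions after integration in $t$. The interior time-integrals involve $\partial_s\varphi_s$, which replaces $h^{\rm eff}$ by a $p^2+W$ factor on the right, introducing the $\|\varphi\|_{H^2}$ weight through Lemma \ref{prop:w:estimate}.

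The leading order comes from inserting the vacuum projection $P_{0,\infty}$ inside the resolvents (on the subspace $\ker T$ there are no oscillations). The resulting explicit momentum integral has the schematic form
\begin{equation*}
-\sum_{i,j=1,2}\int_{\ve\leq|k|\leq K}\frac{2\mu\, e^{ik(x_i-x_j)}\, dk}{|k|\,(k^2+\mu|k|)}\ ,
\end{equation*}
whose diagonal $i=j$ contribution reproduces exactly the self-energy $E_K$ of \eqref{E} and so cancels it; and whose off-diagonal $i\neq j$ contribution combines with $V_K(x_1-x_2)$ to reconstruct $W(x_1-x_2)$ via the Fourier identity $\int \mathrm{d}k\, e^{ikx}/|k|^2 = 4\pi^2/|x|$, up to an $L^2$-small remainder handled through Lemma \ref{prop:w:estimate}. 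The orthogonal-to-vacuum residual is the subleading part: it is bounded by $K\mu^{-1/2}\ln(K/\ve)$ using \eqref{BK}, the energy bounds \eqref{E1}, \eqref{E2}, and operator estimates \eqref{a1}, \eqref{A1}--\eqref{A3}. The logarithm $\ln(K/\ve)$ arises from the integral $\int_{\ve}^{K}|k|^{-1}\mathrm{d}k$ that appears in $\|B_{K}\|^2$-type norms.

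The main technical obstacle is the precise bookkeeping of the Coulomb cancellation: one must identify which combinations of boundary and commutator terms reconstruct $E_K + V_K + W$ exactly, and show that the non-commutativity between $H_{\ve,K}$ and the resolvents $R_{\ve,\infty}$, $R_{K,\infty}$ only produces subleading remainders controlled by the $H^2$ regularity of the initial data. A secondary subtlety is that after the first integration by parts one must perform yet another Duhamel expansion to replace the intermediate state $\Psi_s$ by $\varphi_s\otimes\Omega$ inside the leading-order term, and verify that this replacement is absorbed into the subleading error. Gathering all contributions, using $\|\varphi\|_{H^1}^2\leq 2\|\varphi\|_{L^2}\|\varphi\|_{H^2}$ to upgrade the a priori bounds from $H^1$ to $H^2$, and summing over the $s$-integration produces the claimed $(1+|t|)$-linear rate with the three error terms $K\mu^{-1/2}\ln(K/\ve)+\ve+K^{-1/2}$.
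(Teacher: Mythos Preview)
Your overall strategy---Duhamel expansion followed by integration by parts in $s$ to exploit the oscillations from $\mu T$---is exactly the paper's approach, and most of the bookkeeping you sketch is correct. However, there is a genuine gap in your self-energy accounting.

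You claim that the diagonal part ($i=j$) of the one-boson leading term
\[
-\sum_{i,j}\int_{\ve\le|k|\le K}\frac{2\mu\, e^{ik(x_i-x_j)}}{|k|(k^2+\mu|k|)}\,dk
\]
reproduces the self-energy $E_K$ exactly. It does not: it reproduces only $E_K^{(0)}=E_K-e_0$, where $e_0$ is the $\mu$- and $K$-independent constant defined in \eqref{e0}. This leaves an uncancelled $O(1)$ term, and the proof as written fails.

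The missing $e_0$ is generated by the two-boson piece, which you treat as purely subleading. After integrating by parts on $\mu\langle\Psi_s,A_{K,i}^*\,\varphi_s\otimes\Omega\rangle$ with $R_{K,\infty}$, one of the interior commutator terms is
\[
\sum_{i,j}\mu\,\bigl\langle\Psi_s,\;A_{K,j}\,R_{K,\infty}\,A_{K,i}^*\,\varphi_s\otimes\Omega\bigr\rangle\,.
\]
The diagonal ($i=j$) part of $\mu\,A_{K,i}\,R_{K,\infty}\,A_{K,i}^*$ acting on $\varphi_s\otimes\Omega$ lands back in the vacuum sector and is an explicit six-dimensional momentum integral which, after rescaling $k\mapsto\mu k$, converges to $e_0/2$ per particle (this is Lemma~\ref{prop3} in the paper). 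The off-diagonal two-boson term must also be shown to be small; this uses a stationary-phase estimate on the potential $\widetilde W(x)$ appearing in the proof of Lemma~\ref{prop3}.

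Two smaller remarks. First, in the paper the potential $V_K$ is not used to reconstruct $W$: the off-diagonal one-boson term by itself cancels $W$ up to an $O(K^{-1/2})$ error, and $V_K$ is estimated separately as $O(K^{-1/2})$ via $\|V_K\|_{L^2}\le CK^{-1/2}$ and Lemma~\ref{prop:w:estimate}. Second, your ``secondary subtlety'' of performing a further Duhamel expansion to replace $\Psi_s$ by $\varphi_s\otimes\Omega$ inside the leading term is unnecessary: once one proves that $\bigl\|\bigl(\sum_{i,j}a(G_{\ve,K,j})R_{\ve,K}a^*(G_{\ve,K,i})-E_K^{(0)}-W\bigr)\varphi_s\otimes\Omega\bigr\|$ is small (Lemma~\ref{lemma:cancel}), Cauchy--Schwarz against $\Psi_s$ suffices.
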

In order to ease the notation, we 
will drop the superscripts $(\ve, K )$
from the wave-function $\Psi_t^{(\ve,K)}$, 
and we denote by $\Phi_t$ the \textit{effective} wave-function. 
In other words  we let 
\begin{equation}\label{def:Psi:t:Phi:t}
		\Psi_t
		 \equiv e^{ - i t H_{\ve,K}} 
		( \varphi  \otimes \Omega )  
\qquad 
\t{and}
\qquad 
		\Phi_t \equiv e^{- i t \h }( \varphi  \otimes \Omega )    \ . 
\end{equation}

\begin{proof}[Proof of Proposition \ref{thm2}]

	We compare both dynamics with Duhamel's formula 
	\begin{align}
				\notag
		\|    \Psi_t
		 - \Phi_t  \|^2
		& 	 \ = \ 
		2\textrm{Im}
		\int_0^t 
		\langle  
		\Psi_s
		 \, , \, 
		\Big(
		\textstyle 
		\sum_{i=1}^2
		\,   \mu^\frac{1}{2} \, 
		a^* (G_{\ve, K,i }) 
		+  \, E_K^{(0)} \,   +   \, W(x_1-x_2) 
		\Big)
		\Phi_s 
		\rangle ds   \\
				\notag
		&  \quad    \ + \ 
		2\textrm{Im}
		\int_0^t 
		\langle  
		\Psi_s
		, 
		\  \textstyle 
		\big( 
		\sum_{i=1}^2
		\mu  \,   A_{K,i}^* 
		+ e_0 
		\big) 
		\ 
		\Phi_s 
		\rangle   ds \\
		&  \quad     \ + \ 
		2\textrm{Im}
		\int_0^t 
		\langle  
		\Psi_s
		 , 
		V_K(x_1 - x_2 ) \, 
		\Phi_s 
		\rangle  ds  \ ,
		\label{exp} 
	\end{align}
	where we introduce
\begin{align}	
E_K^{(0)} \equiv  E_K - e_0
\end{align}
with $E_K$ defined in \eqref{E}. 
First, note that  $V_K$ can be    controlled with Lemma \ref{prop:w:estimate}
 and $\| V_K\|_{L^2} \leq C K^{-1/2}$ via 
\begin{equation}
   | 			\langle  
	\Psi_s
	 , 
	V_K(x_1 - x_2 ) \, 
	\Phi_s 
	\rangle   | 
	\leq C 
	  \|   \Psi_s
	   \|
	  \|		V_K(x_1 - x_2 ) \, 
	  \Phi_s 	\| 
	  \leq 
	  C K^{- 1/2 } \|   \vp	\|_{H^2} \ ,
\end{equation}
 	In the next two subsections, we study the $a^*$ term and the $A^*$ terms separately in
		 Lemmas \ref{prop1} and \ref{prop2}, respectively.  
		 This gives the desired estimate.
		\end{proof}

\subsubsection{Expansion for the $a^*$ term}
 
We  will prove 
\begin{lemma}
	\label{prop1}
	There are $C , K_0 >0$ 
such that
for all $ \mu \geq K \geq K_0 \geq \ve > 0 $ 
and $t \in \R $
 \begin{align*}
  \bigg| \ 	\int_0^t 
 \langle 
    \Psi_s
     ,
\bigg[  
 \sum_{i=1}^2 
 \mu^{ \frac{1}{2}} a^* (G_{\ve , K ,  i }) 
  +  E_K^{(0)}   &  +  W(x_1-x_2) 
 \bigg] 
 \Phi_s 
 \rangle 
 ds  \  \bigg|   \\
 &  	\leq
 C (1+ | t |) 
 \|  	\vp	\|_{H^2}
 \Big(		
\mu^{-1/2 } K 
 \ln( K / \ve ) 
 +
 \ve 		
 + 
K^{-1/2 }
 		\Big)
 \end{align*}
\end{lemma}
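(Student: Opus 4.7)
My plan is to exploit the fast oscillations induced by $\mu T$ via an integration by parts in time, using an auxiliary form factor reminiscent of the Gross transformation. For $i=1,2$, define
\begin{equation*}
B_{\varepsilon,K,i}(k) \equiv \frac{G_{\varepsilon,K,i}(k)}{\mu|k|+k^2},
\end{equation*}
which (by elementary integration) satisfies $\|B_{\varepsilon,K,i}\| \le C\mu^{-1}\ln(K/\varepsilon)^{1/2}$ and, since $p_i e^{-ikx_i} = e^{-ikx_i}(p_i-k)$, obeys the operator identity
\begin{equation*}
[p_i^2+\mu T,\,a^*(B_{\varepsilon,K,i})] \,=\, a^*(G_{\varepsilon,K,i}) - 2\,a^*(k B_{\varepsilon,K,i})\cdot p_i.
\end{equation*}
This says that, modulo a momentum-dressing correction, $\mu^{1/2}a^*(G_{\varepsilon,K,i})$ is the commutator with the free generator $p_i^2+\mu T$, which is the IBP mechanism alluded to in the outline of the article.

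Integrating $\partial_s\langle \Psi_s, a^*(B_{\varepsilon,K,i})\Phi_s\rangle = i\langle \Psi_s, [H_{\varepsilon,K},a^*(B_{\varepsilon,K,i})]\Phi_s\rangle + i\langle \Psi_s, a^*(B_{\varepsilon,K,i})(H_{\varepsilon,K}-h^{\rm eff})\Phi_s\rangle$ over $[0,t]$ and using that $\langle \Omega, a^*(B_{\varepsilon,K,i})\Omega\rangle = 0$, I would rearrange to isolate $\mu^{1/2}\int_0^t\langle \Psi_s, a^*(G_{\varepsilon,K,i})\Phi_s\rangle\,ds$. The full commutator splits into four contributions: (a) the free-generator piece above; (b) $[\sqrt\mu\,\phi(G_{\varepsilon,K,j}),a^*(B_{\varepsilon,K,i})] = \sqrt\mu\,\langle G_{\varepsilon,K,j},B_{\varepsilon,K,i}\rangle_{L^2(\mathbb{R}^3_k)}$, a c-number multiplication operator in $x_1,x_2$; (c) commutators with $A_{K,j}$ and $A_{K,j}^*$ that vanish identically, because the supports of $kB_{K,j}$ (on $|k|\ge K$) and $B_{\varepsilon,K,i}$ (on $\varepsilon\le|k|\le K$) are disjoint; and (d) $[V_K,\cdot]=0$ trivially. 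This expresses $\mu^{1/2}\int_0^t\langle \Psi_s, a^*(G_{\varepsilon,K,i})\Phi_s\rangle\,ds$ as a boundary term plus the momentum-dressing integral, the c-number integral of (b), and a mismatch integral involving $a^*(B_{\varepsilon,K,i})(H_{\varepsilon,K}-h^{\rm eff})\Phi_s$.

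The key leading-order cancellation is the identification, summing (b) over $i,j\in\{1,2\}$ and pulling out $\mu$,
\begin{equation*}
\mu\sum_{i,j}\langle G_{\varepsilon,K,j},B_{\varepsilon,K,i}\rangle_{L^2(\mathbb{R}^3_k)} \,=\, 8\pi\mu\ln\tfrac{\mu+K}{\mu+\varepsilon} + 2\mu\int_{\varepsilon\le|k|\le K}\frac{\cos(k\cdot(x_1-x_2))}{|k|(\mu|k|+k^2)}\,dk.
\end{equation*}
The first summand differs from $E_K^{(0)} = 8\pi\mu\ln(1+K/\mu)$ by $8\pi\mu\ln(1+\varepsilon/\mu) = O(\varepsilon)$, while the second summand equals $W(x_1-x_2) = 2\int \cos(k y)/k^2\,dk$ plus three controllable pieces coming from (I) the infrared tail $|k|<\varepsilon$, uniformly bounded by $O(\varepsilon)$; (II) the ultraviolet tail $|k|>K$, whose $L^2_y$ norm is $O(K^{-1/2})$; and (III) the $|k|/\mu$ relative correction inside the cutoff range, with $L^2_y$ norm $O(K^{1/2}\mu^{-1})$. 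Lemma \ref{prop:w:estimate}, combined with $H^2$-regularity of $\varphi$, converts the $L^2_y$ bounds into operator estimates against $\Phi_s$, so that the c-number integral cancels the explicit $E_K^{(0)}+W$ up to an error of size $|t|(\varepsilon + K^{-1/2})\|\varphi\|_{H^2}$. I expect this cancellation to be isolated as Lemma \ref{lemma:cancel}.

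The remaining residuals fit within the claimed rate. The boundary term is bounded by $\mu^{1/2}\|B_{\varepsilon,K,i}\|\le C\mu^{-1/2}\ln(K/\varepsilon)^{1/2}$; the $a^*(kB_{\varepsilon,K,i})\cdot p_i$ integral is controlled by Cauchy--Schwarz using $\|\omega^{-1/2}kB_{\varepsilon,K,i}\|\le CK^{1/2}\mu^{-1}$ with \eqref{E1} and \eqref{E2}, producing $O(|t|\,K\mu^{-1}\|\varphi\|_{H^1}^2)$; the mismatch integral is expanded via $(H_{\varepsilon,K}-h^{\rm eff})\Phi_s = \sqrt\mu\sum_j a^*(G_{\varepsilon,K,j})\phi_s\otimes\Omega + \mu\sum_j A_{K,j}^*\phi_s\otimes\Omega + (E_K+V_K+W)\phi_s\otimes\Omega$ and bounded term by term using the operator estimates of Lemmas \ref{lem:operator:bounds} and \ref{lemma:A} together with the a priori bounds \eqref{a1}--\eqref{A3}. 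The main obstacle is achieving the sharp cancellation error in the third paragraph: the Coulomb kernel is borderline singular, so the slow $K^{-1/2}$ decay of the ultraviolet tail and the $\varepsilon$ infrared correction must be handled carefully via Lemma \ref{prop:w:estimate} on $H^2$-regular data, without which the overall rate would be spoiled.
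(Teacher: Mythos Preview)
Your approach is correct and closely related to the paper's, but organized differently. The paper performs the integration by parts by inserting the identity $\tfrac{1}{i\mu}\tfrac{d}{ds}\big(e^{is(\mu T_{\varepsilon,K}+P^2+E_K)}\big)R_{\varepsilon,K}$ on $\mathrm{Ran}\,Q_{\varepsilon,K}$, so that after IBP the key object is the resolvent-dressed state $\xi^{\Phi_s}_{\varepsilon,K}=R_{\varepsilon,K}a^*(G_{\varepsilon,K,i})\Phi_s$; the Coulomb/self-energy cancellation then emerges from $a(G_{\varepsilon,K,j})R_{\varepsilon,K}a^*(G_{\varepsilon,K,i})$ acting on the vacuum, which requires expanding the denominator $(|k|+\mu^{-1}((p_i-k)^2+p_j^2+E_K))^{-1}$ around $(|k|+\mu^{-1}k^2)^{-1}$ (Lemma~\ref{lemma:cancel}). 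You instead build the same one-boson dressing by hand via $a^*(B_{\varepsilon,K,i})$ with $B=G/(\mu|k|+k^2)$ and read off the cancellation directly from the commutator $[\sqrt\mu\,\phi(G_{\varepsilon,K,j}),a^*(B_{\varepsilon,K,i})]=\sqrt\mu\langle G_{\varepsilon,K,j},B_{\varepsilon,K,i}\rangle$, which already equals the right c-number without any denominator expansion. The price you pay is the explicit momentum-dressing remainder $a^*(kB_{\varepsilon,K,i})\cdot p_i$ and the ``mismatch'' integral $a^*(B)(H_{\varepsilon,K}-h^{\rm eff})\Phi_s$, both of which are indeed controlled by the stated estimates---just be sure, in the two-boson piece $\mu\langle\Psi_s,a^*(B)a^*(G)\Phi_s\rangle$, to move $a(G)$ onto $\Psi_s$ (using \eqref{a2}) rather than $a(B)$, since $\|\omega^{-1/2}B_{\varepsilon,K}\|$ is infrared-divergent. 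Your route is slightly more elementary (no resolvent bookkeeping) and makes the origin of $E_K^{(0)}$ and $W$ more transparent; the paper's resolvent formulation is more systematic and reuses the same template for the $A^*$ term in Lemma~\ref{prop2}.
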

	
	\begin{proof}
First, let  us fix $ i \in \{1,2 \}$. 
The proof  relies on cancellations induced by fast oscillations of the unitary $e^{-i H_{\varepsilon , K  }}$. 
To exploit these oscillations, we use integration by parts  
in the time variable 
to obtain an expansion 
for the term involving $a^*(G_{\ve K , i })$.
Namely, 
in view of  $a^* (G_{\ve, K, i  }) \Phi_s \in \t{Ran} \, Q_{\ve, K }$
we  obtain 
		\begin{align}			
	\notag
	& 	\hspace{-1cm}
	 \, 
		\int_0^t 
	 	\langle 
		\Psi_s , 
		a^* (G_{\ve , K, i}) 
		\Phi_s 
		\rangle 
		ds  \\ 
		\notag
		& 
				\hspace{-1cm}
				 =
		\, \frac{1}{i \mu }
		\int_0^t 
		\langle 
		\Psi_s,   
		e^{ - i s ( \mu T_{\ve, K }  + P^2 + E_K  )     }
		\bigg( \frac{\rm d}{{\rm d}s} e^{   i s ( \mu T_{\ve, K }  + P^2+  E_K  )     }
		R_{\ve, K } \bigg)
		a^* (G_{\ve , K, i}) 
		\Phi_s 
		\rangle 
		ds  \\  
		\notag
		&  
				\hspace{-1cm}
				= 
		\frac{1}{i\mu}
		\bigg(	
		\langle \Psi_t ,    R_{\ve,K} 
		a^* (G_{\ve , K, i}) 
		\Phi_t     \rangle  		
		-   \langle \Psi_0 ,   R_{\ve,K} a^* (G_{\ve,K  , i })  \Phi_0    \rangle  		 
		\bigg) \\
		\notag
		&  
			 	\hspace{- 0.8 cm}
			 	 - 
		\frac{1}{ \mu}
		\int_0^t 
		\langle 
		\Psi_s ,  
		\bigg[	(T - T_{\ve, K  } )  
		+
		V_K 
		+
		\textstyle 
		\sum_{ j =1}^2	 \mu^{\frac{1}{2}} \phi(G_{\ve,K , j  })	 
		+
		\mu ( A_{K, j } + A_{K,j }^*)	  		\bigg]
		R_{\ve,K} 
		a^* (G_{\ve , K, i}) 
		\Phi_s
		\rangle  ds  \\ 
		\label{eq1} 
		& 
					 	\hspace{- 0.8 cm}
		+ 	\frac{1}{ \mu}
		\int_0^t 
		\langle 
		\Psi_s , 
		R_{\ve,K}
		a^* (G_{\ve , K, i})  
		h^{\rm eff}   \Phi_s 
		\rangle    ds   \ . 
	\end{align} 
	Throughout the proof, we introduce the following
 notation combined with an estimate
 \begin{equation}
 	\label{xi}
\xi_{a,b}^{\Phi}  \equiv  R_{a,b } a^*(G_{a,b})\Phi
\quad
\t{and}
\quad 
 \|    \xi_{a,b}^{\varphi \otimes \Omega}   \| \leq C   |  \log (b/a)| \|  \varphi   \|  
 \end{equation}
 for some $C>0$ and all $  b \geq a \geq 0 $. 
 
	First, we note that 
	the first line of \eqref{eq1}
	can be estimated with \eqref{xi}
	and is   $O( \mu^{-1 } \ln (K/\ve) )$.
As for the third term, 
		  note by Hardy's inequality $\| W \varphi \| \le C \| 
P   \varphi \| $ we have 
	\begin{align}\label{eq:heff:H2}
		\| h^{\rm eff} \varphi \| \le C \|   \varphi \|_{H^2 } . 
	\end{align}
Thus, this term is $ O ( |t| \mu^{-1} \ln(K/\ve ) \|  P \vp  \|) $. 
	Additionally, letting $\ell \in \{1,2 \}$ be the complimentary 
	particle index to $i $, we have thanks to Lemma \ref{prop:w:estimate}
	and 
	$\| V_K\|_{L^2} \leq C K^{-1/2 }$
	\begin{equation}
		\mu^{-1}
 | \langle 
\Psi_s ,  
\, V_K   \, 
\xi_{\ve , K ,i }^{\Phi_s }
\rangle  | 
\leq    
 C 		\mu^{-1} 
 \|    V_K\|_{L^2} 
 \|  (1 + p_\ell)^2 \xi_{\ve , K ,i }^{\Phi_s }     \|
 \leq
\frac{C \ln (K/ \ve)}{ \mu K^{1/2 }} \| (1 + p_\ell^2)  \vp   \|_{L^2}\ . 
	\end{equation}

Second,  we note that 
the $T - T_{\ve,K }$ term and the $A_K$ term both vanish, as both of these operators commute with $R_{\ve,K} a^*(G_{\ve , K , i} )$. 
We now multiply \eqref{eq1} with $\mu^{1/2}$, 
add the $E_K^{(0)} + W(x_1-x_2)$ terms
and find that  
	\begin{align}
		\bigg|
		& \ 	\int_0^t 
		\langle 
		\Psi_s ,
		\big( 
		\textstyle 
		\sum_{i=1}^2 
		\mu^{ \frac{1}{2}} a^* (G_{\ve , K ,  i }) +  E_K + W(x_1-x_2) \big) 
		\Phi_s 
		\rangle 
		ds   
		 \ \bigg| \notag\\
		\notag
		&   
		\  \le \ 
		\bigg | \ 	\int_0^t 
		\langle 
		\Psi_s ,   
		\, 	\Big( -
		\textstyle 	 \sum_{i , j =1}^2 
		a (G_{\ve , K, j }) R_{\ve,K} a^* (G_{\ve , K, i }) 
		\  + \   E_K^{(0)}  \  +  \ W(x_1-x_2)
		\Big)  \, 
		\Phi_s 
		\rangle 
		ds  
 \ 		\Big|   \\
		&  \quad  \ + \   \bigg|  \ 
		\int_0^t 
		\langle 
		\Psi_s,   
		\textstyle 	 \sum_{i , j =1}^2 
		a^* (G_{\ve , K, j })  \,    R_{\ve,K}   \,   a^* (G_{\ve , K, i })  \,  
		\Phi_s \rangle  ds  \   \bigg|   \notag\\
		\notag
		& \quad   \  + \   \bigg|  \ 
		\mu^\frac{1}{2}
		\int_0^t 
		\langle 
		\Psi_s, 
		\textstyle 
		\sum_{i,j =1 }^2
		\, 	A_{K,j}^*  \, 
		R_{\ve,K }  \,  a^*(G_{\ve,K,i  }) \, 
		\Phi_s 
		\rangle  ds \ \bigg|  \\
		\label{eq11}
		&  \quad \ 	+  \ 	 \frac{C  \ln  (K / \ve)      }{\mu^{1/2} }   
 (1  + | t|     ) \|  \vp \|_{H^2}
	\end{align}
	where we  used  $K \geq K_0$
to absorb  all error terms into \eqref{eq11}.
Notice that a relative  sign $(-1)$  in the first term
has been introduced thanks to integration by parts. 

The first term in \eqref{eq11} is controlled in detail  in  Lemma  \ref{lemma:cancel} below.
The second term of \eqref{eq11} 
can be estimated 
with the Cauchy--Schwarz inequality, 
\eqref{a2} and \eqref{xi}   as follows 
	\begin{align}
	\notag
	\Big| 
	\langle 
	\Psi_s, 
	a^* (G_{\ve,K , j  }) 
	R_{\ve,K } a^*(G_{\ve,K, i  })
	\Phi_s 
	\rangle  
	\Big| 
	&   \ 	\leq  \ 
	\|	 a(G_{\ve,K}) \Psi_s		\| 
	\|	\xi_{\ve,K}^{\Phi_s} 		\|		
	\ \leq \ 
	C    K \mu^{-1/2}
	\ln ( K / \ve )  \|  \vp \|_{H^1 } \ . 
\end{align}
The third term of \eqref{eq11} 
can be estimated with the
Cauchy--Schwarz inequality, 
the pull-through formula for number operators,  
 \eqref{A1}
and \eqref{xi}  
\begin{align*}
\mu^\frac{1}{2}
	\Big|  
	\langle 
	\Psi_s ,  	A_{K , j }^* 
	R_{\ve,K } a^* (G_{\ve , K ,  i }) 
	\Phi_s 
	\rangle  
	\Big|    
& 	\leq 
	\mu^\frac{1}{2}
	\|	  ( N+3)^{- \frac{1}{2}} A_K  \Psi_s	 		\|
	\|	 (N+3) \xi_{\ve,K}^{\Phi_s}  	\|    \\ 
 & 	\leq 
	C
	\frac{ K^{1/2}{   \ln ( K /\ve)}}{\mu^{1/2 } } 
 \|  \vp \|_{H^1} \ . 
\end{align*} 
This finishes the proof after we collect all estimates. 
 \end{proof}

For the next lemma, 
we denote  $W_{12} \equiv W (x_1 - x_2)$ so that 
 \begin{equation} 
\label{W12}
	W_{12}  =  2 \Re \int_{\R^3 } e^{ik (x_1 - x_2)} \frac{dk}{|k|^2}
	\quad
\t{and also recall}
	\quad 
		 E_K^{(0)} =      \int_{ |k| \leq K }
 	 \frac{ 2 dk }{  |k|  (  |k | + \mu^{-1 } k^2 )		} \ . 
 \end{equation}
\begin{lemma}[Cancellation of the effective potential] 
	\label{lemma:cancel}
	There are $C ,K_0 >0$ 
such that
for all $ \mu \geq K \geq K_0 \geq \ve > 0 $ 
and $\phi \in H^2(\R^6)$
\begin{align*} 
 \Big\| 
\Big( 	 
\textstyle 
\sum_{ i  = 1,2 }
   	a   (G_{\ve,K,i }) 
		R_{\ve,K } a^*(G_{\ve,K,i })
  -   E_K^{(0)} \Big)	
 \phi \otimes \Omega  
  \Big\| 
	 & 	\,     \leq  \, 
C
		\Big( 
		 \frac{K \ln (K/\ve )}{\mu }  +        \ve 		
		 	\Big) 
			\|    \phi \|_{H^2}
		 \\
 \Big\|  
\Big(   
\textstyle 
\sum_{ i \neq j }	
a (G_{\ve,K, i  }) 
		R_{\ve,K } a^*(G_{\ve,K, j  })  
  -     W_{12} \Big)	
\phi \otimes \Omega 
   \Big\| 
	& 	\,  \leq   \, 
		C
		\Big( 
				 \frac{K \ln (K/\ve )}{\mu }  +        \ve 		
				 + \frac{1}{K^{1/2}} \bigg) 			\|    \phi \|_{H^2} \ . 
\end{align*}
\begin{proof}   
Let us take $i=1$, as the other component is analogous.  
We start with the following   computation
	which employs standard commutation relations 
	\begin{equation}
			a   (G_{\ve,K, 1 }) 
		R_{\ve,K } a^*(G_{\ve,K,1  })
		 \,  \phi  \otimes \Omega 
		 = 
		 \int_{  \ve \leq |k| \leq K  }
 \frac{ d  k }{ | k | (  | k | + \mu^{-1} ( ( p_1 - k   )^2 +p_2^2 +  E_K    )) }
 \,   \phi  \otimes \Omega 
	\end{equation}
for  any  $\phi  \in H^2 (\R^6)$.  
We proceed to  expand the denominator   as follows 
\begin{align}
	\notag 
&  \frac{1}{ |k |}	 \frac{  1 }{  | k | + \mu^{-1} (( p_1 - k   )^2 + p_2^2 + E_K   )  } 
 \  -  \ 
  \frac{1}{ |k |}
   \frac{  1 }{   | k | + \mu^{-1} |k|^2  }  \\
   \label{eq2}
& 	   
 \ 	   =   \
  \frac{1}{ |k |}
	   	 \frac{  1 }{  | k | + \mu^{-1} (( p_1 - k   )^2 + p_2^2 + E_K  )    } 
		   	 \Big(  		
		   	   \frac{p_1^2 + p_2^2 - 2p_1 \cdot k  + E_K  }{\mu} 		   	   
		   	 	\Big)	   \frac{  1 }{   | k | + \mu^{-1} |k|^2  }   \ . 
\end{align}
Let us estimate the contribution from \eqref{eq2}.
To this end, we bound each denominator by $|k|^{-1}$, 
which gives  
\begin{align}
	\notag 
	\Big\| 
	 			a   (G_{\ve,K, 1 }) 
 	 R_{\ve,K } a^*(G_{\ve,K,1  })
	 \,   \phi &  \otimes \Omega 
 \  -  \ 
	  	 \int_{  \ve \leq |k| \leq K  }
	  \frac{ d  k }{ | k | (  | k | + \mu^{-1} |k|^2    ) }
	  \,   \phi  \otimes \Omega 
	  \Big\|   \\
	  	\notag
	  & \leq 
	  \frac{1}{ \mu }
\int_{\ve \leq |k | \leq K }
 \frac{d k }{ |k |^3 }   
\Big\|
 \Big(  \, p_1^2 	\,	+ 	\,	p_2^2 	\,		+ 	\, E_K 	\,	+ \,2 |p_1 | |k| 	 \,  \Big)
\phi 
\Big\|_{L^2} \\
\label{eq4}
& \leq  
C \frac{   \ln (K/ \ve )}{\mu }
\Big(		  \|  P^2 \phi  \|_{L^2}  
+   
K  \| \phi  \|_{L^2}						\Big)
+ 
   \frac{C K}{\mu }
 \|  P  \phi  \|_{L^2}     \ . 
\end{align}
Here,  we employed 
$ \int_{\ve \leq |k| \leq K }  dk |k|^{-3} \leq C \ln(K / \ve ) $,  
$ \int_{\ve \leq |k| \leq K }   dk |k|^{-2 } \leq C K $
and
$E_K \leq    C K$.  Further note that 
\begin{equation}
	\label{eq5}
	 	  	 \int_{  \ve \leq |k| \leq K  }
	 \frac{ d  k }{ | k | (  | k | + \mu^{-1} |k|^2    ) } 
	  =  \frac{1}{2}
	  E_K^{(0)} 
	  +   O (\ve  ) \ . 
\end{equation}
 In  \eqref{eq4}, \eqref{eq5}
we
  use $ 1 \leq K  $, 
  $ 1 \leq \ln (K / \ve)   $ 
to bound all pre-factors by 
 $ K \ln ( K / \ve) \mu^{ -1 }$.
The proof is finished 
after we  sum over $ i = 1, 2$.

Let us now look at the second part. 
Observe that the different $i \neq j $ contributions are adjoints of each other. 
Thus, a  similar computation  as in the first part  shows 
	\begin{align}
\sum_{ i \neq j }	a &    (G_{\ve,K, i  }) 
 	R_{\ve,K } a^*(G_{\ve,K,  j    })
	\,  \phi  \otimes \Omega  \notag \\
& 	= 
 	\int\limits_{  \ve \leq |k| \leq K  }
2  	 \cos \big(  k(x_1 - x_2)    \big) 
	\frac{  d  k }{ | k | (  | k | + \mu^{-1} (( p_1 - k   )^2 + p_2^2 + E_K  )  ) }
	\,  \phi  \otimes \Omega  \ . 
\end{align}
We can now repeat the same denominator expansion \eqref{eq2}.
Namely, we obtain 
	\begin{align}
	\sum_{ i \neq j }	a   (G_{\ve,K, i  }) 
 	R_{\ve,K } a^*(G_{\ve,K,  j    })
	\,   \phi  \otimes \Omega   
	 \ 	=  \ 
2 \Re 
	\int_{  \ve \leq |k| \leq K  }
	\frac{   e^{i k (x_1 - x_2 )} \, d  k }{ | k | (  | k | + \mu^{-1}|k|^2    ) }
	\,   \phi  \otimes \Omega    \ + \  \mathcal E \ ,
\end{align}
where
 $\mathcal E  \in L^2(\mathbb R^6) \otimes \mathcal F $ 
 satisfies $\| \mathcal E \| \leq C K \ln (K/\ve)\mu^{-1} \|  \phi \|_{H^2}$. 
Next, we eliminate the infrared cutoff   
with the following integral estimates 
\begin{equation}
		\int_{  \ve \leq |k| \leq K  }
	\frac{  	e^{   i k(x_1 - x_2) }  \, d  k }{ | k | (  | k | + \mu^{-1}|k|^2    ) }  =
		\int_{    |k| \leq K  }
	\frac{  	e^{   i k(x_1 - x_2) }  \, d  k }{ | k | (  | k | + \mu^{-1}|k|^2    ) }
	+ O(\ve)  \ . 
\end{equation}
We now extract the leading order dependence in $\mu$ as follows. We decompose  
\begin{equation}
	\label{eq9}
			\int_{    |k| \leq K  }
	\frac{  	e^{   i k(x_1 - x_2) }  \, d  k }{ | k | (  | k | + \mu^{-1}|k|^2    ) }
	 \ =  \ 
			\int_{    |k| \leq K  }
\frac{  	e^{   i k(x_1 - x_2) }  \, d  k }{ |k|^2 }
\ + \ 
\frac{1}{\mu} \, 
			\int_{    |k| \leq K  }
\frac{  	 e^{   i k(x_1 - x_2) }    \,d  k }{  |k| + \mu^{-1} | k |^2 } \  . 
\end{equation}
Finally, 
  each term in \eqref{eq9}
  can be  bounded
with Lemma \ref{prop:w:estimate}
in combination with the    
$L^2$ bounds
\begin{align}
 \bigg\|		
\frac{\1 (|k| \leq K) }{|k|^2}
\bigg\|_{L^2 (\R^3, dk)}
\leq 
\frac{C }{K^{ 1/2 } } \ , 
\qquad 
\bigg\|	
	 \,  \frac{ \1(| k |\leq K )   	 }{ | k | + \mu^{-1} |k|^2 }
	  \, 	\bigg\|_{L^2 (\R^3, dk)}  
	  \leq 
\frac{C  K^{ 1/2 } }{\mu}
\end{align}
We now subtract the two-body potential $W_{12}$ 
with Fourier representation \eqref{W12}
and obtain
\begin{align}
	 \Big\| 
	\Big( 
	2 \Re 
	\textstyle 	\int_{    |k| \leq K  }
	e^{   i k(x_1 - x_2) }   |k|^{-2 } dk  
	- W_{12}
	\Big) 
	\phi  
	\Big\|_{L^2 }
	& \  \leq \ 
\frac{C}{K^{1/2}}  \|   \phi  \|_{H^2} \  \ ,  \\ 
\Big\|		\frac{1}{\mu} \, 
\int_{    |k| \leq K  }
\frac{  	 e^{   i k(x_1 - x_2) }    \,d  k }{  |k| + \mu^{-1} | k |^2 }  \phi  
				\Big\|_{L^2}
				& 			\	\leq  \
				\frac{ C K^{1/2} }{\mu}
				\|	 \phi 	\|_{H^2} 	 \ . 
\end{align} 
This finishes the proof
after we collect all the remainder terms. 
\end{proof}

\end{lemma}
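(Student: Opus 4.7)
My approach would be to reduce each bilinear $a(G_{\varepsilon,K,i}) R_{\varepsilon,K} a^*(G_{\varepsilon,K,j}) \phi \otimes \Omega$ to an explicit momentum integral via the pull-through formula, then expand the resolvent denominator around a $c$-number in $k$ and control the remainder using the $H^2$-regularity of $\phi$.

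Concretely, $a^*(G_{\varepsilon,K,j})\phi \otimes \Omega$ is the one-boson state with wavefunction $|k|^{-1/2} e^{-ikx_j} \phi(X) \mathbf{1}(\varepsilon \leq |k| \leq K)$. Applying $R_{\varepsilon,K}$ (a scalar operator in $k$ on this sector) and conjugating the plane wave past the momentum operators via $p_\ell\, e^{-ikx_j} = e^{-ikx_j}(p_\ell - k\delta_{\ell j})$, followed by pairing against $\overline{G_{\varepsilon,K,i}(\cdot,k)}$, gives the explicit representation
\[
	\int_{\varepsilon \leq |k| \leq K} \frac{e^{ik(x_i - x_j)}}{|k|\bigl(|k| + \mu^{-1}(E_K + (p_j - k)^2 + p_{j'}^2)\bigr)} \, dk \; \phi,
\]
with $\{j, j'\} = \{1,2\}$. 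I would then set $A(k) = |k| + \mu^{-1}|k|^2$ (a $c$-number in $k$) and $B = \mu^{-1}(p_1^2+p_2^2 - 2p_j \cdot k + E_K)$, and apply the resolvent identity $(A+B)^{-1} = A^{-1} - [A(A+B)]^{-1} B$ to split the integrand into a principal part $1/(|k|A)$ and a remainder.

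For the remainder, $[A,B]=0$ (since $A$ depends only on $|k|$), and $A, A+B \geq |k|$ imply $[A(A+B)]^{-1} \leq |k|^{-2}$ as a commuting product of positive operators, hence $\|[A(A+B)]^{-1}B\phi\| \leq |k|^{-2}\|B\phi\|$. Combined with $\|B\phi\| \leq C\mu^{-1}\bigl((1+|k|)\|\phi\|_{H^2} + E_K\|\phi\|\bigr)$ and $E_K \leq CK$ from \eqref{E}, integration yields a remainder of size $C\mu^{-1} K \ln(K/\varepsilon)\|\phi\|_{H^2}$, matching the claimed rate. The principal term $\int dk/(|k|A)$ evaluates explicitly to $4\pi\mu \ln(1+K/\mu) - O(\varepsilon) = E_K^{(0)}/2 - O(\varepsilon)$ in the diagonal case $i=j$, so summing $i=1,2$ reproduces $E_K^{(0)}$ up to $O(\varepsilon)$. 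For $i\neq j$, the principal contribution is $2\mathrm{Re}\int_{\varepsilon \leq |k| \leq K} e^{ik(x_1-x_2)}/(|k|A)\, dk \cdot \phi$, which I would connect to $W(x_1-x_2) = 2\mathrm{Re}\int e^{ik(x_1-x_2)}/|k|^2\, dk$ in three substeps: (a) removing the IR cutoff costs $O(\varepsilon)$ by an $L^\infty$ bound on the integrand; (b) replacing $|k|A$ by $|k|^2$ introduces a difference $-\mu^{-1}/A$ with $L^2(dk)$-norm $O(K^{1/2}/\mu)$ on $|k|\leq K$, contributing $O(K^{1/2}/\mu)\|\phi\|_{H^2}$ via Lemma \ref{prop:w:estimate} (absorbed into $K\ln(K/\varepsilon)/\mu$); and (c) extending the UV integration to $|k|\geq K$ uses $\|\mathbf{1}(|k|\geq K)/|k|^2\|_{L^2(dk)} \leq CK^{-1/2}$, producing the $O(K^{-1/2})$ term again via Lemma \ref{prop:w:estimate}.

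The main obstacle will be the $E_K$-contribution inside $B$. Since $E_K \sim K$ grows with the cutoff, the argument only closes because $E_K$ appears with the crucial $\mu^{-1}$ prefactor in $B$---so that $\mu^{-1}E_K\int dk/|k|^3 \sim \mu^{-1}K\ln(K/\varepsilon)$---together with the sharp bound $E_K \leq CK$. The other delicate step is the UV-tail removal in the $W$ comparison, where the $K^{-1/2}$ rate relies on the full $H^2$ regularity of $\phi$ combined with the $L^2$ bound supplied by Lemma \ref{prop:w:estimate}.
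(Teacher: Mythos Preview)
Your proposal is correct and follows essentially the same route as the paper's own proof: the pull-through computation, the resolvent expansion around the $c$-number $A(k)=|k|+\mu^{-1}|k|^2$ with remainder $B=\mu^{-1}(P^2-2p_j\cdot k+E_K)$, the $|k|^{-1}$ bounds on each denominator, the identification of the principal term with $\tfrac12 E_K^{(0)}+O(\varepsilon)$, and the three-step comparison with $W_{12}$ via Lemma~\ref{prop:w:estimate} are all exactly what the paper does. Your discussion of the $E_K$ contribution and of the $K^{-1/2}$ UV tail matches the paper's estimates line by line.
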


\subsubsection{Expansion  for the $A^*$ term}

We introduce the constant
\begin{align}
	\label{e0}
e_0  = 2 \int_{\R^6} 
\frac{1}{|k_1|^3 (1 + |k_1 |)^2}
\frac{1}{|k_2 |^3 (1 + |k_2|)^2}
\frac{ | k_1 \cdot k_2|^2 }{  |k_1| + |k_2| + (k_1 + k_2)^2 				}   d k_1  d k_2  \ 
\end{align}
and prove the following statement. 
\begin{lemma}
	\label{prop2}
	There are $C , K_0 > 0$ 
such that
for all $ \mu \geq K \geq K_0  \geq \ve > 0 $ 
and $t \in \R $
	\begin{align*}
&  \Big| 
 		   	\int_0^t 
		\langle 
		\Psi_s , 			\big( 	 
		\sum_{i =1 ,2 } 
\mu  A^*_{K, i}  - e_0
\big) 
		\Phi_s 
		\rangle 
	 \, 	ds 
		\Big|  
\leq 
C (1+ |t|) 
	 \, 	\| \vp \|_{H^2} \, 
 \frac{K^{1/2}}{\mu^{1/2}}\ .
	\end{align*}
\end{lemma}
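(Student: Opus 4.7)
The proof follows the same integration-by-parts strategy as Lemma \ref{prop1}, now applied to the interaction term $\mu A^*_{K,i}$, which creates two bosons with momenta $|k|\geq K$ out of the vacuum sector. First, for each $i\in\{1,2\}$ I would use $A^*_{K,i}\Phi_s\in\textnormal{Ran}(Q_{K,\infty})$ to integrate by parts in time with the fast phase generated by $A=\mu T+P^2+E_K$ and the resolvent $R_{K,\infty}$, which satisfies $AR_{K,\infty}=\mu$ on $\textnormal{Ran}(Q_{K,\infty})$. In exact analogy with \eqref{eq1}, this produces a boundary term, an interior contribution $-\int_0^t\langle\Psi_s,(H_{\ve,K}-A)R_{K,\infty}A^*_{K,i}\Phi_s\rangle\,ds$ with $H_{\ve,K}-A=V_K+\sum_j[\sqrt{\mu}\,\phi(G_{\ve,K,j})+\mu(A_{K,j}+A^*_{K,j})]$, and a term $\int_0^t\langle\Psi_s,R_{K,\infty}A^*_{K,i}h^{\rm eff}\Phi_s\rangle\,ds$ estimated using $\|h^{\rm eff}\vp\|_{L^2}\leq C\|\vp\|_{H^2}$.

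The leading contribution, to be matched with the $-e_0$ of the statement, comes from the $\mu A_{K,j}$ piece inside $H_{\ve,K}-A$: it produces $-\mu\sum_{i,j}\int_0^t\langle\Psi_s,A_{K,j}R_{K,\infty}A^*_{K,i}\Phi_s\rangle\,ds$, the direct analog of the $\sum_{ij}a(G_j)R\,a^*(G_i)$ term in Lemma \ref{prop1}. For the diagonal $i=j$, Wick contractions reduce $A_{K,i}R_{K,\infty}A^*_{K,i}$ to a scalar multiplication operator on particle space with kernel $2\mu\iint_{|k_l|\ge K}\frac{(k_1\cdot k_2)^2\,dk_1 dk_2}{|k_1|(k_1^2+\mu|k_1|)^2\,|k_2|(k_2^2+\mu|k_2|)^2\,\mathcal{R}_i[k_1,k_2]}$, where for $i=1$ one has $\mathcal{R}_1[k_1,k_2]=|k_1|+|k_2|+\mu^{-1}(E_K+(p_1-k_1-k_2)^2+p_2^2)$, the shift of $p_1$ by $k_1+k_2$ coming from commuting the phase $e^{-i(k_1+k_2)x_1}$ inside $A^*_{K,1}$ through $R_{K,\infty}$. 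The off-diagonal $i\neq j$ parts give a multiplication operator in $x_1-x_2$ whose Fourier symbol lives at frequencies of order $\mu$, controlled by Lemma \ref{prop:w:estimate} together with an $L^2$ bound that is small in $\mu$.

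The central calculation, and the main obstacle, is the asymptotic matching of the diagonal term to $e_0$. Rescaling $k_j=\mu\ell_j$, the measure contributes $\mu^6$, the product $|k_1|(k_1^2+\mu|k_1|)^2\,|k_2|(k_2^2+\mu|k_2|)^2$ contributes $\mu^{10}|\ell_1|^3(1+|\ell_1|)^2|\ell_2|^3(1+|\ell_2|)^2$, and $\mu\mathcal{R}_i[\mu\ell_1,\mu\ell_2]=\mu[|\ell_1|+|\ell_2|+(\ell_1+\ell_2)^2]+O(1)$, with the crucial $(\ell_1+\ell_2)^2$ arising precisely from the momentum shift. The integrand then converges to the one defining $e_0$ in \eqref{e0}, and the remainder -- produced by the subleading $E_K/\mu$, $p_i/\mu$, $p_i^2/\mu^2$ corrections in the denominator -- is absorbed using (E1)--(E2). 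The remaining interior pieces (the $V_K$, $\sqrt{\mu}\phi(G_{\ve,K,j})$, and $\mu A^*_{K,j}$ terms composed with $R_{K,\infty}A^*_{K,i}$) and the boundary terms are estimated via the operator bounds (a1), (a2), (A2), (A3), the resolvent bound $\|R_{K,\infty}\|\leq K^{-1}$, and $\|V_K\|_{L^2}\leq CK^{-1/2}$; all contributions are bounded by $C(1+|t|)\|\vp\|_{H^2}K^{1/2}\mu^{-1/2}$, yielding the claimed rate.
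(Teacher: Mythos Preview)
Your outline is correct and matches the paper's proof almost exactly: the same integration-by-parts with the phase $\mu T_{K,\infty}+P^2+E_K$ and resolvent $R_{K,\infty}$, the same identification of $\mu\sum_{i,j}A_{K,j}R_{K,\infty}A^*_{K,i}$ as the leading term (the paper isolates this in a separate Lemma~\ref{prop3}), the same rescaling $k_j\mapsto\mu\ell_j$ producing the $e_0$ integrand with the $(\ell_1+\ell_2)^2$ coming from the momentum shift, and the same list of subleading pieces estimated by (a2), (A1)--(A3) and $\|V_K\|_{L^2}$. Two minor technical remarks: the integration by parts is literally done with $T_{K,\infty}$ rather than the full $T$ (the difference $T-T_{K,\infty}$ then appears explicitly and is observed to annihilate $R_{K,\infty}A^*_{K,i}\Phi_s$), and the $a(G_{\ve,K,j})$ piece of $\phi$ vanishes identically because it commutes with $R_{K,\infty}A^*_{K,i}$ (disjoint $k$-supports) and kills $\Phi_s$.

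The one point where your sketch diverges from the paper is the off-diagonal $i\neq j$ contribution. You propose controlling it via Lemma~\ref{prop:w:estimate} and an $L^2$ bound on the resulting potential, on the grounds that its Fourier symbol ``lives at frequencies of order $\mu$''. That heuristic is not quite right: the Fourier variable is $k_1+k_2$ with $|k_i|\ge K$, which can be arbitrarily small when $k_1\approx -k_2$, so the symbol is \emph{not} supported at high frequency and an $L^2$ bound is not immediate (indeed, at $k_1+k_2=0$ the rescaled integrand behaves like $\int_{|\ell|\ge K/\mu}|\ell|^{-3}(1+|\ell|)^{-4}d\ell\sim\ln(\mu/K)$). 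The paper handles this term differently: after rescaling, it integrates by parts once in $k_1$ against the oscillatory phase $e^{i\mu x\cdot(\ell_1+\ell_2)}$ to obtain the pointwise bound $|\widetilde W(x)|\le C(\mu|x|)^{-1}+CK/\mu$, and then closes with Hardy's inequality. Your $L^2$ route could plausibly be pushed through, but it would need a genuine computation rather than the frequency-localization heuristic.
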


\begin{proof}[Proof]
Let $  i  \in  \{ 1 ,2 \}$. An integration-by-parts formula similar to the one used in \eqref{eq1}
implies
\begin{align}
&  	\notag
	 \, 
	\int_0^t 
	\langle 
	\Psi_s ,  
    A^*_{K , i } 	\Phi_s 
	\rangle 
	ds  \\ 
	\notag
	& = 
	\, \frac{1}{i \mu }
	\int_0^t 
	\langle 
	\Psi_s,   e^{ - i s ( \mu  T_{K,\infty  }  + p^2 +  E_K  )     }
	\bigg( \frac{\rm d}{{\rm d} s}
	 e^{   i s ( \mu T_{K,\infty  } + p^2+  E_K  )     }
	R_{\ve, K } \bigg)
    A^*_{K , i }
	\Phi_s 
	\rangle 
	ds  \\  
	& = 
	\frac{1}{i\mu}
	\bigg(	
	\langle \Psi_t ,  
	R_{K, \infty } 
    A^*_{K , i }  	   \Phi_t     \rangle  		
	-   \langle \Psi_0 ,   R_{K, \infty } 
    A^*_{K , i }  \Phi_0    \rangle  		 
	\bigg)
	+ 	\frac{1}{ \mu}
	\int_0^t 
	\langle 
	\Psi_s ,  
	R_{K , \infty }      A^*_{K , i }  	 h^{\rm eff}  \Phi_s 
	\rangle     ds 
		 	\label{A:exp}
	 \\
	&   \ \  - 
	\frac{1}{ \mu}
	\int_0^t 
	\langle 
	\Psi_s ,  
	\bigg[	(T - T_{  K, \infty   } )  +
	V_K + 
\textstyle 	\sum_{ j =1}^2  \mu^{\frac{1}{2}} \phi(G_{\ve,K , j  })	  + \mu ( A_{K, j } + A_{K, j}^*)	  		\bigg]
	R_{ K  , \infty   }  
	    A^*_{K , i } 
	  \Phi_s 
	\rangle ds \notag 
\end{align}
and we proceed by estimating each term separately.

First, we observe that the first three terms in \eqref{A:exp} can be estimated
using  Eq.   \eqref{A3} 
\begin{equation}
	\Big| 	\langle \Psi_t ,  
	R_{K, \infty } 
	A^*_{K , i }  	   \Phi_t     \rangle  		
		\Big| \leq 
		\frac{C}{K^{\frac{1}{2}}\mu^\frac{1}{2}} \ ,  \ 
	\qquad 
	\Big| 		\int_0^t 
	\langle 
	\Psi_s ,  
	R_{K , \infty }      A^*_{K , i }  	 h^{\rm eff}  \Phi_s 
	\rangle  ds 	\Big|  
	\leq 
	\frac{C|t|}{K^{\frac{1}{2}}\mu^\frac{1}{2}}
	\| \vp \|_{H^2}  
\end{equation}
where we also used $\|   h^{\rm eff} \vp   \| \leq  C \|  \vp  \|_{H^2}.  $
Additionally, the potential   $V_K$
can be estimated with  Lemma \ref{prop:w:estimate}. That is, 
letting $ \ell \in \{ 1,2\}$ be the complementary index to $i $, we have
\begin{equation}
	 	\Big| 	\langle \Psi_t ,  
V_ K 
	 R_{K, \infty } 
	 A^*_{K , i }  	   \Phi_t     \rangle  		
	 \Big|
	  \leq 
C	  \| V_K\|_{L^2}
\|  	 R_{K, \infty } 
A^*_{K , i }  	  (1 + p_\ell^2)   \Phi_t     \|
\leq 
	 \frac{C}{K \mu^\frac{1}{2}}  \|   \vp \|_{H^2}
\end{equation}

Secondly, we now turn to study the remainding terms in the second line of \eqref{A:exp}. 
Noting that the contribution $ T - T_{K,\infty}$ vanishes, we turn to study
the following four contributions, 
which we obtain after summing over $ i \in \{ 1,2 \}$
\begin{align}
	\notag 
	\A (t)  &   \equiv 
		 {      \sum_{  i, j =1}^2   	\int_0^t  	} 
		  \ 	 \mu^{\frac{1}{2}}  \ 
	\langle 
	\Psi_s ,  
 	\, a(G_{\ve,K , j  })	 \, 
	R_{ K  , \infty   }   \, 
	A^*_{K , i } 
	\Phi_s  
	\rangle \   ds   \\
	\notag 
	& \quad   \,   +  
	 {      \sum_{ i,  j =1}^2   	\int_0^t  	} 
	  \  	 \mu^{\frac{1}{2}}  \ 
\langle 
\Psi_s ,   	\, a^* (G_{\ve,K , j  })	 \, 
R_{ K  , \infty   }   \, 
A^*_{K , i } 
\Phi_s  
\rangle \   ds   \\
\notag 
	&  \quad    \,  + 
 {      \sum_{  i , j =1}^2   	\int_0^t  	} 
\ 	 \mu  \ 
\langle 
\Psi_s ,   
A_{K, j} \, 
R_{ K  , \infty   }   \, 
A^*_{K , i } 
\Phi_s  
\rangle  \  ds   \\
	\label{A}
	&  \quad    \,   +
 {      \sum_{ i,  j =1}^2   	\int_0^t  	} 
 \ 	 \mu  \ 
\langle 
\Psi_s ,   
A^*_{K, j} \, 
R_{ K  , \infty   }   \, 
A^*_{K , i } 
\Phi_s  
\rangle  \  ds  \ . 
\end{align}
Note that the  first  term in \eqref{A} is zero, 
as $ a (G_{\ve, K ,j})$ commutes with $A^*_{K,i}$  and $ a (G_{\ve, K ,j}) \Phi_s = 0$. The second term in \eqref{A}
can be estimated with  
\eqref{a2}
and
\eqref{A3}
	\begin{align*} 
	\Big|   
	\, 	\langle 
	\Psi_s , 		a^* (G_{\ve,K })
	R_{ K  , \infty   }  A_K^*  \Phi_s  
	\rangle \,  
	\Big|       
	\ \leq \   2 \, 
	\|					a(G_{\ve,K }) \Psi_s		\|   \, 
	\|    R_{K, \infty} A_{K}^* ( N +2)^{ -1 } \|  
	& \  \leq 		 \ 
	C   K^{1/2} \mu^{ -  1  }
\end{align*}
The third term in \eqref{A} together with the self-energy contribution $e_0$ is estimated in detail in
Lemma \ref{prop3} below.
Finally, the fourth term in \eqref{A}
can be estimated 
with  \eqref{A1} and \eqref{A3}  
\begin{align*}
	\big|  	\langle   \Psi_s, 
	A_K^* R_{K, \infty} A_K^* \Phi_s   \rangle 
	\big|    \ 	\leq  \ 
	C \, 	\|	 ( N  +2)^{- \frac{1}{2}} 
	A_K \Psi_s 	\|
	\|	 R_{ K , \infty} 	A_K^* \Phi_s 	\|  
	\   \leq  \ 
	C  \mu^{ - 3/2} \ . 
\end{align*}  
This finishes the proof  after we account for the 
additional $\mu^{1/2}$
and $\mu$ factors from \eqref{A}.
\end{proof}

Let us now turn to study the third term of \eqref{A}, which gives rise to the self-energy contribution $e_0$. After subtracting $e_0$, we denote this term by $S$ and decompose it into
\begin{align} 
S 
	& = 
	\int_0^t  \langle 
	\Psi_s ,
\Big(    \sum_{i=1,2} 
 \mu  A_{K,i}   R_{ K  , \infty   }  A_{K,i}^*  - e_0 \Big) 	  \Phi_s
	\rangle  ds \notag\\
	& \qquad  \qquad + 
	\int_0^t 
	\mu \langle 
	\Psi_s ,
	 ( A_{K,1}   R_{ K  , \infty   }  A_{K,2}^* + 
	 {\rm h.c. } ) 	\Phi_s
	\rangle ds  ,
	\label{eq:new:scalar}
\end{align}
where we call the first integral $S_{\rm diag}$, 
 and the second integral $S_{\rm off}$. 
We prove the following estimate.

\begin{lemma} 
	\label{prop3}
There is $C , K_0>0$
such that for all  $\mu \geq K \geq K_0 \geq \ve >0$
and $ t \in \R $
	\begin{equation}
		|S| \leq 
\frac{C|t | K }{\mu}
 \|  \vp \|_{H^2} \ . 
	\end{equation}
\end{lemma}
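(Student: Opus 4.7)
The plan is to use Wick's theorem to reduce the operator $A_{K,i} R_{K,\infty} A_{K,j}^*$ acting on $\Phi_s = \phi_s\otimes\Omega$ to an explicit multiplication operator on the particle sector, and then analyze the diagonal and off-diagonal parts separately. Since $\Phi_s$ lives in the vacuum sector and $A_{K,j}^*$ creates precisely a symmetric two-boson state supported on $\{|k_1|,|k_2|\ge K\}$, the resolvent $R_{K,\infty}$ acts in the two-boson sector as multiplication by $(|k_1|+|k_2|+E_K+P^2/\mu)^{-1}$. Passing the phase factor $e^{-i(k_1+k_2)x_j}$ generated by $A_{K,j}^*$ through the resolvent by the identity $e^{iq x_j} p_j^2 e^{-iq x_j} = (p_j-q)^2$, and contracting with $A_{K,i}$, one obtains
\begin{equation*}
A_{K,i} R_{K,\infty} A_{K,j}^* \phi_s \otimes \Omega = F_{i,j}(P)\phi_s \otimes \Omega,
\end{equation*}
where, with $q = k_1+k_2$ and $\ell\neq j$,
\begin{equation*}
F_{i,j}(P) = 2 \int\limits_{|k_1|,|k_2|\ge K} \frac{(k_1 \cdot k_2)^2 \, e^{iq\cdot(x_i-x_j)}\, dk_1\, dk_2}{|k_1|^3|k_2|^3(|k_1|+\mu)^2(|k_2|+\mu)^2\,\big(|k_1|+|k_2|+E_K+((p_j-q)^2+p_\ell^2)/\mu\big)}.
\end{equation*}

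For the diagonal contribution ($i=j$) the exponential is unity, and the substitution $k_a = \mu \ell_a$ supplies exactly the right powers of $\mu$ so that $\mu F_{i,i}(0)$ converges to $e_0/2$ as $\mu\to\infty$; summing $i=1,2$ reproduces the $e_0$ subtracted in \eqref{eq:new:scalar}. Three sources of error arise: (i) restoring the integration from $|\ell_a|\ge K/\mu$ to $\mathbb R^6$ gives $O((K/\mu)^2)$; (ii) removing $E_K=O(K)$ from the denominator via the second resolvent identity costs $O(K\ln(K/\mu)/\mu)$; and (iii) replacing the full denominator $D_j=|k_1|+|k_2|+E_K+((p_j-q)^2+p_\ell^2)/\mu$ by its $P$-independent version $|k_1|+|k_2|+|q|^2/\mu$ produces a cross-term $-2p_j\cdot q/\mu$. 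The last is the delicate point: combining the positivity $|k_1|+|k_2|+|q|^2/\mu \ge c\mu(|\ell_1|+|\ell_2|+|\ell_1+\ell_2|^2)$ after rescaling with a Cauchy--Schwarz splitting $2|p_j\cdot q|\le \delta|q|^2 + \delta^{-1}p_j^2$ transfers an extra factor of $p_j^2$ onto $\phi_s$ and yields an operator norm bound of order $K/\mu$ on $H^2$ data.

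For the off-diagonal contribution ($i\neq j$) the oscillation $e^{iq(x_i-x_j)}$ prevents the scaling cancellation, but it identifies $F_{i,j}$, after again replacing $D_j$ by its $P$-independent version at cost $O(K/\mu)$, with a two-body multiplication operator $v_K(x_i-x_j)$. Applying Lemma \ref{prop:w:estimate} gives
\begin{equation*}
|S_{\rm off}| \le C|t|\,\mu\, \|v_K\|_{L^2}\,\big(\|\Psi_s\|_{H^1}^2 + \|\Phi_s\|_{H^1}^2\big).
\end{equation*}
A direct Plancherel calculation with the substitution $k_a=\mu\ell_a$, $q=\mu q'$ yields $\|v_K\|_{L^2}=O(\mu^{-5/2})$; together with the energy estimate \eqref{E1}, providing $\|\Psi_s\|_{H^1}^2 \le CK\|\varphi\|_{H^1}^2$, and the analogous bound for $\Phi_s$ from $H^1$-regularity of $e^{-ith^{\rm eff}}$, this gives $|S_{\rm off}| \le C|t|K\mu^{-3/2}\|\varphi\|_{H^2}$, which is subleading to $K/\mu$.

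The main obstacle is step (iii) in the diagonal estimate: the cross-term $-2p_j\cdot q/\mu$ is linear in the unbounded integration variable $q$, so a naive expansion would lose factors of $K$ or $\mu$. The critical-scale positivity of the rescaled symbol is precisely what allows this term to be absorbed, at the price of transferring an extra derivative onto $\phi_s$; this is the structural reason the bound in the statement is expressed in the $H^2$-norm of the initial datum.
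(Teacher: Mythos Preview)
Your overall strategy is correct, and your reduction of $A_{K,i}R_{K,\infty}A_{K,j}^*$ on $\phi_s\otimes\Omega$ to the explicit kernel $F_{i,j}(P)$ is the same computation the paper performs via the pull-through formula. The diagonal analysis --- rescale $k_a=\mu\ell_a$, identify $e_0/2$, and control the errors from the cutoff and from replacing $D_j$ by its $P$-independent part --- matches the paper's argument for $S_{\rm diag}$.

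For the off-diagonal term your method is genuinely different. After the same denominator replacement, the paper does \emph{not} estimate $\|v_K\|_{L^2}$. Instead it rescales $k_a=\mu\ell_a$, integrates by parts once in $\ell_1$ against the oscillating phase $e^{i\mu(x_1-x_2)\cdot(\ell_1+\ell_2)}$ to obtain the pointwise bound $|\widetilde W(x)|\le C(\mu|x|)^{-1}+CK\mu^{-1}$, and then applies Hardy's inequality. Your route via Plancherel and Lemma~\ref{prop:w:estimate}, giving $\|v_K\|_{L^2}=O(\mu^{-5/2})$, is more direct and even yields the better rate $K\mu^{-3/2}$ for that piece; it also avoids the verification that $\nabla_{\ell_1}w\in L^1(\mathbb R^6)$ needed for the stationary-phase step. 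Either way the bottleneck is the common denominator-replacement error, which sets the overall rate $K/\mu$ in the lemma.

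One expository point: in your error (iii) the claimed ``positivity'' is just the rescaling identity and carries no information, and the Cauchy--Schwarz splitting is not the most transparent way to handle the cross-term. The elementary inequality $|q|=|k_1+k_2|\le|k_1|+|k_2|$ gives
\[
\frac{2|p_j\cdot q|}{\mu(|k_1|+|k_2|)+q^2}\le\frac{2|p_j|}{\mu}
\]
directly; combined with $\int\|g(k_1,k_2)\|\,dk_1dk_2\le C/\mu$ (this is \eqref{eq:scalar:term:integral} in the paper) this yields the claimed $O(K/\mu)$ bound on $H^2$ data without further work. Your error (ii) should also read $O(K/\mu)$, not $O(K\ln(K/\mu)/\mu)$ --- no logarithm appears here since $|k_1|+|k_2|\ge 2K$ already absorbs the $E_K=O(K)$ in the numerator.
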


\begin{proof}
For the proof, we estimate the off-diagonal and diagonal terms separetely. 
The proof of the lemma follows
from the estimates \eqref{S:off} and \eqref{S:on}
combined with 
Hardy's inequality and energy conservation for
$\phi = e^{- i s h^{\rm eff}}\vp $, i.e. 
$\|   \phi \|_{H^2} 
\le
C  \| h^{\rm eff} \phi \|_{L^2} 
=
C
 \| h^{\rm eff} \varphi \|_{L^2} \leq C  \| \vp\|_{H^2}$.
In what follows, we will be using the notation
\begin{equation}
	F(k) =  
	\frac{\textbf{1}_{|k | \geq K }   \, k }{ |k|^{ \frac{1}{2}}  ( \mu |k| + k^2) } \in \R^3 \  ,  \qquad k \in \R^3 \ . 
\end{equation}
Note that $k B_{K, i } (k) = e^{-ik x_i }F(k)$. 

\smallskip 

 {\textit{The off-diagonal terms.}}
Let $\phi \in H^2(\R^6)$.
Using the pull-through formula, 
we compute 
 \begin{align}
\label{trace}
  	 A_{K,1}   R_{K , \infty} A_{K,2}^*
  	  \   \phi \otimes \Omega 
  	   \notag  
	\notag
	&   = 
	\int 
	e^{i (x_1 - x_2) (k_1 + k_2)}  
	\frac{    2 \,   | F(k_1) \cdot F(k_2) |^2   }{ 
		|k_1|
		+
				|k_2| +  \frac{ E_K  + p_1^2  +  (p_2 - k_1 - k_2 )^2  }{\mu } }
	\, d k_1 d k_2  		   \phi \otimes \Omega 
			 \\
	 & 
	\equiv 
	 \int 
 e^{i (x_1 - x_2) (k_1 + k_2)}  g( k_1 , k_2)  d k_1 d k_2  \ \   \phi \otimes \Omega   \  , 
\end{align}
where    $g(k_1 , k_2)$ is understood as  
a bounded operator on $L^2(\R^6)$, 
and we note that the state $A_{K,1}   R_{K , \infty} A_{K,2}^*\phi \otimes \Omega $ still belongs to $\text{ker}(T)$, i.e. no real bosons are created.
Next, observe that by comparison of the denominators 
\begin{align}
	g(k_1, k_2) 
	 \    -  \ 
	     \frac{ 2  |   F (k_1) \cdot F(k_2)   |^2 }{
	   		|k_1|
	   	+
	   	|k_2|  + \frac{(k_1 + k_2)^2}{\mu} }
 \ 	   =  \ 
 g(k_1  , k_2)  	
 \bigg(  
  \frac{
	 		 E_K  + p_1^2 +p_2^2  - 2 p_2 \cdot (k_1+k_2)   }{ \mu 	|k_1|
	 		 +
	 		 \mu  |k_2|  +  {(k_1 + k_2)^2}  } 		\bigg)  \ . 
\end{align}
Let us note that  $g(k_1, k_2)$  is supported
on a set where  both $|k_1| \geq K \geq K_0 $ 
and $|k_2| \geq K \geq K_0 $. 
Additionally, note   $ 0 < E_K  \leq C K $.
Hence, we obtain the estimate
\begin{align}
	 \bigg\|
	 \Big(  
	 	g(k_1, k_2) 
	 - 
	 \frac{ 2  |   F (k_1) \cdot F(k_2)   |^2 }{
	 	|k_1|
	 	+
	 	|k_2|  + \frac{(k_1 + k_2)^2}{\mu} }
 	\Big) \phi \otimes \Omega 
	 \bigg\|
	 \leq 
	 \frac{C  K }{\mu} \|	 g(k_1 , k_2)		\| \|  \phi \|_{H^2} \ . 
\end{align}
On the other hand, thanks to 
$	|k_1|
+
|k_2|   \ge |k_1|^{1/2} |k_2|^{1/2 }$, 
we have the following
inequality 
\begin{align}\label{eq:scalar:term:integral}
	\int _{\R^6}  \| g(k_1,k_2)  \| d k_1  d k_2 
	\  \le  \ 
\frac{1}{\mu} \, 
	\bigg[   \int_{|k |\geq \frac{K}{\mu }}     \frac{ |k|^{1/2} d k }{( k^2 + |k|)^2}  \bigg]^2 
	\le \frac{C }{\mu}
\end{align}
Multiplying by $\mu$, this leads to 
\begin{align}
\label{eq12}
\bigg\|   
\bigg(  	 \mu  \,   	 A_{K,1}   R_{K , \infty} A_{K,2}^*
	 -  
	  \int _{\R^6 }
	 \frac{	    e^{i (x_1 - x_2) (k_1 + k_2)}     |   F (k_1) \cdot F(k_2)   |^2 }{	 \mu |k_1|
	 	+
	 \mu 	|k_2|  +  {(k_1 + k_2)^2} }  d k_1 d k_2  
 \bigg)  
  \phi  \otimes \Omega  
	 \bigg\| 
	\leq 
	\frac{C  K }{\mu} \|  \phi\|_{H^2} \ . 
\end{align}
Let us now    introduce the following auxiliary potential, and use the change 
of variables $k_i \mapsto \mu k_i$  
\begin{align}
	\notag 
	\widetilde W(x) 
	  &  \, 
	  \equiv  \,     
	 	    	 \int_{\R^6} 
	 \frac{	    e^{i x   (k_1 + k_2)}     |   F (k_1) \cdot F(k_2)   |^2 }{ 
	  \mu |k_1|
	 +
	 \mu 	|k_2|  +  (k_1 + k_2)^2	  	 }  d k_1 d k_2  \\
 \notag 
	  &  \, = \,     
	 	    	 \int_{  | k_1| , |k_2|  \geq  K / \mu 		} 
e^{ i  \mu x (  k_1 + k_2  ) }
  	w( k_1 , k_2) d k_1 d k_2  \\
  	 &  \, = \,     
  	\int_{\R^6} 
  	e^{ i  \mu x (  k_1 + k_2  ) }
  	w( k_1 , k_2) d k_1 d k_2 
  	+ O( K /\mu )^2 
  	\label{eq10}
\end{align}
where
\begin{equation}
 w( k_1 , k_2) 
 \equiv 
 \frac{1}{| k_1|^3 (1 + | k_1 |)^2}
  \frac{1}{| k_2 |^3 (1 + | k_2|)^2}
   \frac{  (  k_1 \cdot k_2)^2 }{  | k_1| + | k_2 | + ( k_1 + k_2)^2 				} \  , 
\end{equation}
The $O(K/\mu)^2$ term in \eqref{eq10} comes from the small integrals $0 \leq |k_i | \leq \frac{K}{\mu}$ and is uniform in $x \in \R^3$.
Next, we exploit 
 the rapidly oscillating phase in the 
exponential. 
 That is, for $x \neq 0 $ we get
that for each vector $a \in \R^3 $
not orthogonal to $x$
\begin{align}
	\widetilde W(x) = \frac{  i }{  \mu ( a \cdot x ) } \int_{\R^6 }
	e^{i \mu x ( k_1 + k_2)} 
(	a \cdot \nabla_{k_1} ) w ( k_1 , k_2) d k_1  d k_2  
  	+ O( K /\mu )^2  
\end{align}
One may verify that\footnote{Indeed, a somewhat
	lengthy  calculation shows   
$
	|\nabla_{ k_1} w  (k_1 , k_2)		|		 
	\leq 
	8 |k_1|^{-  \frac{5}{2 }} |k_2|^{- \frac{3}{2}}
	(1 + | k_1 | )^{-2} 
	(1 + | k_2 | )^{-2}  
$
.}
  $	\|		 \nabla_{ k_1 } w 	\|_{L^1(\R^6)} <\infty  $. 
Thus,   choosing $a =x $ we obtain 
\begin{equation}
 |\widetilde W (x)| \leq \frac{C}{\mu |x|} + \frac{C  K}{\mu } \ , \qquad |x |> 0 \ . 
\end{equation}
Finally,  the potential $|x|^{-1}$ can be controlled 
with the $H^1 $ norm thanks to Hardy's inequality.
We arrive at $\|  \widetilde W (x_1  - x_2) \phi \|_{L^2} \leq C \frac{K}{\mu} \|  \phi\|_{H^1}$. 
Combined with \eqref{eq12} we obtain 
\begin{equation}\label{S:off}
 \|	 \mu \,   A_{K,1} R_{K, \infty} A_{K,2}^*
	\   \phi   \otimes \Omega  \| 
	\leq \frac{C K }{\mu} \| \phi \|_{H^2} \ .
\end{equation}
The same estimate holds if we reverse the roles of $ i= 1 $ and $ i =2 $. 

 \textit{{The diagonal terms.}}
Since the analysis is analogous to the off-diagonal terms  (but simpler), we provide only some of the details. 
We keep the same notations for $F(k)$, $g(k_1,k_2)$
and $w (k_1 , k_2) $. 
Similarly as in \eqref{trace},  
we compute for  $ \phi \in H^2(\R^6)$
 \begin{align}
  A_{K,1 }  R_{K , \infty} A_{K,1 }^*    \  
   \phi \otimes\Omega 
 	  	\notag
         = 
         \int_{\R^6}  g(k_1, k_2) d k_1 d k_2 
         \     \phi \otimes\Omega 
 \end{align}
Note that this is still a bounded operator on $L^2(\mathbb R^6)$. 
We can repeat the same argument 
leading to \eqref{eq12}
and obtain
\begin{equation}
	\bigg\|   
 \bigg(  
  \mu 	A_{K, 1} R_{K , \infty} A_{K,1}^* 
 - 
	         \int    
	 \frac{ 2  |F (k_1)|^2 |F(k_2)|^2  \, d k_1 d k_2   }{ 	 \mu |k_1|
	 	+
	  \mu 	|k_2|  +   (k_1 + k_2)^2  } 
  \bigg)  
	 \phi  \otimes \Omega  \bigg\|  
\leq  \frac{C K }{\mu} 	\|  \phi \|_{H^2}
\end{equation}
We  re-scale $k_i = \mu^{-1} k_i $
and use a simple integral estimate
to obtain
\begin{equation}
	     \int  _{\R^6}
	\frac{ 2  |F (k_1)|^2 |F(k_2)|^2  \, d k_1 d k_2   }{ 	
		\mu |k_1|
		+
	 \mu 	|k_2| +    (k_1 + k_2)^2  }  
	 = 
	 \int_{\R^6} w( k_1 , k_2) d k_1 d k_2 
	 + 
	 O(K/ \mu)^2 \ . 
\end{equation}
Note that the first term on the right-hand side 
coincides with the energy term $\frac{1}{2}e_0$ defined in \eqref{e0}.
Thus, we arrive at 
\begin{align}
	\label{S:on}
	 	\bigg\|   
	 \bigg(  
	 \mu 	A_{K, 1} R_{K , \infty} A_{K,1}^* 
	 - 
	 \frac{e_0}{2}
	 \bigg)  
	 \phi  \otimes \Omega  \bigg\|  
	 \leq  \frac{C K }{\mu} 	\|  \phi \|_{H^2}
\end{align}
and the same bound holds if we exchange    $i =1 $ with $i=2$. 
\end{proof} 
 
%

\appendix

\section{Proof of Proposition \ref{prop:ren:Nelson} and Lemma \ref{lemma:dressed:Nelson}}

The aim of this appendix is to provide the main steps in the proofs of Proposition \ref{prop:ren:Nelson} and Lemma \ref{lemma:dressed:Nelson}. 
Let us recall that $E_\Lambda  = E_\Lambda^{(0)} + e_0$
where $e_0$ is a numerical constant defined in \eqref{e0}.  
For $f, g \in L^{\infty} L^2$
and $X \in \R^6 $
we abuse notation and denote
$\< f, g\> \equiv    \int_{\R^3} \overline f ( X, k ) g(X, k ) dk $
as well as $\| f   \|_{L^2}^2  \equiv  \<  f, f\>$. 
That is, we omit the $X$ dependence in the notation.

First, we introduce the version of \eqref{eq:Gross:trafo}
	with a  UV cutoff $\Lambda \geq K>0 $ 
	\begin{align}
		\notag 
		U_{K,\Lambda} = \exp\Big(  \mu^{1/2} \sum_{i=1,2 } (  a^*(B_{K,\Lambda,i}) - a( B_{K,\Lambda,i})) \Big)    \quad \text{with} \quad B_{K,\Lambda,i }(k) = B_{K,i}(k) \1_{|k|\le \Lambda} \ . 
	\end{align}
We ease the 
	notation and write $ U \equiv U_{K, \Lambda}$
	and $B_i   \equiv B_{ K, \Lambda, i} $.  
	Following  \cite{Griesemer18} 
	we compute the action of $U $
	on each term of the Hamiltonian $H^{\rm N}_\Lambda$ defined in \eqref{eq:HN}
	with the aid of the   relations  
	\begin{align}
	\notag 	U    \  p^2_i   \   U ^* & 
		\ =  \  p_i^2 
		+ 
		\sqrt \mu     \, 
		\big[ 
		2 a^*(k B_{  i } ) \cdot  p_i
		+ 
		2 p_i  \cdot a( k B_i ) 
		-
		\phi(k^2  B_{  i } ) 
		\big] \\
		\label{rel1}
		& \quad  + 
		\mu   \,  
\big[
	2   a^*(k B_{  i})
a(k B_{K   i})
+     a(k B_{  i})^2
+ 
a^*(k B_{  i})^2 + 		    \| 	k B_{  i  }  \|^2_{L^2}
\big]
   \\[1mm]
   \label{rel2}
		U   \ d\Gamma(| k |) \   U ^* 
		& \  = \ 
		d\Gamma(| k |) -  \sqrt \mu    \, 
		\phi( | k | ( B_{ 1} +B_2 )  ) 
		+  \mu  
		\|  | k |^{1/2}  ( B_{ 1 }  + B_{2})  \|^2_{L^2}      \\[1mm]
		U  \   \phi (G_{\Lambda,i})  \ U^* & 
		\ = \ 
		\phi(G_{\Lambda,i} ) -  
		2 \, 
		\Re \,     \sqrt \mu  \, 
		\<  G_{\Lambda  , i } ,  (		 B_{1  }	 + B_{2 })		 \>  \label{rel3}
	\end{align}
where we used 
$  \Re \< B_1 + B_2 , - i \nabla_1 (B_1 + B_2 ) \>  =0 $. 
	We can now collect 
	all the scalar contributions
	\begin{align*}
		\mu  	\textstyle 
		\sum_{i=1,2}  		    \| k B_{  i} \|^2_{L^2}  
		 \ + \ 
		 \mu^2  \, 
		 \| | k |^{ \frac{1}{2}} 
		& 	(B_{ 1}
		+   B_{ 2} )   \|^2_{L^2}    
		\ - \  	 2 \, \mu  \,  \textstyle \sum_{i = 1,2 }
		\Re \,      
		\<  G_{\Lambda  , i } ,  (	B_1 + B_2 )		 \>     \\
		&  \ =  \ 
		 E_K^{(0)}  \,  - \,  E^{(0)}_\Lambda  
		 \ + \  2 \mu  \,  \Re 
		\Big[ \mu \<  B_1 , | k | B_2\> 
		- 2 \<B_1, G_{\Lambda, 2}\> 
		\Big]  \ . 
	\end{align*} 
	We can now rewrite
	\begin{align}
		\notag 
		2   \mu   \, \Re 
		\Big[ \mu \<  B_1 , |  k | B_2\> 
		-2    \<B_1, G_{\Lambda, 2}\> 
		\Big] 
		&   \,  =  \, 
	 -2   \mu \,  \Re 
		\int_{  K \leq |k | \leq \Lambda		}
		\frac{  e^{ik (x_1 - x_2 )}( \mu | k | +2 k^2 ) \, d k }{ | k |( \mu | k | +k^2)^2} \\
		& \,  \equiv \,   V_{K,\Lambda} (x_1 - x_2 ) \ . 
	\end{align}
	One may verify that $\|  V_{K  , \Lambda}\|_{L^2} \leq  C K^{-1/2} $ and 
		$\|	 V_{K} - V_{K , \Lambda}	\|_{L^2} \leq   C \Lambda^{ - 1/2 } . $
	In particular, 
	thanks to Lemma \ref{prop:w:estimate}
	we have for some $C>0$ and all $ K, \Lambda>0$
\begin{align}
	\label{VK1}
 |  \< \Psi, V_{ K, \Lambda } (x_1 - x_2) \Psi\> | 
&  \leq C K^{ - 1/2 }   \< \Psi,  (1 + P^2)\Psi \>  \\
\label{VK2}
  |  \< \Psi,  ( V_{ K, \Lambda }  - V_K )  (x_1 - x_2) \Psi\> | 
&  \leq C \Lambda^{ - 1/2 }   \< \Psi,  (1 + P^2)\Psi \> \ . 
\end{align}
	On the other hand, for the
	linear terms in creation and annihilation operators we find
	\begin{equation}
		\phi ( G_{\Lambda, i}	) 
		-  \phi  (  (k^2 + \mu | k |) B_i		)  
		= 
		\phi ( G_{K , i}	 )  \ ,  \qquad  i \in \{ 1 , 2\} \ . 
	\end{equation}
	We now put everything together to see that 
	\begin{align}
		& H_{K,\Lambda}  \equiv U_{K,\Lambda} ( H_\Lambda +  E_\Lambda ) (U_{K,\Lambda})^*  \notag \\[3mm] 
		&\ \ = P^2 +  \mu T  + \sqrt \mu \sum_{i=1 ,2 }
		 \Big( \phi(G_{K,i}) + 2 p_i  \cdot a( k B_{K,\Lambda,i})  + 2 a^* ( k B_{K,\Lambda,i} )  \cdot p_i \Big)  \notag\\
		&  \quad \   +    
	\sum_{ i =1 ,2 }  \Big(   
	 2 a^*(	 k  B_{ K , \Lambda, i }	)
	 a(	 k  B_{ K, \Lambda, i }	)
	 + 
	 	 a(	 k  B_{ K, \Lambda, i }	)^2 
	 	 + 
	 	 	 a^*(	 k  B_{ k, \Lambda, i }	)^2
	  \Big)  + E_K + 
	  V_{ K, \Lambda }  
 	\label{HKL}
	\end{align}
 where we used $E_K^{(0)} - E_\Lambda^{(0)} = E_K - E_\Lambda$ and recall $P^2=p_1^2 +  p_2^2$.
 Following   \cite[Lemma 3.1]{Griesemer18} 
	we use the bounds \eqref{op1}--\eqref{op4}
	to     prove that for any $ \epsilon >0 $   there is 
  $ K_0 $   large enough
such that $\pm (H_{K,\Lambda} - P^2 - \mu T) \le \epsilon (P^2 + \mu T) + C_\epsilon$ for all $K\ge K_0$.
The new term $V_{K,\Lambda}$ satisfies the bound thanks to  \eqref{VK1}.
	We observe that  a detailed calculation in fact shows that 
	$K_0$ can be taken independent of $\mu>0$.
This now  implies that  for all $ \infty > \Lambda \ge K \geq K_0 $
the operator
$H_{K,\Lambda}$ defines a semi-bounded operator  with quadratic form domain $Q(  H_{K,\Lambda}) = Q (P^2  +  \mu T )$.
Following the proof of
\cite[Theorem 3.1]{Griesemer18} --  and equipped with the additional bound \eqref{VK2} -- we   prove that there exists a semi-bounded operator  $H_{K , \infty }$
with form domain $Q (P^2  +  \mu T ) $
such 
that  $e^{-i t H_{K,\Lambda}} \xrightarrow{\Lambda \to \infty}e^{-it  H_{K,\infty}}$ in the strong sense for every $t \in \mathbb R $. 
This concludes the proof of 
Proposition \ref{prop:ren:Nelson}.
Additionally, 
the argument proves   
the claim of  Lemma \ref{lemma:dressed:Nelson}  for 
$H^{\rm N} = (U_{K,\infty})^* H_{K,\infty} U_{K,\infty}$ provided $ K \geq K_0 $.

Let us now extend the statement of Lemma \ref{lemma:dressed:Nelson} to all $   K >  0 $.
First, we start from the representation
$H^{\rm N} = (U_{K,\infty} )^* H_{K,\infty} U_{K,\infty}$ for     $K \geq K_0  $. Observe that for all $0  <  M \le K \le L \le \infty$, we have  $U_{M,K} U_{K,L} = U_{M,L}$
which  follows  from the definition of $B$ (since the generators of $U$ on the left side commute and add up to the generator on the right side). Thus, 
\begin{align}
H^{\rm N} & = ( U_{M,K} U_{K,\infty} )^*  ( U_{M,K} H_{K,\infty} U_{M,K}^*) (U_{M,K} U_{K,\infty}) \notag \\[1mm]
& = (U_{M,\infty} )^*  ( U_{M,K} H_{K,\infty} U_{M,K}^* ) U_{M,\infty} \notag \\[1mm]
& = (U_{M,\infty} )^*  H_{M,\infty}  U_{M,\infty}. 
\end{align}
The last step can be computed explicitly using   \eqref{rel1}--\eqref{rel3} --   all the creation and  annihilation operators in $H_{K,\infty}$, except for $\phi(G_K)$,  commute with $U_{M,K}$ due to the disjoint support in $k$. This gives the claimed representation $H^{\rm N} = (U_{M})^* H_{M} U_{M}$ for any $M>  0$.  To complete the proof, it remains to show   $Q(H_M) = Q(P^2+T)$. 
Since $Q(H_K) = Q(P^2+T)$  for     $K_0  \leq K < \infty   $, 
 it is sufficient to show that $Q( U_{M,K} (P^2+T) U^*_{M,K} )  = Q(P^2+T) $.
To this end,   note that  the relations  \eqref{rel1}--\eqref{rel2}
 and the form bounds \eqref{op1}--\eqref{op4}  imply 
\begin{align}
	\label{UpU}
U_{M,K} (P^2+T) U^*_{M,K} \le C (P^2 +T +\id )
\end{align}
for some $C>0$ (possibly $K$ and $\mu$ dependent), which implies $Q( U_{M,K} (P^2+T) U^*_{M,K} )  \supset Q(P^2+T)$.
Finally,  we may reverse     the sign of $B$,  which turns $U$ into $U^*$.  Hence, we also obtain 
$Q( U_{M,K} (P^2+T) U^*_{M,K} )  \subset Q(P^2+T)$. This finishes the proof of Lemma \ref{lemma:dressed:Nelson}.

\section*{Acknowledgements}
D.M. thanks Nata\v sa Pavlovi\'c for the invitation to the University of Texas at Austin and for the hospitality offered by the department, where part of this work was performed.
E.C. gratefully acknowledges support from   NSF under grants No. DMS-2009549 and DMS-2052789 through 
Nata\v sa Pavlovi\'c. 
 \end{spacing}

\end{document}